\def\BibTeX{{\rm B\kern-.05em{\sc i\kern-.025em b}\kern-.08em
    T\kern-.1667em\lower.7ex\hbox{E}\kern-.125emX}}
\newtheorem{theorem}{Theorem}
\newtheorem{lemma}{Lemma}
\newtheorem{proposition}{Proposition}
\newtheorem{corollary}{Corollary}
\newtheorem{remark}{Remark}
\newtheorem{problem}{Problem}
\newtheorem{example}{Example}
\newtheorem{definition}{Definition}
\newtheorem{conj}{Conjecture}
\newcommand{\F}{\mathbb{F}}
\newcommand{\Fq}{\mathbb{F}_{q}}
\newcommand{\Fqm}{\mathbb{F}_{q^m}}
\newcommand{\ba}{\mathbf{a}}
\newcommand{\bb}{\mathbf{b}}
\newcommand{\bs}{\mathbf{s}}
\newcommand{\be}{\mathbf{e}}
\newcommand{\bee}{\bm\varepsilon}
\newcommand{\bg}{\mathbf{g}}
\newcommand{\bx}{\mathbf{x}}
\newcommand{\by}{\mathbf{y}}
\newcommand{\bz}{\mathbf{z}}
\newcommand{\bu}{\mathbf{u}}
\newcommand{\bv}{\mathbf{v}}
\newcommand{\bw}{\mathbf{w}}
\newcommand{\oo}{\bullet}
\newcommand{\og}{\textcolor{gray}{\bullet}}
\newcommand{\bzero}{\mathbf{0}}
\newcommand{\calA}{\mathcal{A}}
\newcommand{\calB}{\mathcal{B}}
\newcommand{\calC}{\mathcal{C}}
\newcommand{\calE}{\mathcal{E}}
\newcommand{\calF}{\mathcal{F}}
\newcommand{\calH}{\mathcal{H}}
\newcommand{\calS}{\mathcal{S}}
\newcommand{\calU}{\mathcal{U}}
\newcommand{\calV}{\mathcal{V}}
\newcommand{\cn}{\color{black}}
\newcommand{\gbinom}[2]{\genfrac{[}{]}{0pt}{}{#1}{#2}}
\DeclareMathOperator{\Rank}{Rank}
\DeclareMathOperator{\Sol}{Sol}
\DeclareMathOperator{\RowSpan}{RowSpan}
\newcommand{\Span}[2]{\langle #2 \rangle_{#1}}
\newcommand{\Wildcard}{*}
\newcommand{\Slice}[3]{
    \IfEqCase{#2}{
        {1}{#1_{#3, \Wildcard, \Wildcard}}
        {2}{#1_{\Wildcard, #3, \Wildcard}}
        {3}{#1_{\Wildcard, \Wildcard, #3}}
    }
}
\newif\ifdraft
\title{Bounded-Degree \\
	Low-Rank Parity-Check Codes 
	}
\author{Ermes Franch, Chunlei Li \\ \texttt{Email: chunlei.li@uib.no}}
\begin{document}
\maketitle

\begin{abstract}
Low-rank parity-check (LRPC) codes are the rank-metric analogue of low-density parity-check codes and they found important applications in code-based cryptography.
In this paper we investigate a sub-family of LRPC codes, which 
have a parity-check matrix defined over a subspace $\calV_{\alpha,d}=\Span{\Fq}{1,\alpha, \ldots, \alpha^{d-1}}\subsetneq \Fqm$, where $\Fqm$ is the finite field of $q^m$ elements and $d$ is a positive integer significantly smaller than $m
$; and they are termed bounded-degree LRPC (BD-LRPC) codes. 
These codes are the same as the standard LRPC codes of density $2$ when the degree $d=2$, while for degree $d>2$ they constitute a proper subset of LRPC codes of density $d$.  
Exploiting the structure of $\calV_{\alpha,d}$, the BD-LRPC codes of degree $d$
can uniquely correct errors of rank weight $r$ when $n-k \geq r + u$ for certain $u \geq 1$, in contrast to the condition $n-k\geq dr$ required for the standard LRPC codes. This underscores 
the  superior decoding capability of the BD-LRPC codes. 
Moreover,
as the code length $n\rightarrow \infty$, when $n/m\rightarrow 0$, the BD-LRPC codes with a code rate of $R=k/n$ can be uniquely decodable with radius $\rho=r/n$ approaching the Singleton bound $1-R$ by letting $\epsilon=u/n\rightarrow 0$; and when $n/m$ is a constant, the BD-LRPC codes can have unique decoding radius $\rho = 1-R-\epsilon $ for a small $\epsilon$, allowing for $\rho>(1-R)/2$ with properly chosen parameters.
 This superior decoding capability is theoretically proved for the case $d=2$ and confirmed by experimental results for $d>2$.
\end{abstract}

\section{Introduction}
Rank-metric codes, a type of error-correction codes equipped with rank metric, have undergone significant advancements over the past decades and have a wide range of applications, including network coding \cite{SilvaKschischangKoetter}, criss-cross error correction \cite{roth1991maximum}, and cryptography \cite{GPT}. Many important results for Hamming metric codes 
have been analogously extended to their counterparts in rank metric \cite{RankBook1, RankBook2}, such as the Singleton bound, the MacWilliams identities and efficient unique decoding algorithms from Reed-Solomon codes to Gabidulin codes. In 2013 Gaborit, Murat, Ruatta, and Z\'{e}mor proposed low-rank parity-check codes \cite{gaborit2013}, which can be deemed as an equivalent of the
low-density parity-check (LDPC) codes \cite{gallager1962} for Hamming metric. 
LRPC codes allow for unique decoding through a probabilistic, polynomial-time algorithm and exhibit sounding randomness that is highly valued in cryptographic applications.
Since then they have been adopted in several post-quantum cryptographic primitives, including RankSign \cite{RankSign2014}, identity-based encryption \cite{gaborit2017}, ROLLO\cite{ROLLO}, the signature scheme Durandal \cite{Aragon2019}, etc. 

Decoding capability is a crucial aspect of rank metric codes for their applications. 
The majority of efficient unique decoding algorithms for 
Reed-Solomon codes have been extended to the linearized setting for Gabidulin codes,  allowing for uniquely decoding errors 
of ranks $r$ within half of the minimum rank distance
\cite{Gabidulin1985, RankBook1, RankBook2}. However, when the rank of errors is beyond $\frac{n-k}{2}$,
it is
challenging to efficiently decode 
it for rank metric codes \cite{WachterZeh,Xing-Yuan2018}. \cn
Recently the decoding capability of LRPC codes was further investigated in \cite{LRPC-2019-TIT}, where for density $d=2$ and $m\geq 3rd-2$, LRPC codes are shown to be uniquely decodable with decoding radius $\rho=\frac{2}{3}(1-R)$. In \cite{FranchLi} we proposed improved decodings for LRPC codes when $n-k<rd$.
Motivated by the improved decoding capability of LRPC codes, in this paper we proposed a subfamily of LRPC codes, which allow for a parity-check matrix $H$ whose entries are randomly drawn from a subspace $\calV_{\alpha, d}=\Span{\Fq}{1,\alpha, \dots, \alpha^{d-1}}\subsetneq \Fqm$ for a positive integer $d$ significantly smaller than $m$. 
We term these codes bounded-degree LRPC (BD-LRPC) codes according to the structure of $\calV_{\alpha, d}$.
This particular structure allows for efficient decoding of errors even when $\calS\subsetneq \calV_{\alpha, d}.\calE$, where
$\calE, \calS$ are the $\Fq$-subspaces generated by the components of the error $\be \in \Fqm^n$ and the syndrome $\bs = \be H^\intercal$, respectively,
and 
$\calV_{\alpha, d}.\calE$ is the product of two subspaces.
We observe that, when multiplied with a subspace $\calV_{\alpha, t}$, the subspaces
$\calS$ and $\calV_{\alpha, d}.\calE$ will be expanded in different paces, leading to 
$\calV_{\alpha, t}.\calS = \calV_{\alpha, t}.\calV_{\alpha, d}.\calE = \calV_{\alpha, d+t-1}.\calE$ for certain positive integer $t$. 
This equality is shown to hold with a high probability with technical calculations in Sections \ref{sec:Successful} and \ref{sec:ExpRowSub}. 
Similarly to the decoding for standard LRPC codes, from the equality $\calV_{\alpha, t}.\calS = \calV_{\alpha, d+t-1}.\calE$ we can recover the error support $\calE$ and then uniquely correct the error $\be$ with the syndrome equations given by $\bs=\be H^\intercal$ when 
$d(n-k) \geq nr$. 
With the aforementioned expansion approach, the asymptotic decoding capability of the proposed BD-LRPC codes is good: 
for a code rate $R$, they can be uniquely decoded with a high probability of decoding radius $\rho= 1-R-\epsilon$, where $\epsilon=\frac{u}{n}$;
as $n$ approaches infinity, when the ratio $b=\frac{n}{m}\rightarrow 0$, $\epsilon$ can be arbitrarily small when $u$ is chosen as a fixed constant;
and when $b=\frac{n}{m}<\frac{2}{11\rho}$ is a constant, it allows the parameter $\epsilon<(1-R)/2$,
which implies that the unique decoding radius $\rho> (1-R)/2$. 
The aforementioned decoding capability of BD-LRPC codes is theoretically shown for $d=2$ or $d>2$ and $r=1$. For the general case that $d>2$ and $r>1$, 
the capability also appear to hold based on our observation from experimental results.
Unfortunately, our techniques are not sufficient to prove the observation, and we propose Conjecture \ref{con_d>2} in Section \ref{sec:bdlrpc}, cordially inviting interested readers to attack this problem. 
It is worth noting that the BD-LRPC codes introduced in this paper serve as another solution to the open problem raised by Xing and Yuan in \cite{Xing-Yuan2018}. 

The remainder of this paper is organized as follows. Section \ref{sec:prel} recalls the basics and auxiliary results on rank-metric codes and LRPC codes. 
Section \ref{sec:bdlrpc} introduces the BD-LRPC codes and 
shows that the decoding capability of these codes approaches the Singleton bound. The main result is given in Theorem \ref{Th0}, for which important theoretical arguments and technical calculations are given in Proposition \ref{prop:0} in Section \ref{sec:Successful} and in Theorems \ref{th:|C_k^ur|} and \ref{th:|Gammak^t|} in Section \ref{sec:ExpRowSub}, respectively. Section \ref{sec:conclusion} concludes our works in this paper.


\section{Preliminaries} \label{sec:prel}
Let $q$ be a prime power. It is well known that the finite field $\Fqm$ with $q^m$ elements
can be seen as an $m$-dimensional vector space over the base field $\Fq$. 
We will denote $\Fq$-linear subspaces of $\Fqm$ by calligraphic letters.
A vector of $\Fqm^n$ will be denoted by a lower-case letter in bold, and 
its components will be denoted by the same letter in normal form, 
e.g., a vector $\bv =(v_1,\dots, v_n)\in \Fqm^n$.
For two positive integers $l\leq n$, the notation $[l \ldots n]$ denotes the set $\{l, \ldots, n\} \subseteq \mathbb{Z}$,
and we will use the shorthand notation $[n]$ for $l=1$.

Given a set $S\subseteq\Fqm$, the \textit{rank support} of 
$S$, denoted by $\Span{\Fq}{S}$, is defined as the $\Fq$-linear span of all the elements in $S$.
This notion can be naturally applied to vectors and matrices over $\Fqm$, namely, the support of a vector $\bv\in\Fqm^n$ is given by 
$\Span{\Fq}{\bv} = \Span{\Fq}{v_1,\dots, v_n}$ and the support of a matrix $A\in\Fqm^{r\times t}$ is given by 
$\Span{\Fq}{A} = \Span{\Fq}{a_{i,j} \mid i\in[r], j\in [t]}$.
For an element $\alpha \in \Fqm$ we denote $\alpha S = \{\alpha s \mid s \in S\}$.
Given $\calV = \Span{\Fq}{v_1, \ldots, v_n} \subseteq \Fqm$  an $\Fq$-linear subspace of $\Fqm$ of dimension $n$, it is clear that $\alpha \calV = \Span{\Fq}{\alpha v_1, \ldots, \alpha v_n}$  for any $\alpha \neq 0$ is still an $\Fq$-linear subspace of dimension $n$. 
We will denote by $\Fq^{n\times k}$ the set of all $n\times k$ matrices over $\Fq$. 

Let $\bv = (v_1,\ldots, v_d)$ and $\bu = (u_1,\ldots, u_r)$ be two vectors over $\Fqm$ consisting of $\Fq$-linearly independent components. 
We define the \textit{product of the two subspaces} $\calV = \Span{\Fq}{\bv}$ and $\calU=\Span{\Fq}{\bu}$ as 
$$
\calV.\calU = \Span{\Fq}{\bv \otimes \bu}=\Span{\Fq}{(v_1 u_1, \ldots, v_1u_r, \ldots, v_du_1,\ldots, v_du_r)}.
$$ 
It is readily seen that the product  $\calV.\calU \subseteq\Fqm$ has $\Fq$-dimension at most $dr$.

Denote $A_q(n,k) = \prod_{i = 0}^{k-1} (q^n - q^i)$. It is well known  \cite{goldman1970} that  the number of distinct subspaces of dimension $k$ in $\F_q^n$ is given by
the Gaussian binomial coefficient $\gbinom{n}{k}_q=A_q(n,k)/A_q(k,k)$. 
In addition, it can be shown that the number of matrices in $\Fq^{n\times m}$ with rank $k\leq \min\{n, m\}$
is given by $A_q(n,k)A_q(m,k)\big/{A_q(k,k)}$. An alternative way to express $A_q(n,k)$ is to use the quantity $H_q(n) = \prod_{i=1}^n (1-q^{-i})$ \cite{Cooper00}.
It can be proved that the limit of $H_q(n)$ for $n \to \infty$ converges to $H_q = \lim_{n \to \infty} H_q(n) \leq 1$ and $\lim_{q \to \infty} H_q = 1.$ 
Notice that $H_q(n) \leq H_q(n-k)$ and $0.28 \approx H_2\leq H_q < H_q(n) < 1$. 
Using some simple algebraic manipulation one can show that 
\begin{equation}\label{eq:Aq->Hq}
	A_q(n,k) = q^{nk} H_q(n)/H_q(n-k).
\end{equation} This expression reflects the asymptotic value of $A_q(n,k)$.
When we want to estimate the proportion of the matrices of rank $k$ among all the matrices in $\Fq^{k \times n},$ we have 
   \begin{equation}\label{Eq_prob_matrix}
   \begin{split}
        & A_q(n,k)/q^{nk} = 
    H_q(n)/H_q(n-k) \\ 
    & =  \prod_{i= n-k + 1 }^n(1-q^{-i}) 
      = \prod_{i= 0 }^{k-1}(1-q^{(i-n)})
    \\&\approx
    1 - q^{-n} \frac{q^k - 1}{q-1} \geq 1 - \frac{q^{-(n-k)}}{q-1}.
   \end{split}
   \end{equation}
The error probability of decoding the proposed BD-LRPC codes will be derived from 
 complicated calculations with the help of Ferrers diagrams. For reader's convenience, in the Appendix we recall relevant results about Ferrers diagrams, which will be heavily used in Section \ref{sec:ExpRowSub}.

\subsection{Rank-metric codes and their list decodability}

Rank-metric codes have both matrix and vectorial representations, for which there exists a one-to-one correspondence with respect to a basis of $\Fqm$ over $\Fq$.
Below we introduce some basics of rank-metric codes in vectorial representation.

A rank-metric code $\calC$ is simply a subset of $\Fqm^n$ equipped with rank metric, namely, its minimum rank distance is given by
$$
d(\calC):=\min\{d_R(\bx, \by)\mid \bx, \by \in \calC, \bx \neq \by\}
$$ with $d_R(\bx, \by) = \dim(\Span{\Fq}{\bx-\by}).$
Similarly to block codes with Hamming metric, there exist the
Singelton bound and Gilbert-Varshamov bound on rank-metric codes.
Given a rank-metric code $\calC\subseteq \Fqm^n$ with 
minimum rank distance $d$, its size satisfies
$$
\log_q|\calC| \leq \min \{n(m-d+1), m(n-d+1)\}.
$$
Without loss of generality, we may assume that $n\leq m$ since 
the minimum rank distance and size of a rank-metric code are preserved under transpose. Let $R=\frac{\log_{q}|\calC|}{mn}$ be the code rate and $\delta=\frac{d(\calC)-1}{n}$ be the relative distance. The asymptotic version of the Singleton bound, as $n\rightarrow \infty$, is given by
$R\leq 1- \delta$. 

Given a rank-metric code $\calC$, a received word $\by$ containing an error $\be$ of rank weight $r\leq \lfloor\frac{d(\calC)-1}{2}\rfloor$ can be uniquely decoded, indicating that there exists a unique codeword $\mathbf{c} \in \calC$ satisfying 
$d_R(\mathbf{c}, \by)=r$.
On the other hand, when the error $\be$ in $\by$ has rank weight $r> \lfloor\frac{d(\calC)-1}{2}\rfloor$, there might be a list of codewords having rank distance $r$ from $\by$. In this case one needs to turn to the list decoding.
However, list decoding for rank-metric codes appears more challenging than its counterpart in Hamming metric \cite{WachterZeh,Yang2015,Xing-Yuan2018}. Below we recall some results of list decoding for rank-metric codes.

A rank-metric code $\calC\subseteq \Fqm^n$ is said to be 
$(\rho, L)$-list decodable if for every $\bx\in\Fqm^n$, we have 
$$
\left|
\calC \cap B(\bx, \rho n)
\right| \leq L,
$$ where $B(\bx, r):= \{\by \in \Fqm^n \mid d_R(\bx, \by)\leq r\}$ is the rank metric ball of center $\bx$ with radius $r$.
It is clear that for decoding radius $ r=\rho n  \leq \lfloor\frac{d(\calC)-1}{2}\rfloor,$  any rank-metric code is $(\rho, 1)$ list-decodable, which corresponds to the case of unique decoding radius.
Ding in \cite{Yang2015} studied the theoretic limit of the list decodability of rank-metric codes.

\begin{proposition}\label{th:singleton-bound}\cite[Th. 1]{Yang2015}
Let $m$, $n$, $L$ be positive integers such that $b=n/m$ is a constant and $L=O(poly(mn))$. Then, for any $R\in (0,1)$, $\rho \in (0,1)$, a $(\rho, L)$-list decodable rank-metric code $\calC\subseteq \Fqm^n$ of rate $R$ must satisfy
$$
R\leq (1-\rho)(1-b\rho).
$$
\end{proposition}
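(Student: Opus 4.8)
The plan is to prove the bound by a volume/averaging argument, exploiting that a $(\rho,L)$-list decodable code cannot contain too many codewords because every rank-metric ball of radius $\rho n$ absorbs at most $L$ of them, while such balls are large. Throughout I would use the normalization $n\le m$ (legitimate since the rank distance and the cardinality of a code are transpose-invariant), so that a codeword, written as an $m\times n$ matrix over $\Fq$ with respect to a fixed $\Fq$-basis of $\Fqm$, has rank at most $n$, and $B(\bzero,\rho n)$ is exactly the set of $m\times n$ matrices over $\Fq$ of rank at most $t:=\lfloor\rho n\rfloor$.

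First I would set up the double count $\sum_{\bx\in\Fqm^n}\bigl|\calC\cap B(\bx,\rho n)\bigr|$. On one side, $(\rho,L)$-list decodability bounds each term by $L$, so the sum is at most $Lq^{mn}$. On the other side, regrouping by codeword and using that the cardinality of a rank-metric ball does not depend on its center, the same sum equals $|\calC|\cdot|B(\bzero,\rho n)|$. Hence $|\calC|\le Lq^{mn}/|B(\bzero,\rho n)|$, and everything reduces to a good lower bound on the ball volume.

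For the ball volume I would simply lower-bound it by its largest term: $|B(\bzero,\rho n)|$ is at least the number of $m\times n$ matrices over $\Fq$ of rank exactly $t$, which by the count recalled in Section~\ref{sec:prel} equals $A_q(m,t)A_q(n,t)/A_q(t,t)$. Rewriting each factor via \eqref{eq:Aq->Hq} and using $H_q(0)=1$, this equals $q^{\,mt+nt-t^2}\cdot\dfrac{H_q(m)H_q(n)}{H_q(m-t)H_q(n-t)H_q(t)}$, and since $H_q\le H_q(N)<1$ for every $N$ the last factor is at least $H_q^2$. Thus $|B(\bzero,\rho n)|\ge H_q^2\,q^{\,mt+nt-t^2}$, so $\log_q|\calC|\le mn-mt-nt+t^2+\log_q(L/H_q^2)$ and therefore $R=\frac{\log_q|\calC|}{mn}\le 1-\frac{t}{n}-\frac{t}{m}+\frac{t^2}{mn}+\frac{\log_q(L/H_q^2)}{mn}$.

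Finally I would let $m,n\to\infty$ with $b=n/m$ fixed and $L=O(\mathrm{poly}(mn))$: from $t=\lfloor\rho n\rfloor$ we get $t/n\to\rho$, $t/m=(t/n)\,b\to b\rho$, $t^2/(mn)=(t/n)^2\,b\to b\rho^2$, while $\log_q(L/H_q^2)/(mn)=O\!\bigl(\log(mn)/(mn)\bigr)\to 0$, so the right-hand side tends to $1-\rho-b\rho+b\rho^2=(1-\rho)(1-b\rho)$, which is the claimed bound. I do not expect a genuine obstacle; the only steps needing care are checking that retaining only the rank-$t$ term already suffices and that the $H_q$-quotients do not degrade the estimate (this is exactly where the bounds $H_2\le H_q\le H_q(N)<1$ from Section~\ref{sec:prel} come in), and the asymptotic bookkeeping that the floor and the polynomial list size contribute only $o(1)$ to the rate. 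It is also worth recording why the naive route fails: simply puncturing codewords to any $n-t$ of the coordinates and bounding each fibre by $L$ only yields the weaker Hamming-type bound $R\le 1-\rho$, so the improvement to $(1-\rho)(1-b\rho)$ genuinely relies on rank-metric balls being much larger than Hamming balls of the same combinatorial radius.
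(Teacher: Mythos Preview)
The paper does not prove this proposition at all: it is quoted verbatim as \cite[Th.~1]{Yang2015} and used as background, with no argument supplied. So there is nothing in the paper to compare your proof against.

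That said, your averaging argument is correct and is indeed the standard route to this bound. The double count $\sum_{\bx}|\calC\cap B(\bx,\rho n)|=|\calC|\cdot|B(\bzero,\rho n)|\le Lq^{mn}$, together with the lower bound on the ball volume via the single rank-$t$ term and the $H_q$ estimates from Section~\ref{sec:prel}, gives exactly $R\le 1-\tfrac{t}{n}-\tfrac{t}{m}+\tfrac{t^2}{mn}+o(1)$, and the asymptotics with $b=n/m$ fixed and $L$ polynomial finish the job. This is essentially the argument in Ding's original paper, so had the authors included a proof it would have matched yours.

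One small remark: your closing comment about puncturing to $n-t$ coordinates giving only $R\le 1-\rho$ is a nice observation but not needed for the proof itself; you can safely drop it if you want a leaner write-up.
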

It is also shown \cite{Yang2015} that for any $\epsilon\in(0,1)$,  a random $\Fq$-linear rank-metric code is  $(\rho, O(\exp(\frac{1}{\epsilon})))$-list decodable with rate $R=(1-\rho)(1-b\rho)-\epsilon$, which 
is considered as the Gilbert–Varshamov bound for the list decoding of rank-metric codes. However, there have been only a few efficient list decodable rank-metric codes in the literature \cite{Silva_2010, Guruswami-Wang-Xing2014, Xing-Yuan2018}. 

The rank-metric code in \cite{Silva_2010}
was proposed to achieve the capacity of an additive matrix channel and was given in matrix form 
$$
C=\begin{pmatrix}
    0_{v \times v} & 0_{v \times(n-v)} \\
0_{(m-v) \times v} & U
\end{pmatrix},
$$ where $U\in \mathbb{F}_q^{(m-v) \times(n-v)}$
is a data matrix \footnote{Here we transpose the code matrix in \cite{Silva_2010} for the consistency of notation in this paper.}. The authors in \cite{Silva_2010}
showed that taking $r=v$, this simple code has code rate $R=\frac{(n-r)(m-r)}{mn}$ in $q$-ary units and that 
for any $\epsilon \in (0,1)$, $r=(\rho+\epsilon)n$ and a constant $b=n/m$, as $m$ increases, this simple code can be uniquely decoded with overall complexity $O(nmr)$ operations in $\Fq$, the failure rate 
decreasing exponentially with $n$,  and the code rate satisfying
$
R\geq (1-\rho)(1-b\rho) - (1+b)\epsilon.
$ A modified version of this code was also discussed in the context of code-based cryptography \cite{gaborit2017}.
In \cite{Guruswami-Wang-Xing2014} 
an 
$\Fq$-linear subcode
of Gabidulin code was shown to be 
list-decodable with code rate $(1-\epsilon)R$, decoding radius $\rho=\frac{s}{s+1}(1-R)$, list size $O(q^{s^2/\epsilon^2})$ for any $\epsilon \in (0,1)$ and integer $s\geq 1$. Note that as $s$ increases, the relative decoding radius $\rho$ approaches the theoretic limit $1-R$. Nevertheless, 
the column-to-row ratio $b=\frac{n}{m}$ for the code is required to approach zero. Xing and Yuan recently in \cite{Xing-Yuan2018} reviewed the progress of list decoding of rank-metric codes and proposed an open problem:
{\it for a given constant ratio $b=\frac{n}{m}$, explicitly construct rank-metric codes of rate $R$ 
in $\Fq^{m\times n}$ with list decoding radius $\rho>\frac{1}{2}(1-R)$ and efficiently decode them. 
}
They provided an explicit construction of rank-metric codes, for which 
they provided both deterministic and probabilistic algorithms for decoding the codes. 
They showed that the constructed code 
for a constant $b$ up to $1/2$ and any $\epsilon\in (0,1)$ is $(\rho, \exp(O(1/\epsilon^2)))$-list decodable with $\rho$ beyond $\frac{1}{2}(1-R)$ and 
complexity $\text{poly}(n, \exp{(1/\epsilon)})$.

\smallskip

In subsequent sections, we will study a subfamily of $\Fqm$-linear low-rank-metric codes \cite{gaborit2013} and show that the code can be, with a high probability, 
uniquely decoded with radius $\rho = 1 - R - \frac{u}{n}$ for a constant $b=\frac{n}{m}$ and a positive integer $u$. The proposed codes for properly chosen parameters serve as a probabilistic solution to the open problem in \cite{Xing-Yuan2018}.

\subsection{Low-Rank Parity-Check (LRPC) Codes}
LRPC codes were introduced in 2013 by Gaborit, Murat, Ruatta and  Zémor \cite{gaborit2013}. Since then, they have been used in many cryptographic schemes \cite{ROLLO, Aragon2019} owing to their weak algebraic structure and efficient decoding.
\begin{definition}
	An $\Fqm$-linear vector code $\calC \subseteq \Fqm^n$ of $\Fqm$-dimension $k$ is said to be an LRPC code of rank density $d$ if it admits a parity-check matrix $H \in \Fqm^{(n-k) \times n}$ whose support has dimension $\dim(\Span{\Fq}{H}) = d$. 
\end{definition}
Given $\calH \subseteq \Fqm$ an $\Fq$-linear subspace of dimension $d$ and $H \in \calH^{(n-k) \times n}$ of rank $n-k$,
in most cases we have $\Span{\Fq}{H} = \calH$. The code $\calC$ having $H$ as a parity-check matrix is an LRPC code of density $d$.
We recall below the standard decoding of LRPC codes and its improved variant \cite{Aragon2019}.

Assume $\be \in \Fqm^n$ is an error of rank weight $r$. 
Let $\Span{\Fq}{H}=\Span{\Fq}{h_1,\ldots,h_d}$, and $\calE$ 
be the support of error given by $\Span{\Fq}{\be} = \Span{\Fq}{\varepsilon_1,\ldots,\varepsilon_r}$. 
Then the syndrome $\bs = \be H^\intercal$ has support $\calS = \Span{\Fq}{\bs}$ contained in the product space $\Span{\Fq}{H}.\calE.$
With properly chosen parameters, one can assume that $\calS=\Span{\Fq}{H}.\calE$ holds with a high probability, and
one can recover $\calE$ from $\calS$ and $\Span{\Fq}{H}$ under the assumption $$\calE =\bigcap_{i=1}^d h_i^{-1} \calS = \bigcap_{i=1}^d h_i^{-1} (\Span{\Fq}{H}.\calE).$$
From the knowledge of the error support, it is relatively easy to expand the linear system over $\Fqm$ (given by the $n-k$ coordinates of the syndrome $\bs$) over $\Fq$.
Usually this gives $(n-k)rd$ linearly independent equations in $nr$ variables in $\Fq$ which can be uniquely solved when $(n-k)d \ge n$.

However, this approach does not work for the case where $\calS \subsetneq \Span{\Fq}{H}.\calE$. This case was further discussed in \cite{Aragon2019}.
\begin{remark}\label{rem_1} With the above notation, when $\calS \subsetneq \Span{\Fq}{H}.\calE$,  the authors of \cite{Aragon2019} noticed that $\calE \cap h_i^{-1} \calS$ has dimension at least $r-c$ where $c$ is the co-dimension of $\calS$ in $\calH.\calE$.
	This means that the space $h_j h_i^{-1}\calS$ might contain elements of $\calH.\calE$ which were not present in $\calS.$
	Based on this observation they proposed to expand the syndrome space $\calS$ using the function
	$$
	f_{\mathrm{expand}}: \calS \rightarrow (\calS + h_j h_i^{-1}\calS )\cap(\calS + h_l h_k^{-1}\calS).
	$$
	This expansion can be iterated using different $i,j,k,l$ until either $\calS = \calH.\calE$ or all the possible expansions are tried.
	
	In \cite[Theorem 5.2]{Aragon2019} they analyzed the case $d=2.$
	To ensure a recovery of the space $\calE.\calH$ it was required that $\calS + h_1 h_2^{-1} \calS$ has dimension $3r$.
	This leads to the bound $3r \leq 2(n-k)$, implying $R \leq 1- \frac{3}{2} \rho.$
	In the same paper \cite[Section 5.4]{Aragon2019} it was also discussed how, for very large values of $m$, one may improve the expanding functions.
	However, those functions cannot be defined for $d=2.$
\end{remark}
\section{Bounded-degree LRPC codes}\label{sec:bdlrpc}
In this section we will consider a particular family of subspaces in $\Fqm$. The properties of these subspaces motivate us to study the BD-LRPC codes.

\begin{definition}\label{Def_Vd}
    Let $\alpha \in \Fqm$ and $d$ be a positive integer. The \textbf{bounded-degree subspace} generated by $\alpha$ of degree $d$ is described as 
    $
        \calV_{\alpha,d} = \Span{\Fq}{1, \alpha, \dots, \alpha^{d-1}}. 
    $
\end{definition}
From the definition, if $\alpha \in \Fqm$ does not belong to any proper subfield of $\Fqm$, we immediately have $\calV_{\alpha,m} = \Fqm$.
The product of two bounded-degree subspaces $\calV_{\alpha,i}$ and $\calV_{\alpha,j}$ has the following properties.
\begin{lemma}\label{lm:V+aV}
    For $\alpha \in \Fqm$ and $i,\,j\in [m]$, if $\alpha$ does not belong to any proper subfield of $\Fqm,$ we have $\dim(\calV_{\alpha,i}) = i$ and 
    \begin{equation}\label{eq:i.j=i+j-1}
         \calV_{\alpha,i}.\calV_{\alpha,j} = \calV_{\alpha,i+j-1},
            \end{equation}
    in particular, for $j=2$ we have
    \begin{equation} \label{eq:V+aV1}
        \calV_{\alpha,i+1} =  \calV_{\alpha,i}.\calV_{\alpha,2} = \calV_{\alpha,i} + \alpha \calV_{\alpha,i}.
    \end{equation}
\end{lemma}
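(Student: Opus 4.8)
The plan is to prove the two claimed facts separately: first that $\dim(\calV_{\alpha,i}) = i$, and then the product identity $\calV_{\alpha,i}.\calV_{\alpha,j} = \calV_{\alpha,i+j-1}$, from which the $j=2$ case follows by specialization together with the obvious identity $\calV_{\alpha,2} = \Span{\Fq}{1,\alpha}$ and distributivity of the subspace product over sums of spanning sets.

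\textbf{Dimension.} For the first claim, I would argue by contraposition on the minimal polynomial of $\alpha$ over $\Fq$. If $\dim(\calV_{\alpha,i}) < i$ for some $i \in [m]$, then $1, \alpha, \dots, \alpha^{i-1}$ are $\Fq$-linearly dependent, so $\alpha$ is a root of a nonzero polynomial over $\Fq$ of degree at most $i-1 \le m-1$; hence the minimal polynomial $\mu_\alpha(x)$ of $\alpha$ over $\Fq$ has degree $e := \deg \mu_\alpha \le m-1$. Then $\Fq(\alpha) = \Fq[x]/(\mu_\alpha)$ is a subfield of $\Fqm$ of size $q^e$ with $e \mid m$ and $e < m$, i.e. a proper subfield containing $\alpha$, contradicting the hypothesis. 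Therefore $\dim(\calV_{\alpha,i}) = i$ for all $i \in [m]$.

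\textbf{Product identity.} For \eqref{eq:i.j=i+j-1}, note first that by definition of the subspace product,
\[
\calV_{\alpha,i}.\calV_{\alpha,j} = \Span{\Fq}{\alpha^{a+b} \mid 0 \le a \le i-1,\ 0 \le b \le j-1} = \Span{\Fq}{1, \alpha, \dots, \alpha^{i+j-2}} = \calV_{\alpha,i+j-1},
\]
since the set of exponents $\{a+b : 0 \le a \le i-1,\ 0 \le b \le j-1\}$ is exactly $\{0,1,\dots,i+j-2\}$. The only subtlety is that $\calV_{\alpha,i}.\calV_{\alpha,j}$ was defined in the preliminaries via $\Fq$-linearly independent spanning vectors $\bv, \bu$; since $1, \alpha, \dots, \alpha^{i-1}$ and $1, \alpha, \dots, \alpha^{j-1}$ are indeed independent by the dimension claim (using $i, j \le m$), this is consistent, and in any case the resulting span does not depend on the choice of spanning set. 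The case $j = 2$ then reads $\calV_{\alpha,i+1} = \calV_{\alpha,i}.\calV_{\alpha,2}$, and expanding $\calV_{\alpha,2} = \Span{\Fq}{1,\alpha}$ gives $\calV_{\alpha,i}.\calV_{\alpha,2} = \calV_{\alpha,i}.\Span{\Fq}{1} + \calV_{\alpha,i}.\Span{\Fq}{\alpha} = \calV_{\alpha,i} + \alpha\calV_{\alpha,i}$.

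There is essentially no hard step here; the identity is a formal consequence of the geometric-progression structure of the spanning sets. The one point that deserves a sentence of care is ensuring the index bounds: the statement restricts $i, j \in [m]$, but $i + j - 1$ may exceed $m$, in which case $\calV_{\alpha,i+j-1}$ is to be read as $\Span{\Fq}{1,\alpha,\dots,\alpha^{i+j-2}}$, which will simply stabilize at $\Fqm$ once $i+j-1 \ge m$ (consistent with $\calV_{\alpha,m} = \Fqm$ noted after Definition \ref{Def_Vd}); the exponent-set computation above remains valid regardless. I would phrase the lemma's conclusion with this understanding, or restrict attention to $i + j - 1 \le m$ where the dimension is exactly $i+j-1$.
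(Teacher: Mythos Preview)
Your proof is correct and follows essentially the same approach as the paper: both arguments compute the span of the product via the exponent set $\{a+b : 0\le a\le i-1,\ 0\le b\le j-1\}=\{0,\dots,i+j-2\}$ and then specialize to $j=2$. You are somewhat more careful than the paper in two places---you spell out the minimal-polynomial argument behind $\dim(\calV_{\alpha,i})=i$ (the paper simply asserts $(1,\alpha,\dots,\alpha^{i-1})$ is a basis), and you flag the boundary case $i+j-1>m$ (the paper quietly restricts to $i+j-2\le m$)---but these are refinements of the same proof rather than a different route.
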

\begin{proof}
    For any $i, j \in [m]$,
    a basis of $\calV_{\alpha,i}$ is given by $(1,\alpha,\ldots, \alpha^{i-1})$ and a basis of $\calV_{\alpha,j}$ is given by $(1,\alpha,\ldots, \alpha^{j-1}).$
    Furthermore,
    by the definition of $\calV_{\alpha,i}.\calV_{\alpha,j}$ we see that, if $i+j-2 \leq m$, $(1,\alpha,\ldots, \alpha^{i+j-2})$ is a basis of $\calV_{\alpha,i}.\calV_{\alpha,j}$, indicating that $ \calV_{\alpha,i}.\calV_{\alpha,j} = \calV_{\alpha,i+j-1}.$
    By taking $j=2$, one has $\calV_{\alpha,i}.\calV_{\alpha,2}= \calV_{\alpha,i+1}$, which has a basis $(1,\alpha,\ldots, \alpha^{i}).$ 
    Since $\calV_{\alpha,i} + \alpha \calV_{\alpha,i}$ has the same basis $(1,\alpha,\ldots, \alpha^{i})$, the statement follows.
\end{proof}
Motivated by decoding LRPC codes when $\calS \subsetneq \Span{\Fq}{H}.\calE$, 
we consider the following subfamily of LRPC codes. \cn
\begin{definition}\label{def:bd-lrpc}
    Let $\calV_{\alpha,d}=\Span{\Fq}{1,\alpha,\ldots, \alpha^{d-1}}$ be a bounded-degree subspace and
    $H \in \calV_{\alpha,d}^{(n-k)\times n}$ have full rank $n-k$.
    A code $\calC \subsetneq \Fqm^n$ having $H$ as its parity-check matrix is said to be a \textbf{bounded-degree low-rank parity-check (BD-LRPC)} code when $d$ is significantly smaller than $m$. 
\end{definition}

\begin{remark}\label{rm:LRPC_2}
    Observe that LRPC codes of density $2$ correspond exactly to BD-LRPC code of degree $2.$
    Let $\Span{\Fq}{a_1,a_2}$ be the support of  the parity-check matrix $H \in \Fqm^{(n-k) \times n}$ of an LRPC code of density $2$.
Then the matrix $H' = a_1^{-1} H$ will be a parity-check matrix of the same code and
$$
    \Span{\Fq}{H'} =  a_1^{-1}  \Span{\Fq}{H} = \Span{\Fq}{1, a_1^{-1} a_2} = \calV_{\alpha ,2},
$$
where $\alpha = a_1^{-1} a_2.$ In Figure \ref{fig:BD_LRPC-LRPC} we depict the situation of BD-LRPC of degree $d$ being a subset of LRPC codes of density $d$ for $d>2$ while being the same set for $d=2.$ 
\end{remark}
\begin{figure}
    \centering
    \begin{tikzpicture}  

    \foreach \r in {0.5,1.0,1.5,2.0}{
      \draw[thick] (0,3 - \r*0.8) rectangle (3.5*\r*0.8, 3 + \r*0.8);
    }

    \draw[thick, fill=blue!20, opacity=0.5] (0,3-0.5*0.8) rectangle (3.5*1.7*0.8, 3 + 0.5*0.8);

    \draw[thick, black] (-0.5,0.7) rectangle (6.5,5.5);

    \node[above] at (3, 5.6) {$\Fqm$-linear codes};
    \node[above] at (2, 4.8) {LRPC-$d$};
    \node[above] at (0.6 , 2.8+0.5) {$d=3$};
    \node[above] at (0.6, 2.8) {$d=2$};
    \node[above, blue, thick] at (3.1, 2.7) {BD-LRPC};
\end{tikzpicture}
    \caption{BD-LRPC inside the set of LRPC codes}
    \label{fig:BD_LRPC-LRPC}
\end{figure}

\medskip

Below we discuss the decoding of the BD-LRPC codes, which utilizes the structure of the bounded-degree subspace to improve the decoding capability.
For BD-LRPC codes with $\Span{\Fq}{H} = \calV_{\alpha,d},$ as a direct consequence of Lemma \ref{lm:V+aV},  we have the following result regarding the product space of $ \calV_{\alpha,d}$ and $\calE.$
\begin{lemma}\label{lm:VjViE}
    Let $\alpha\in \Fqm$, $\calV_{\alpha,d}=\Span{\Fq}{1, \alpha, \dots, \alpha^{d-1}} \subseteq \Fqm$ and let $\calE \subseteq \Fqm$ be an $\Fq$-linear subspace of $\Fqm$ of dimension $r$.
    Then, for any integer $i\geq 1$, 
    \begin{equation} \label{eq:ViE_rec}
        \calV_{\alpha,i+1}.\calE = \calV_{\alpha, 2}.\calV_{\alpha,i}.\calE = \calV_{\alpha,i}.\calE + \alpha \calV_{\alpha,i}.\calE.
    \end{equation}
\end{lemma}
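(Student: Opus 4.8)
The plan is to deduce the identity directly from Lemma~\ref{lm:V+aV} together with the elementary formal properties of the product of subspaces. The first step is to record the generating-set-free description of the product: for any two $\Fq$-subspaces $\calA,\calB\subseteq\Fqm$ one has $\calA.\calB=\Span{\Fq}{ab\mid a\in\calA,\ b\in\calB}$, because every basis product lies in this set while, conversely, every product $ab=\big(\sum_i x_i a_i\big)\big(\sum_j y_j b_j\big)=\sum_{i,j}x_i y_j\, a_i b_j$ lies in the $\Fq$-span of the basis products. From this description it is immediate that the product of subspaces is commutative and associative, and that it distributes over sums and is compatible with scalar multiplication by a fixed element of $\Fqm$, i.e. $(\calA_1+\calA_2).\calB=\calA_1.\calB+\calA_2.\calB$ and $(\beta\calA).\calB=\beta(\calA.\calB)$ for $\beta\in\Fqm$.

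With these properties in hand, I would instantiate \eqref{eq:V+aV1} of Lemma~\ref{lm:V+aV} at index $i$, obtaining $\calV_{\alpha,i+1}=\calV_{\alpha,2}.\calV_{\alpha,i}=\calV_{\alpha,i}+\alpha\calV_{\alpha,i}$, and then take the product of each side with $\calE$. Associativity and commutativity applied to the middle expression give $\calV_{\alpha,i+1}.\calE=\calV_{\alpha,2}.\big(\calV_{\alpha,i}.\calE\big)$, while distributivity together with $(\alpha\calV_{\alpha,i}).\calE=\alpha\big(\calV_{\alpha,i}.\calE\big)$ applied to the right-hand expression gives $\calV_{\alpha,i+1}.\calE=\calV_{\alpha,i}.\calE+\alpha\big(\calV_{\alpha,i}.\calE\big)$. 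Chaining the two equalities yields exactly \eqref{eq:ViE_rec}.

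There is no genuine obstacle here; the only point worth a word of care is that Lemma~\ref{lm:V+aV} is phrased for indices in $[m]$ and under the hypothesis that $\alpha$ generates $\Fqm$ over $\Fq$. I would either carry that hypothesis along — it is the standing assumption for BD-LRPC codes, where $\calV_{\alpha,d}$ is taken to have dimension $d$ — or simply observe that the set identity $\calV_{\alpha,i+1}=\calV_{\alpha,i}+\alpha\calV_{\alpha,i}$ holds unconditionally, since both sides equal $\Span{\Fq}{1,\alpha,\dots,\alpha^{i}}$, so the displayed chain of equalities is valid for every $i\ge 1$ without restriction.
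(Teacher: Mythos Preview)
Your proposal is correct and matches the paper's approach: the paper simply states that Lemma~\ref{lm:VjViE} is ``a direct consequence of Lemma~\ref{lm:V+aV}'' without giving any further argument, and what you have written is exactly the natural way to unpack that consequence. Your remark that the identity $\calV_{\alpha,i+1}=\calV_{\alpha,i}+\alpha\calV_{\alpha,i}$ holds unconditionally (independent of whether $\alpha$ generates $\Fqm$) is a nice clarification that the paper leaves implicit.
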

The dimension of $ \calV_{\alpha, i}.\calE$ is in most cases equal to $ir$ for integers $i \leq m/r$ and it is always upper bounded by $ir$.
Consider $\calS = \Span{\Fq}{s_1, \ldots, s_{n-k}} \subsetneq \calV_{\alpha,d}.\calE$ and suppose its dimension is $\dim(\calS) = n-k < \dim(\calV_{\alpha,d}.\calE) \leq rd.$
In this case, the standard decoding algorithm for LRPC codes would result in a decoding failure. 
The observation in Lemma \ref{lm:VjViE} motivates us to expand $\calS \subsetneq \calV_{\alpha,d}.\calE$ with $\calV_{\alpha, 2}$ as follows
\begin{equation}\label{eq:V2S}
    \calV_{\alpha,2}.\calS \subseteq \calV_{\alpha,2}.\calV_{\alpha,d}.\calE = \calV_{\alpha,d + 1}.\calE.
\end{equation}
Notice that $\calV_{\alpha,2}.\calS = \calS + \alpha \calS$ can have dimension up to $2(n-k)$ while $\dim(\calV_{\alpha,d + 1}.\calE) \leq (d+1)r.$
That is to say, starting from $\calS \subsetneq \calV_{\alpha,d}.\calE$, if one expands the subspaces on both sides by multiplying with $\calV_{\alpha,2}$, 
the dimension incremental from $\calS$ to $\calV_{\alpha, 2}.\calS$ on the left-hand side of (\ref{eq:V2S}) is probably larger than the dimension incremental of $\calV_{\alpha,d}.\calE$ to $\calV_{\alpha,d+1}.\calE$ on the right-hand side. 
In particular, when $2(n-k) \geq (d+1)r,$ we would have
$\calV_{\alpha,2}.\calS = \calV_{\alpha,d + 1}.\calE$.
When this equality does not hold, we can further proceed with the expansion.
Provided that $m$ is sufficiently larger than $r(d+t-1)$, we can continue expanding $\calS$ with $\calV_{\alpha,t}$ for $t\geq 2$, which gives
\begin{equation}\label{eq:expansion_at}
     \calV_{\alpha,t}.\calS \subseteq \calV_{\alpha,t}.\calV_{\alpha,d}.\calE = \calV_{\alpha,d +t-1}.\calE.
\end{equation}
In this case, we have $\dim(\calV_{\alpha,t}.\calS) \leq t(n-k)$ while $\dim(\calV_{\alpha,t+d-1}.\calE) \leq (d+t-1)r$. 
When $t(n-k) \geq (d+t-1)r$, if $\calV_{\alpha,t}.\calS = \calV_{\alpha,d + t - 1}.\calE$, we consider it a \textit{successful expansion} since the error support $\calE$ can be then recovered from $\calV_{\alpha,t}.\calS$.
For sufficiently large $m$, the successful expansion occurs with a high probability. 
Rigorously showing this fact requires complicated calculations and the process is lengthy. In Section \ref{sec:Successful} we will elaborate the process of transforming $\calV_{\alpha,t}.\calS = \calV_{\alpha,d + t - 1}.\calE$ into a counting problem,
and then intensively study the counting problem with the help of Ferrers diagrams in Section \ref{sec:ExpRowSub}. To summarize, it will be shown in Sections \ref{sec:Successful} and \ref{sec:ExpRowSub}  that a successful expansion occurs with a probability lower bounded by 
$1-{q^{-u+1}}$ for $t=(d-1)r$ and $u = n-k-r$.
\begin{remark} As recalled in Remark \ref{rem_1}, the idea of expanding subspace was proposed in \cite{Aragon2019} in decoding LRPC codes. 
    The case $d=2$ was particularly analyzed in \cite[Th. 5.2]{Aragon2019}. In this case the parity-check matrix has a support $\calH = \Span{\Fq}{h_1,h_2} $ of $\Fq$-dimension $2,$ which
    can be considered as a BD-LRPC code of degree $2$ with parity-check matrix support given by $\Span{\Fq}{H}=\Span{\Fq}{1,\alpha}$ where $\alpha = h_1h_2^{-1}.$
    To underline the similarities between the two methods we will stick to the notation $\Span{\Fq}{H}=\Span{\Fq}{1,\alpha} = \calV_{\alpha,2}.$
    It was noted that $\dim(\calE.\calV_{\alpha,2} + \alpha \calE.\calV_{\alpha,2}) \leq 3r.$
    Under the assumption that the equality holds, the authors of \cite{Aragon2019} applied the only possible expansion for $d=2$ which is given by  $(\calS + \alpha \calS) \cap (\calS + \alpha^{-1} \calS).$
    It was noticed that $\alpha(\calS + \alpha^{-1} \calS) = \calS + \alpha \calS$ which implies that the two spaces have the same dimension.
    To ensure a recovery of the space $\calE.\calV_{\alpha,2}$ it was required that $\calS + \alpha \calS$ has dimension $3r$.
    This leads to the bound $3r \leq 2(n-k)$ which means that the rate can be at most $R \leq 1- \frac{3}{2} \rho.$

    With the expansion presented in this paper, it is possible to go further than $\calS + \alpha \calS.$
    When we apply our expansion for $d=t=2,$ we get $\calS + \alpha \calS \subseteq \calE.\calV_{\alpha,3}$ from which we obtain the same condition $3r \leq 2(n-k).$
    For larger $t$, we can get the improved condition $ (d+t-1)r \leq t(n-k)$, allowing for codes with a better rate $R \leq 1 - \frac{d+t-1}{t}\rho.$
\end{remark}

\smallskip

With a successful expansion $\calV_{\alpha,t}.\calS = \calV_{\alpha,d + t - 1}.\calE\subsetneq \Fqm$, one can recover the error support $\calE$ from $\calV_{\alpha, t}.\calS$ instead of the original $\calS$ as in the decoding of LRPC codes. 
With properly chosen parameters, the procedure leads to a unique decoding for the BD-LRPC codes.  
Here we will assume $\dim(\calV_{\alpha,d+t-1}.\calE) = (d+t-1)r.$ 
Although this dimension might be smaller, this assumption represents both the most common and the worst case.
In fact, if $\dim(\calV_{\alpha,d+t-1}.\calE)< (d+t-1)r$ were smaller, we might generate $\calV_{\alpha,d+t-1}.\calE$ by expanding a smaller subspace $\calS$ (i.e., using a shorter syndrome) or through a smaller number of expansions. In the subsequent discussion, we will focus on the worst case.

The decoding of BD-LRPC codes constitutes three phases: 
\begin{itemize}
	\item \textbf{Phase 1.} Expand $\calS$ to derive a successful expansion $\calV_{\alpha,t}.\calS = \calV_{\alpha,d + t - 1}.\calE$;
	\item \textbf{Phase 2.} Recover $\calE$ by intersecting $\calV_{\alpha,t}.\calS$ and  $\alpha^{-(t+d-2)} \calV_{\alpha,t}.\calS$;
	\item \textbf{Phase 3.} Recover the error $\be\in \F_{q^m}^n$ from $\mathbf{s} =\be H^\intercal= \bee EH^\intercal$, where $\bee$ is an ordered basis of $\calE$.
\end{itemize} 
Phase 1 is critical for the decoding. The analysis for successful expansion is complicated.
For the reader's convenience,  the decoding phases and derived results in Sections \ref{sec:Successful} and \ref{sec:ExpRowSub} for Phase 1 are summarized in Figure \ref{fig: decoding}. 
\cn
\begin{figure*}[tb]
	\begin{center}
		\begin{tikzpicture}[scale=0.95] 
		\tikzset{
			rect/.style={rectangle, draw, align=center, minimum width=5.2cm},
			rect1/.style={rectangle, draw, align=center, minimum width=2cm, minimum height=0.8cm},
			rect2/.style={rectangle, draw, align=center,  minimum height=0.8cm},
			rect3/.style={rectangle, draw, align=center, minimum width=5.5cm, minimum height=0.8cm, fill=lightgray!40},
			loop/.style={ 
				draw,
				chamfered rectangle,
				chamfered rectangle xsep=3cm
			},
		}

		\node at (-2,4){\textbf{Phase 1}};
		\node[rect](rect1) at (0, 3) {Successful Expansion \\ $\calV_{\alpha, t}.\calS = \calV_{\alpha, d+t-1}.\calE$};
		
		\node at (-2,1){\textbf{Phase 2}};
		\node[rect](rect2) at (0, 0) {Error Support \\ $\calE = \calV_{\alpha,t}.\calS \cap  \alpha^{-(t+d-2)} \calV_{\alpha,t}.\calS$};
		
		\node at (-2,-2){\textbf{Phase 3}};
		\node[rect](rect3) at (0, -3) {Syndrome Equation \\ $\mathbf{s} = \bee EH^\intercal$};
		
		\node (rect0) at (0, 5){$\mathbf{s} = \mathbf{e} H^\intercal$};	  
		\node at (1.3, 4.2){$\calS = \Span{\F_q}{\bs}$};
		\node at (1.5, -1.5)	{a basis $\bee$ of $\calE$};
		\node at (4, 3.2)	{condition};
		
		\node at (0.4, -4){$E$};
		\node (rect4) at (0, -5){$\mathbf{e}=\bee E$};

		\draw[->] (rect0) -- (rect1);  
		\draw[->] (rect1) -- (rect2);
		\draw[->] (rect2) -- (rect3);  
		\draw[->] (rect3) -- (rect4);

		\node at (6.8,3.95){\text{Section \ref{sec:Successful}: Prop. \ref{prop:0}}};
		\node[rect3](rect5) at (8, 3) {$\Rank(X_1) =r$ \\ $\Rank(M_t(Z^{(1)},A)) = (d-1)r$ };

		\node at (6, 0.7) {Sec. \ref{subsec:d=2}};
        \node at (8.7, -0.7) {Sec. \ref{subsec:d>2}};  
		
		\node[rect1](rect6) at (6.5, 0) {Uniform $A$ };
		\node[rect1](rect8) at (8.8, 0) {$A$ in \eqref{eq:comp_mat}};
		\node[rect2](rect9) at (10.645, 0) {Tab. \ref{tab1}};

		\node at (6, -1) {Section \ref{sec:ExpRowSub} };
		\node[rect3](rect7) at (8, -2) {Problem \ref{Pb:C_k}\\ Th. 3-4 and Cor. 1-2  };
		
		\node at (6,-3.2){Lemma \ref{lm:SyndromeUniform}};
		\node[rect](rect5a) at (8, -4) {$X_1,\dots, X_d$ are uniform };
		
		\draw[->] (rect6) -- (6.5, 2.35);  \node at (7.1, 1.5){$d=2$}; 	\node at (9.8,1.5) {$d>2$};
		\draw[->] (rect8) -- (8.8, 2.35);  \node at (10.25, 0.75){$ r=1$ \quad\, $r>1$};
		\draw[->](10.55,0.4) -- (10.55, 2.35);
		
		\draw[->] (7.2, -1.4) -- (7.2, -0.44);   
		\draw[->] (rect5a) -- (rect7);  
		
		\draw[<->] (rect5) -- (rect1);
		\end{tikzpicture}
	\end{center}
	\caption{Decoding of BD-LRPC Codes: Phases and Auxiliary Results}\label{fig: decoding}
\end{figure*}

\medskip  

We need to discuss the necessary conditions on the parameters that affect the decoding procedure for BD-LRPC codes. 

\textbf{Phase 1.} Suppose $\calV_{\alpha,t}.\calS = \calV_{\alpha,d + t - 1}.\calE$
for certain positive integer $t$.
Such a successful expansion on $\calS$ requires, in terms of dimension, the inequality $(n-k)t \geq (d+t-1)r$.
Denoting $u = (n-k)-r$, we have $u \geq \frac{d-1}{t}r$ and a lower bound on $t$ given by $t \geq \frac{d-1}{u}r$.
In Section \ref{sec:ExpRowSub} \cn, we will see that 
if $\calV_{\alpha,t}.\calS = \calV_{\alpha,d+t-1}.\calE$ does not hold for $t$ up to $(d-1)r$,
the equality will no longer be reachable, indicating that it suffices to proceed with expansion on $\calS$ for $t$ up to $(d-1)r$. 
Hence the bounds on the positive integer $t$ for a successful expansion $\calV_{\alpha,t}.\calS = \calV_{\alpha,d+t-1}.\calE$ are given by  
\begin{equation}\label{eq:t_bounds}
    \frac{d-1}{u}r \leq t \leq (d-1)r. 
\end{equation}

\textbf{Phase 2.} 
 To get $\calE$ from a successful expansion,  
we observe that 
\begin{align*}
    \calV_{\alpha,d+t-1}.\calE  &=  \calE + \alpha \calE + \ldots +  \alpha^{d+t-2}\calE \\
    \alpha^{-1}(\calV_{\alpha,d+t-1}.\calE)  &=  \alpha^{-1}\calE + \calE  + \ldots + \alpha^{d+t-3} \calE.
\end{align*}
This means that $ \calV_{\alpha,d+t-2}.\calE \subseteq \calV_{\alpha,d+t-1}.\calE \cap \alpha^{-1}(\calV_{\alpha,d+t-1}.\calE).$ For a similar reason we have that $\calV_{\alpha,d+t-i-1}.\calE \subseteq \calV_{\alpha,d+t-1}.\calE \cap \alpha^{-i}(\calV_{\alpha,d+t-1}.\calE).$
Since we want the intersection to be just $\calE,$ the only meaningful intersection is  between the two spaces
\begin{align*}
    \calV_{\alpha,d+t-1}.\calE  &=  \calE + \alpha \calE + \cdots +  \alpha^{d+t-2}\calE, \\
   \alpha^{-(t+d-2)}(\calV_{\alpha,d+t-1}.\calE)  &=  \alpha^{-(t+d-2)}\calE  + \cdots + \calE.
\end{align*}
Hence, from a successful expansion it follows that
\begin{align*}
\calE  & \subseteq 
\calV_{\alpha,d+t-1}.\calE \cap  \alpha^{-(t+d-2)}(\calV_{\alpha,d+t-1}.\calE)   \\
  & = \calV_{\alpha,t}.\calS \cap  \alpha^{-(t+d-2)} \calV_{\alpha,t}.\calS.
\end{align*}
\cn
Denoting these two subspaces as $\calA, \, \calB \subsetneq \Fqm$, respectively, 
we have  that $\calE \subseteq \calA, \calE \subseteq \calB$ with $\dim(\calE) = r $ and $\dim(\calA) = \dim(\calB) = (d+t-1)r$.
In order to have $\calE = \calA \cap \calB$ we need that $\dim(\calA \cap \calB) \leq r = \dim(\calE).$
This gives us a necessary condition on $m.$ 
We start form the inequality
$$
m \geq \dim(\calA + \calB)
= \dim(\calA) + \dim(\calB) - \dim(\calA \cap \calB),
$$
this leads to 
$$
r \geq \dim(\calA \cap \calB) \geq \dim(\calA) + \dim(\calB) - m = 2(d+t-1)r - m
$$
thereby
\begin{equation}\label{eq:m>=2(d+t)r-3r}
m\geq 2(d+t-1)r - r.    
\end{equation} 

According to \cite[Prop.\,2.4.2]{ROLLO}, the probability that the intersection contains only $\calE$ can be estimated as $1 - q^{-(m - 2(d+t-1)r + r)},$
which is the approximate probability that two random subspaces $\calA, \calB \subseteq \Fqm$ of dimension $(d+t-1)r$ both containing the same subspace $\calE$ of dimension $r$ intersect only in $\calE$.

\cn

\textbf{Phase 3.} Once we recover the error support $\calE$ from the equality $\calV_{\alpha,t}.\calS = \calV_{\alpha,d+t-1}.\calE$, we can expand the syndrome equations $H\be^\intercal = \bs^\intercal$ over $\Fq$ in the same way as for the standard LRPC codes  \cite{gaborit2013,Aragon2019}.
Similarly, to have a unique solution, we need the resulting system of $(n-k)rd$ equations and $nr$ variables in $\F_q$ to be over-determined. This can happen only if $(n-k)d \geq n$.
To the best of our knowledge there is no analysis on the likelihood that this system leads to a unique solution, a sufficient condition is that the resulting linear system has rank $nr$.
In ROLLO \cite{ROLLO} and other cryptographic applications the parameters are usually chosen in such a way that there are way more equations than variables. In those cases, it seems plausible to assume that the system is indeed of maximal rank and has a unique solution.
A possible heuristic would be to consider the probability that an $(n-k)rd \times nr$ matrix over $\Fq$ has rank $nr$.
The probability is $0$ if $n > (n-k)d$ and can be lower bounded by $1-q^{-(n(d-1) -kd)r + 1}$ for $n <  (n-k)d.$
When $n=(n-k)d$ the probability could be estimated as $H_q(n) \approx H_q $ which is the probability that a random $n \times n$  matrix over $\F_q$ has full rank.

\smallskip

Based on the assumption $\dim(\calV_{\alpha,d + t - 1}.\calE) = (d+t-1)r$ and the  probability of $\calV_{\alpha,t}.\calS = \calV_{\alpha,d + t - 1}.\calE$ discussed in Sections \ref{sec:Successful}, \ref{sec:ExpRowSub} and the analysis in Phases 1-3, we obtain the following result.

\begin{theorem}\label{Th0}    
    Let $\calC \subseteq \Fqm^n$ be a BD-LRPC code of dimension $k$ and degree $d$, where $\frac{n}{n-k}\leq d \ll m$. 
	Then, 
	\begin{enumerate}[label=(\roman*)]
		\item for $d =2$, the code $\calC$ can uniquely decode errors of rank $r$ with a probability in the order of  $1- c_q q^{-u+1} ,$  
		with $u= n-k-r$, when
		$$
		u \leq \min \{m - 2r^2-r, (n-2k)r-1\};
		$$
		\item for $d>2$, the code $\calC$ can uniquely decode errors of rank $1$ with a probability in the order of  $1- c_q q^{-u+1} $ 
		with $u= n-k-1$ when $$
		u \leq \min \{m - 4d+5, n(d-1) -kd-1\}.
		$$
	\end{enumerate}
  where $c_q = \frac{q+1}{q-1}$ lies in the interval $(1,3]$ with $c_2=3$.
\end{theorem}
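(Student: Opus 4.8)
The plan is to run the three-phase decoder described above on a received word carrying a rank-$r$ error, to bound the failure probability of each phase in isolation, and then to combine the three bounds by a union bound; the organising observation is that each of the two displayed inequalities in the statement is precisely the condition that pushes the corresponding phase-failure probability below $q^{-u}$, while Phase 1 is governed by the technical estimates of Sections \ref{sec:Successful} and \ref{sec:ExpRowSub}. Throughout I would condition on the worst case $\dim(\calV_{\alpha,d+t-1}.\calE)=(d+t-1)r$, which, as argued just before the three-phase list, is simultaneously the typical situation and the one least favourable to the decoder.

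For Phase 1 I would take $t=(d-1)r$, the largest value permitted by \eqref{eq:t_bounds}, and invoke Proposition \ref{prop:0} to reduce the event $\calV_{\alpha,t}.\calS=\calV_{\alpha,d+t-1}.\calE$ to the two rank conditions $\Rank(X_1)=r$ and $\Rank(M_t(Z^{(1)},A))=(d-1)r$ depicted in Figure \ref{fig: decoding}, where by Lemma \ref{lm:SyndromeUniform} the blocks $X_1,\dots,X_d$ may be treated as uniformly random. When $d=2$ the matrix $A$ is itself uniform and the counting of Section \ref{sec:ExpRowSub} goes through for every $r$; when $d>2$ the matrix $A$ has the form given in \eqref{eq:comp_mat}, and the counting — Problem \ref{Pb:C_k} together with Theorems \ref{th:|C_k^ur|}--\ref{th:|Gammak^t|} and their corollaries — can be pushed through only for $r=1$. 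In either covered case the outcome is that some $t\le(d-1)r$ yields a successful expansion except with probability at most $q^{-u+1}$, where $u=n-k-r$. This Ferrers-diagram counting is the genuine obstacle, and it is exactly what resists the case $d>2,\ r>1$, which is therefore left open as Conjecture \ref{con_d>2}.

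For Phase 2 I would apply the subspace-intersection estimate of \cite[Prop.\,2.4.2]{ROLLO}: once $\calV_{\alpha,t}.\calS=\calV_{\alpha,d+t-1}.\calE$ holds, the subspaces $\calA=\calV_{\alpha,t}.\calS$ and $\calB=\alpha^{-(t+d-2)}\calV_{\alpha,t}.\calS$ each have dimension $(d+t-1)r$ and both contain $\calE$, so $\calA\cap\calB=\calE$ fails with probability of order $q^{-(m-2(d+t-1)r+r)}$; substituting $t=r$ gives the exponent $m-2r^2-r$ when $d=2$, and $t=d-1$, $r=1$ gives $m-4d+5$, so the hypotheses $u\le m-2r^2-r$ and $u\le m-4d+5$ bound this failure by (essentially) $q^{-u}$. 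For Phase 3 I would expand the syndrome identity $\bs=\bee E H^\intercal$ over $\Fq$ exactly as in the standard LRPC decoder, obtaining an $(n-k)rd\times nr$ linear system whose solution is unique as soon as its rank equals $nr$; by \eqref{Eq_prob_matrix} this fails with probability at most $q^{-((n(d-1)-kd)r-1)}$ in the regime $n<(n-k)d$ ensured by the hypothesis $n/(n-k)\le d$, so the hypotheses $u\le(n-2k)r-1$ and $u\le n(d-1)-kd-1$ bound it by $q^{-u}$.

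Assembling the three estimates, the decoder succeeds with probability at least $(1-q^{-u+1})(1-q^{-u})^2\ge 1-q^{-u+1}-2q^{-u}=1-(q+2)q^{-u}$, and since $(q+2)(q-1)\le q(q+1)$ for every prime power $q$ one has $(q+2)q^{-u}\le\frac{q+1}{q-1}\,q^{-u+1}$; hence the failure probability is of order $c_q q^{-u+1}$ with $c_q=\frac{q+1}{q-1}\in(1,3]$ and $c_2=3$, as claimed. The only place where substantive work is hidden is Phase 1; Phases 2 and 3 rest on standard estimates for random subspace intersections and random matrix ranks, and the statement's phrasing as an order-of-magnitude bound reflects that the Phase 2 probability is itself an approximation rather than an exact count.
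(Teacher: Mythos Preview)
Your proposal is correct and follows essentially the same three-phase decomposition and union-bound bookkeeping as the paper. Two small points of deviation worth noting: (1) for the case $d>2$, $r=1$ the paper does \emph{not} invoke the Ferrers-diagram counting of Theorems~\ref{th:|C_k^ur|}--\ref{th:|Gammak^t|}; instead it uses the companion-matrix reduction of Subsection~\ref{subsec:d>2} together with the polynomial-$\gcd$ probability from \cite{Bennett07}, which directly gives ${\rm Prob}(\Rank(M_t(Z^{(1)},A))=d-1)=1-q^{-u}$; (2) in the paper the constant $c_q=\frac{q+1}{q-1}$ already arises from Phase~1 alone (combining the failure probabilities $\approx\frac{q^{-u}}{q-1}$ for $\Rank(X_1)=r$ and $\frac{q^{-u+1}}{q-1}$ for $\Rank(M_t)=(d-1)r$), whereas you fold the Phase~2 and Phase~3 contributions into $c_q$ at the end. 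Since the theorem is stated only as an order-of-magnitude bound, this bookkeeping difference is immaterial.
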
 \cn 
\begin{proof}
    We will discuss the unique decodability of the BD-LRPC codes based on 
    a successful expansion $\calV_{\alpha,t}.\calS = \calV_{\alpha,d+t-1}.\calE$. In Section \ref{sec:Successful},
    Proposition \ref{prop:0} points out that a successful expansion occurs if and only if 
   $$\Rank(X_1)=r \text{ and }\Rank(M_t(Z^{(1)}, A)) =(d-1)r,$$ where $X_1$  is the $(n-k)\times r$ matrix given in \eqref{eq:SyndromeSum}
   derived from the syndrome $\bs=(\bs_1,\dots, \bs_{n-k})$ with $\bs_i\in \calS \subsetneq\calV_{\alpha,d}.\calE$,
   the matrices $A$, $Z^{(1)}$
   are given in and after \eqref{eq:matrix_A}, respectively,
   and $M_t(Z^{(1)}, A)$ is the $ut \times (d-1)r$ matrix over $\Fq$ defined in \eqref{eq:MtZA}, 
    which is derived by applying elementary operations to the $(n-k)t\times (d+t-1)r$ matrix $M_t$ on the right-hand side of \eqref{eq:ExpAsMatrix} where $u = n-k-r$. Some of these matrices are given below for convenience:
    \begin{align*}
   & A =
   \begin{pmatrix}
   A_1 & A_2 & \cdots & A_{d-2} & A_{d-1} 	\\
   I_r &     &        &         &  \bzero  \\
   & I_r &  	   &         &  \bzero  \\
   &	  & \ddots &         &  \vdots  \\
   &     &		   & I_r     & 	\bzero	
   \end{pmatrix},
   	& M_t(Z^{(1)},A) = \begin{pmatrix}
   Z^{(1)} \\ Z^{(1)}A \\ \vdots \\ Z^{(1)}A^{t-1}
   \end{pmatrix}.
   \end{align*}
   We now discuss the probability of the successful expansion based on the results from Sections \ref{sec:Successful} and \ref{sec:ExpRowSub}.
 	From the process of obtaining $X_1$ from the syndrome equations, Lemma \ref{lm:SyndromeUniform} in Section \ref{sec:Successful} implies that $X_1$ can be seen as a (uniformly) random matrix in $\Fq^{(n-k)\times r}$. Consequently, the matrices $A_i$'s can be seen as randomly selected from $\F_q^{r\times r}$. 
     
     We first consider the probability of $\Rank(X_1)=r$.  According to \eqref{Eq_prob_matrix} in Section \ref{sec:prel},  the probability that a random matrix $X_1$ uniformly drawn from $\Fq^{(n-k)\times r}$, where $n-k = r + u$, has rank $r$
  is given by  
  $$
    \prod_{i = 0}^{r-1}(1- q^{i-(r+u)}) \approx 1 - q^{-r-u} \sum_{i=0}^{r-1}q^i  \approx 1 - \frac{q^{-u}}{(q-1)}.
  $$
   
	Regarding the probability of $\Rank(M_t(Z^{(1)}, A)) = (d-1)r$ for $t=(d-1)r$, we shall divide it into two cases:  $d=2$ and $d>2$. 
	
   (i) For $d=2$, the matrix $A$ reduces to $A_1$. Subsection \ref{subsec:d=2} shows that the matrix $\binom{A}{Z^{(1)}}$ can be deemed as a random matrix uniformly sampled from $\F_q^{(r+u)\times r}$.
   Section \ref{sec:ExpRowSub} intensively discusses the number of pairs of matrices $Z^{(1)}$ and $A$ such that the matrix $M_t(Z^{(1)}, A)$ has full rank.
    Corollary \ref{cor:Pr(C_k^u)} in Section \ref{sec:ExpRowSub} explicitly shows that for uniformly random matrices $Z\in \Fq^{u\times r}$ and $A\in \Fq^{r \times r}$
    \begin{align*}
        {\rm Prob}(\Rank(M_t(Z, A)) = r) & = \frac{H_q(r+u-1)}{H_q(u-1)} \\
        & \geq
        1-\frac{q}{q-1}q^{-u}.
    \end{align*}
    
    (ii) When $d>2$ and $r=1$, Subsection \ref{subsec:d>2} shows that, for the matrix $M_t(Z^{(1)}, A),$ achieving full rank $(d-1)r=(d-1)$ is equivalent to the conditions that the related polynomials $z_i(x)=\sum_{j=0}^{d-2}z_{i,j}x^{j}$, where $z_{i,j}$ is the $(i,j)$-th entry of $Z$, and $p_A(x)=x^{d-1} - \sum_{j=0}^{d-2}a_jx^j$ satisfy $$\gcd(p_A(x), z_1(x),\dots, z_u(x))=1.$$ Using a known result from \cite{Bennett07}, the probability of this second event is equal to $1-q^{-u}$, indicating that
    \[
    {\rm Prob}(\Rank(M_t(Z^{(1)}, A)) = (d-1)) = 1-q^{-u}. 
    \]\cn
    Combining the probabilities of $\Rank(X_1)=r$ and $\Rank(M_t(Z^{(1)}, A)) = (d-1)r$ for Cases (i) and (ii), the successful expansion $\calV_{\alpha,t}.\calS = \calV_{\alpha,d+t-1}.\calE$ holds with a probability lower bounded by $1-c_q q^{-u+1}$ with $c_q=\frac{q+1}{q-1}$.

\smallskip
    
     From the discussion before Theorem \ref{Th0}, the success probability for Phases 2 and 3 can be approximated by \cn
    \begin{equation}
	\begin{split}
	    &(1 - q^{-(m - 2(d-1)(r+1)r + r)})(1-q^{-(n(d-1) -kd)r + 1}) \\
        \approx & 1 - q^{-(m - 2(d-1)(r+1)r + r)} -  q^{-(n(d-1) -kd)r + 1}.
	\end{split}
    \end{equation}
	Therefore, the unique decoding of BD-LRPC codes has a success probability given by $1- c_q q^{-u+1}$ when
	$$
        u\leq  \min\{ m - 2(d-1)(r+1)r + r, ((d-1)n  - kd)r-1 \}.
    $$
    The above condition yields the corresponding conditions on $u$ for Cases (i) and (ii), which can hold easily for properly chosen $m$ and $n$.
    \cn
\end{proof}

It is worth noting that, according to our experimental results, the statement in Theorem \ref{Th0} appears to be true also for general cases where $d>2$ and $r>1$.

\smallskip

\textbf{Experimental results.}
%
We ran some experiments with MAGMA\footnote{Source code available at \url{https://chunlei-li.github.io/projects/bd_lrpc.html}}, where we generated the matrix 
$A$
 and a random matrix $Z \in \Fq^{u \times (d-1)r}$ with different sets parameters.
For each set of parameters we run $10000$ tests and count how many of these tests gave us $\Rank(M_t(Z,A)) = (d-1)r$ for $t=  (d-1)r.$ 
The experimental results are listed in Table \ref{tab1}.
The first column lists the values in the case that $Z,A$ were both chosen from a uniform distribution for $d=2$ in Theorem \ref{Th0} (i); the second column lists the values where $A$ is as in \eqref{eq:matrix_A} for $d>2, r>1$; and the third column lists the values where $A$ is a companion matrix as in \eqref{eq:comp_mat} for $d>2, r=1$ in Theorem \ref{Th0} (ii). 

It can be verified that the values in Column 1 are consistent with the result given in Corollary \ref{cor:Pr(C_k^u)} with the probability $H_q(r(d-1) + u -1)/H_q(u-1)$; and those in Column 3 match the probability $1 - q^{-u}$ for the case $d>2, r=1$.
Notice that $r=1$ in the first example corresponds to a companion matrix, while $d=2$ in the $5^{th}$ and $6^{th}$ examples correspond to the uniform matrix case discussed in Section \ref{sec:ExpRowSub}.

\begin{table}[h]
	\begin{center}
		\begin{tabular}{|c|c|c|c|}
			\hline
			$(q,r,u,d)$ & Uniform & Measured & Companion \\
			\hline
			$(2,1,2,5)$ & $5960$  & $7496$   & $7500$  \\
			$(2,2,2,5)$ & $5787$  & $6532$   & $7500$  \\
			$(2,3,2,5)$ & $5776$  & $6093$   & $7500$  \\
			$(2,4,2,5)$ & $5775$  & $5985$   & $7500$  \\
			$(2,7,2,2)$ & $5798$  & $5872$   & $7500$  \\    
			$(3,3,3,2)$ & $9471$  & $9469$   & $9629$  \\
			$(3,3,3,4)$ & $9452$  & $9491$   & $9629$  \\
			$(7,2,3,3)$ & $9966$  & $9966$   & $9970$  \\
			\hline
		\end{tabular}
	\end{center}
	\caption{Experiments for $\Rank(M_t(Z,A)) = (d-1)r$ in 10000 trials}\label{tab1}
	\label{tab:my_label}
\end{table}

However, providing rigorous proof for those general cases seems intractable. 
Neither of the techniques for proving the cases $d=2$ and $d>2,\, r=1$ can be applied to those general cases.
Below we propose a conjecture and cordially invite interested readers to attack it, which will lead to a complete work for BD-LRPC codes for any $d\geq 2$ and $r\geq 1$.

\begin{conj}\label{con_d>2} Let $Z$ be a uniformly random matrix in $\F_q^{u \times (d-1)r}$, $A\in \F_q^{(d-1)r\times (d-1)r}$ and $M_t(Z,A) $ be the matrices as in the proof of Theorem \ref{Th0}, where 
	$A_1,A_2,\dots, A_{d-1}$ are independently and uniformly chosen at random from $\F_q^{r\times r}$. 
    Then for $t=(d-1)r$, and $d>2,\, r\geq 2$, one has
    \begin{align*}
         &1-\frac{q^{-u+1}}{q-1} \approx \frac{H_q((d-1)r+ u -1)}{H_q(u-1)} \\
         &\leq  {\rm Prob}(\Rank(M_t(Z,A)) = (d-1)r) \leq 1-q^{-u}.
    \end{align*}

\end{conj}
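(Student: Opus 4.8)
The plan is to attack the conjecture by interpolating between the two cases that are already settled in Theorem \ref{Th0}: the fully uniform case ($d=2$, where $A=A_1$ is a single uniform $r\times r$ block) and the companion-matrix case ($d>2$, $r=1$, where $A$ is literally a companion matrix and the rank condition becomes a $\gcd$ condition on polynomials over $\Fq$). The upper bound $1-q^{-u}$ should follow from a direct adaptation of the $r=1$ argument: by block-structure of $A$, the matrix $M_t(Z,A)$ with $t=(d-1)r$ has rank at most $(d-1)r$, and the set of ``bad'' pairs $(Z,A)$ for which the rank drops always contains a large explicit sub-locus (e.g. condition on a single coordinate, mimicking the event ``$p_A$ and all $z_i$ share a common root''), giving the $q^{-u}$ lower bound on the failure probability exactly as in the companion case. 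The lower bound $\tfrac{H_q((d-1)r+u-1)}{H_q(u-1)}$ is the substantive content.

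For the lower bound, the first step is to exploit the block-companion shape of $A$: write $A = C + N$ where $C$ is the $(d-1)\times(d-1)$ block companion matrix with the $A_i$'s in the top row and $I_r$'s on the subdiagonal, and observe that $M_t(Z,A)$ — the vertical stack of $Z, ZA, \ldots, ZA^{t-1}$ — is a block-Krylov matrix. The key algebraic fact I would try to establish is that $\Rank(M_{(d-1)r}(Z,A)) = (d-1)r$ precisely when the ``$\Fq[x]$-module generated by the rows of $Z$ under the action of $A$'' is all of $\Fq^{(d-1)r}$, i.e. when a certain module-theoretic coprimality condition holds between the rows of $Z$ and the invariant factors of $A$. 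This is the natural matrix generalization of the $\gcd$ condition used for $r=1$. The second step is then a counting step: over the random choice of the block $A_i$'s and of $Z$, estimate the proportion of pairs failing this coprimality condition. I would condition on the conjugacy class / rational canonical form of $A$ and use the classical count of matrices over $\Fq$ with prescribed module structure (à la Fine–Herstein / Stong), combined with the observation that for a fixed $A$ the ``bad'' $Z$ form a union of affine subspaces whose codimensions are governed by the degrees of the invariant factors of $A$; averaging over $A$ should reproduce the Gaussian-binomial-type expression $H_q((d-1)r+u-1)/H_q(u-1)$, which is exactly the probability that a uniform $((d-1)r+u-1)\times r$-shaped object has the needed rank — matching Corollary \ref{cor:Pr(C_k^u)} when $d=2$.

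The hard part will be the module-theoretic step: unlike the $r=1$ case, where $A$ is a single companion matrix and Krylov-rank is governed by a single polynomial $\gcd$, here $A$ is only block-companion (not a genuine companion matrix of a single polynomial), so its rational canonical form, and hence the structure of the $\Fq[x]$-module $\Fq^{(d-1)r}$, depends intricately on the random blocks $A_1,\dots,A_{d-1}$. In particular the invariant factors need not all be equal, the module need not be cyclic, and the event ``$Z$ generates the module'' requires controlling how $u$ random row-vectors interact with a random-but-structured module decomposition. I expect one needs either (a) a precise version of the statement ``a random block-companion matrix is, with probability $1-O(q^{-1})$, regular (cyclic) with a single invariant factor of degree $(d-1)r$,'' reducing essentially to the $r=1$ analysis, plus a separate careful treatment of the non-cyclic correction terms; or (b) a generating-function identity over conjugacy classes that makes the averaging collapse cleanly to $H_q((d-1)r+u-1)/H_q(u-1)$. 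Making either route rigorous, and in particular showing the non-cyclic contributions are negligible to the claimed order, is the main obstacle — which is presumably why the authors left it as a conjecture rather than a theorem.
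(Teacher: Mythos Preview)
The statement you are attempting is Conjecture~\ref{con_d>2}, and the paper does \emph{not} prove it: the authors write that ``providing rigorous proof for those intermediate cases seems intractable'' and explicitly leave it open, supported only by the experimental data in Table~\ref{tab1}. There is therefore no paper's proof to compare your proposal against.

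As a research plan your outline is sensible, and you correctly flag the central obstacle at the end. But two concrete gaps should be named. First, the upper bound $1-q^{-u}$ is not the routine step you make it out to be. You assert that the bad locus ``always contains a large explicit sub-locus \ldots\ giving the $q^{-u}$ lower bound on the failure probability,'' yet you never exhibit this sub-locus. In the $r=1$ case the value $1-q^{-u}$ is a statement about the \emph{joint} distribution of $(p_A,z_1,\ldots,z_u)$ via \cite{Bennett07}, relying on $p_A$ being uniform over monic polynomials of degree $d-1$; for $r>1$ the characteristic polynomial of the block-companion $A$ is $\det(x^{d-1}I_r - x^{d-2}A_1-\cdots-A_{d-1})$, which is \emph{not} uniform over monic polynomials of degree $(d-1)r$, so the $r=1$ computation does not transfer. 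Nor can one argue pointwise in $A$: when $A$ happens to be cyclic with irreducible characteristic polynomial, the only proper $A$-invariant subspace is $\{0\}$, and the conditional failure probability over $Z$ is $q^{-u(d-1)r}\ll q^{-u}$. Second, for the lower bound, your route~(a) --- show that block-companion $A$ is cyclic with probability $1-O(q^{-1})$ and then reduce to a $\gcd$ count --- would at best yield the bound up to a multiplicative $1-O(q^{-1})$ loss, which is strictly weaker than the claimed $H_q((d-1)r+u-1)/H_q(u-1)$. Route~(b) requires the conjugacy-class distribution of random block-companion matrices with i.i.d.\ uniform blocks, which does not appear to be available and is itself the heart of the difficulty.

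In short: you have correctly diagnosed why this is left as a conjecture, but nothing in your proposal closes either inequality.
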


\begin{remark}
According to Theorem \ref{Th0}, for BD-LRPC codes with a code rate $R=\frac{k}{n}$, the unique decoding radius is given by $1-R-\epsilon$ with $\epsilon = \frac{u}{n}$ with a decoding failure rate $c_qq^{-u+1}$ for a positive integer $u\leq  \min\{m - 2(d-1)(r+1)r + r, ((d-1)n  - kd)r -1 \}$.
In this case, as $n$ approaches infinity, we see that $\epsilon = \frac{u}{n} \rightarrow 0$ for a fixed constant $u$, which requires  the column-to-row ratio $b=\frac{n}{m}\leq \frac{1}{2(d-1)(1-R-\epsilon)^2 n +(2d - 3)(1-R-\epsilon) + \epsilon} \rightarrow 0$ since $m \geq 2(d-1)(r+1)r - r + u$.
To the best of our knowledge, simple codes \cite{Silva_2010}, \cite{gaborit2017}, are the only instances of rank-metric codes that can allow efficient unique decoding of the code rate $R$ approaching 
the Singleton bound $1-\rho$. Theorem \ref{Th0} provides a family of $\Fqm$-linear rank-metric codes (i.e., LRPC codes of density $2$) that can be uniquely decoded and have a code rate approaching 
the Singleton bound.
Assuming Conjecture \ref{con_d>2} is valid, we can extend this family of $\Fqm$-linear rank-metric codes to the larger family of BD-LRPC codes.
\end{remark}
    
    In Theorem \ref{Th0} we discussed the decoding radius of the BD-LRPC codes for the case $t= (d-1)r$ based on Corollary \ref{cor:Pr(C_k^u)} in Section \ref{sec:ExpRowSub}. 
    Experimental results show that, for a smaller integer $t \geq \lceil \frac{(d-1)r}{u} \rceil + 1$, the probability of a successful expansion $\calV_{\alpha,t}.\calS = \calV_{\alpha, d+t-1}.\calE$ is close to that for $t= (d-1)r$ as well.
	In the context of decoding BD-LRPC codes, it is of more interest to consider a smaller integer $t$, which in turn allows for a smaller integer $m$ by the condition $m\geq (2(d+t)-3)r$ given by (\ref{eq:m>=2(d+t)r-3r}). \cn
    This motivates us to investigate the probability of a successful expansion for a small integer $t \approx \lceil \frac{(d-1)r}{u}\rceil +1 $ in the last part of Section \ref{sec:ExpRowSub}. 

\begin{remark}
    In Corollary \ref{prop:Prob_t} in Section \ref{sec:ExpRowSub} we show that, for $d=2$ the probability that $M_t(Z^{(1)}, A)$ has rank $r$ for $t = \lceil \frac{r}{u}\rceil  + 1$ is lower bounded by $1-\frac{q^{-u/2} + q^{-u+1}}{q-1}$.
    Combining the probability of $\Rank(X_1)=r$ as discussed in Theorem \ref{Th0}, we have 
    \begin{align*}
        &{\rm Prob}(\calV_{\alpha,t}.\calS = \calV_{\alpha,t}.\calE) \geq 1-\frac{q^{-u/2} + q^{-u+1}+q^{-u}}{q-1} \\  & \approx 1 - \frac{q^{-u/2}}{q-1},
    \end{align*}
    for
    $t \geq \lceil \frac{r}{u}\rceil + 1.$
    Similarly to Theorem \ref{Th0}, the BD-LRPC codes can be uniquely decoded with a probability $1-q^{-\frac{u}{2}}$ when
    $$
    \begin{array}{c}
         \frac{u}{2} \leq  \min\left\{ m -  \left(\frac{2r}{u} + 3\right)r, ((d-1)n - kd)r \right\},
    \end{array}
    $$
    where the first term comes from substituting the lower bound of $t$ in (\ref{eq:m>=2(d+t)r-3r}) for $d=2$.
    In order to have a negligible failure probability in Phase $2$ compared to the failure probability of the expansion, we need
    $$
    \begin{array}{c}
        m \geq \left(\frac{2r}{u}+ 3 \right)r + \lceil \frac{u}{2} \rceil.
         \end{array}
    $$  
    Below we shall discuss the column-to-row ratio $b=n/m$, for which we only need to focus on the lower bound on $m$ in terms of $r, u$.
    Substituting $\rho=\frac{r}{n}$ and $\epsilon = \frac{u}{n}$ gives
    \begin{equation}\label{eq:n/m_asymptotic_2}
    \begin{array}{c}
        m \geq \left(\frac{2 \rho}{\epsilon}+ 3 \right)\rho n + \frac{\epsilon n}{2}.
    \end{array}
    \end{equation} 
    From \eqref{eq:n/m_asymptotic_2} we see that 
    the column-to-row ratio 
    \begin{equation}\label{bBound}
        b=\frac{n}{m} \leq \frac{2\epsilon}{2(2 \rho + 3 \epsilon) \rho + \epsilon^2} = f_\rho(\epsilon), 
    \end{equation}
    which indicates that $b$ can be a constant $b \leq f_\rho(\epsilon)$ as $n$ approaches infinity.
    Moreover, a closer look at the function $f_\rho(\epsilon)$ shows that it takes largest value at $\epsilon = 2\rho$ and it is monotonically increasing over the interval $(0, 2\rho]$.
    Given a constant $b=\frac{n}{m}$ as $n\rightarrow \infty$ and a value $\rho\in (0,1)$, 
    the BD-LRPC codes can uniquely decode errors of rank $r=\rho n$ with high probability, given that its code rate is upper-bounded by 
    $$
        R \leq 1 - \rho - \epsilon.
    $$
    Furthermore, in order to have $\rho \geq \frac{1}{2}(1-R)$
    we need that $\epsilon \leq \frac{1}{2}(1-R)$ which results into
    $$
        \rho \geq \frac{1-R}{2} \geq \epsilon.
    $$ 
    Let $\epsilon_0 < \rho < 2\rho$, we can fix a value $b_0 = f_\rho(\epsilon_0).$
    For any value of $\epsilon$ in the interval $ [\epsilon_0, \rho]$ the inequality in \eqref{bBound} will be satisfied when $b \leq b_0.$

    The largest value of $b$ that can be considered is obtained when $\rho = \epsilon = \frac{1}{2}(1-R)$ and corresponds to $b = f_\rho(\rho)= \frac{2}{11 \rho}.$
    For this specific case, we have that $\epsilon$ must be equal to $\rho = \frac{1}{2}(1-R).$
    For lower values of $b$, $\epsilon$ can be chosen into a larger interval $[f_\rho^{-1}(b), \rho)$, which implies $\rho > \frac{1}{2}(1-R) $.

    That is to say, the BD-LRPC codes can have a decoding radius beyond $\frac{1}{2}(1-R)$ when we take a constant $b< \frac{4}{11(1-R)}
    $ and $\rho \geq \frac{1-R}{2} \geq \epsilon$ as $n$ approaches infinity. Hence these codes serve as another solution to the open problem proposed in \cite{Xing-Yuan2018}. 
    Similar results can be obtained for the case $d>2$ under the assumption that Conjecture \ref{con_d>2} holds.
\end{remark}

\color{black}

\section{Successful subspace expansion.}\label{sec:Successful}

For BD-LRPC codes, assume $\calS \subseteq \Fqm$ is the subspace generated by the syndrome $\bs=\be H^\intercal$ and $\calV_{\alpha,d}.\calE \subseteq \Fqm$ is the product space between the parity-check matrix support $\calV_{\alpha,d}$ and the error support $\calE.$
In the previous section, we briefly pointed out that we can 
expand $\calS$ and $\calV_{\alpha,d}.\calE$ by 
the product with $\calV_{\alpha,t}$, for a certain integer $t\geq 1$, 
which leads to 
$\calV_{\alpha,t}.\calS = \calV_{\alpha,d + t -1}.\calE$ with a high probability.
In this section, we shall intensively discuss the probability of such a successful expansion.

Consider the space $\calE = \Span{\Fq}{\be}$ with a basis ${\bm\varepsilon} = (\varepsilon_1, \ldots, \varepsilon_r)$ and
$\calS =\Span{\Fq}{\bs}$, where $\bs=(s_1,\dots, s_{n-k})$ is the syndrome. Since $\calS$ is a subspace of the product space $\calV_{\alpha,d}.\calE$, we have
$$
\calS \subseteq \calV_{\alpha,d}.\calE = \calE + \alpha \calE  + \ldots + \alpha^{d-1} \calE .   
$$
Thus  each $s_i \in \calS$ can be written as $s_i = \sum_{j=1}^d x_{i,j}\alpha^{j-1}$, where $x_{i,j}\in \calE$ for $j=1,\dots, d$.
Denote by $\ba_d$ the vector $\ba_d = (1,\alpha, \ldots, \alpha^{d-1})$, which is a basis of $\calV_{\alpha, d}$ when $\dim(\calV_{\alpha, d}) =d$\cn.  Then we have $s_i = (x_{i,1}, \ldots, x_{i,d}) \ba_d^\intercal.$
Consider the matrix $X \in \calE^{(n-k) \times d}$ whose $i$-th row is given by $(x_{i,1}, \ldots, x_{i,d})$, where $1\leq i\leq n-k$. This gives 
$$
\bs^\intercal
=
X \ba_d^\intercal = \bx_1^{\intercal} + \alpha \bx_2^{\intercal} + \dots + \alpha^{d-1} \bx_d^{\intercal},
$$ where
$\bx_j=(x_{1,j}, \dots, x_{n-k,j}) \cn \in \calE^{n-k}$ and $\bx_j^\intercal$ is the $j$-th column of $X$.
We can expand each column $\bx_j^\intercal$ to the matrix $X_j \in \Fq^{(n-k) \times r}$ such that 
$
X_j \bee^{\intercal} = \bx_j^\intercal.
$
The syndrome $\bs$ can then be obtained as 
\begin{equation}\label{eq:SyndromeSum}
\bs^\intercal
=
X_1 \bee^\intercal + \alpha X_2 \bee^\intercal + \ldots + \alpha^{d-1} X_{d} \bee^\intercal. 
\end{equation}
The vector $
\ba_d \otimes \bee = 
(\bee, \alpha \bee, \ldots, \alpha^{d-1} \bee) \in \Fqm^{dr},
$ is always a generator of $\calV_{\alpha,d}.\calE$
implying
$\calV_{\alpha,d}.\calE = \Span{\Fq}{\ba_d \otimes \bee}.$ 
When $\dim(\calV_{\alpha,d}.\calE) = rd,$ the vector $\ba_d \otimes \bee$ is also a basis. \cn Using these notations we have 
\begin{equation}\label{eq:SyndromeAsMatrix}
\bs^\intercal
=
\begin{pmatrix}
X_1,  \ldots,  X_d 
\end{pmatrix}
(\ba_d \otimes \bee)^\intercal.
\end{equation}
This equation describes the coordinates of the syndrome with respect to the generator $\ba_d \otimes \bee$ of $\calV_{\alpha,d}.\calE$.
When $\dim(\calS) = \dim(\calV_{\alpha,d}.\calE) = dr$, 
the matrix $(X_1,\ldots, X_d)$ induces a one-to-one correspondence between $\bs$ and $\ba_d \otimes \bee$, thus $\bee$ can be uniquely determined.
\cn

\begin{remark}
	All the discussion above can be applied in the same way for a generic LRPC code.
	Let $H \in \Fqm^{(n-k)\times n}$ be a parity-check matrix such that $ \calH = \Span{\Fq}{H} = \Span{\Fq}{h_1, \ldots, h_d}.$
	If we consider the ordered basis $\mathbf{h}_d = (h_1, \ldots, h_d)$ for the support of $H$ and $\bee = (\varepsilon_1, \ldots, \varepsilon_r)$ an ordered basis of the support of the error, we can represent the syndrome $\bs = \be H^\intercal$ as a matrix over the basis $(\bb_d \otimes \bee)$ like we did in (\ref{eq:SyndromeAsMatrix}), i.e.
	$$
	\bs^{\intercal}=
	\begin{pmatrix}
	X_1,  \ldots,  X_d 
	\end{pmatrix}
	(\mathbf{h}_d \otimes \bee)^\intercal.
	$$
	Similarly, the components of the vector $(\mathbf{h}_d \otimes \bee)$ are a generator of the product space $\calH.\calE$ and , when the product space has dimension $dr$, the vector $(\mathbf{h}_d \otimes \bee)$ constitutes an ordered basis of $\calH.\calE.$
	
\end{remark}
\cn

Consider the cases where $\calS \subsetneq \calV_{\alpha,d}.\calE$ and start with the first expansion.
The expansion is given by $\calV_{\alpha,2} \calS = \calS + \alpha \calS \subseteq \calV_{\alpha,d+1}.\calE.$
Note that, since $\calS$ is the support of $\bs$, the support of $\alpha \bs$ is $\alpha \calS.$
As in (\ref{eq:SyndromeSum}), we can express $\alpha \bs$ as
\begin{equation*}\label{eq:AlphaSyndromeSum}
\alpha \bs^\intercal 
=
\alpha X_1 \bee^\intercal + \alpha^2 X_2 \bee^\intercal + \ldots + \alpha^d X_d \bee^\intercal 
\end{equation*}
Denote $\ba_{d+1} = (1,\alpha, \ldots, \alpha^{d})$. Then, if $\dim(\calV_{\alpha,d+1}.\calE) = r(d+1)$, \cn
a basis of $\calV_{\alpha,d+1}.\calE$ can be given by $\ba_{d+1} \otimes \bee = (\ba_d \otimes \bee, \alpha^d \bee) \in \Fqm^{(d+1)r}.$
Thus we have
\begin{equation}\label{eq:AlphaSyndromeAsMatrix}
\alpha \bs^\intercal
=
\begin{pmatrix}
\bzero & X_1 & \ldots & X_d 
\end{pmatrix}
(\ba_{d+1} \otimes \bee)^\intercal.
\end{equation}
Combining (\ref{eq:SyndromeAsMatrix}) and (\ref{eq:AlphaSyndromeAsMatrix}) gives the system
\begin{equation}\label{eq:Exp2AsMatrix}
\begin{pmatrix}
\bs^\intercal \\
\alpha \bs^\intercal
\end{pmatrix}
=
\begin{pmatrix}
X_1 & X_2 & \ldots & X_d & \bzero \\
\bzero & X_1 & X_2 & \ldots & X_d 
\end{pmatrix}
(\ba_{d+1} \otimes \bee)^\intercal.
\end{equation}
We can similarly extend \eqref{eq:Exp2AsMatrix}  for the general expansion $\calV_{\alpha,t}. \calS = \calV_{\alpha,d+t-1}.\calE.$
In this case the coordinates of $\ba_{d+t} \otimes \bee$ generate the space $\calV_{d+t-1}.\calE$ while the coordinates of $(\ba_t \otimes \bs)^\intercal = (\bs, \alpha \bs, \ldots, \alpha^{t-1} \bs)^\intercal$ can be expressed as
\begin{equation} \label{eq:ExpAsMatrix}
\begin{pmatrix}
\bs^\intercal \\ \alpha \bs^\intercal \\ \vdots \\ \alpha^{t-1} \bs^\intercal
\end{pmatrix}
=
\begin{pmatrix}
X_1 & X_2 & \cdots & X_d     & \bzero   & \cdots & \bzero \\
\bzero   & X_1 & X_2 & \cdots & X_d & \cdots & \bzero \\
\vdots   &  & \ddots & \ddots  &  & \ddots & \vdots \\
\bzero   & \cdots &  \bzero & X_1  & X_2 & \cdots & X_d 
\end{pmatrix}
\begin{pmatrix}
\bee^\intercal \\ \alpha \bee^\intercal \\ \vdots \\ \alpha^{d+t-2} \bee^\intercal 
\end{pmatrix}.
\end{equation}

The above expansion is considered successful when
$\calV_{\alpha,t}. \calS = \calV_{\alpha, d+t-1}.\calE$, which
can be stated in terms of generating sets as $$\Span{\Fq}{\ba_t \otimes \bs} = \Span{\Fq}{\ba_{d+t-1} \otimes \bee}.$$
In order to ensure that the above equality holds, we need to consider the big matrix in (\ref{eq:ExpAsMatrix}).
Let $M_t$ be the $(n-k)t \times (d+t-1)r$ matrix on the right hand side of (\ref{eq:ExpAsMatrix}).
If $M_t$ has full rank $(d+t-1)r$, then there will be an invertible square sub-matrix $\hat{M_t}$ of order $(d+t-1)r$ in $M_t$ and a corresponding sub-vector $\hat{\bs} \in \Fqm^{(d+t-1)r}$ such that $\hat{\bs}^\intercal = \hat{M_t} (\ba_{d+t-1} \otimes \bee)^\intercal.$
This implies 
\begin{equation}\label{eq:se}
\hat{M_t}^{-1}\hat{\bs}^\intercal  =  (\ba_{d+t-1} \otimes \bee)^\intercal.
\end{equation}
That is to say, all the entries of $(\ba_{d+t-1} \otimes \bee)$ can be obtained as $\Fq$-linear combinations of entries in $\hat{\bs}.$
In this case, we have that $\Span{\Fq}{\ba_{d+t-1} \otimes \bee} \subseteq \Span{\Fq}{\hat{\bs}}$ 
while $\Span{\Fq}{\hat{\bs}} \subseteq \Span{\Fq}{\ba_t\otimes\bs},$
from which $\Span{\Fq}{\ba_{d+t-1} \otimes \bee} = \Span{\Fq}{\ba_t\otimes\bs}.$ Therefore, 
the successful expansion $\calV_{\alpha,t} \calS = \calV_{\alpha,d+t-1}.\calE$ can be achieved if and only if $M_t$ has full rank $(d+t-1)r$, which is only possible when $(n-k)t\geq (d+t-1)r$.
\color{black}
Observe that one necessary (but not sufficient) condition for $M_t$ to be of full rank $(d+t-1)r$ is $\Rank(X_1) = \Rank(X_d) = r.$ Below we shall investigate the necessary and sufficient 
condition for $\Rank(M_t) = (d+t-1)r$.

\smallskip

Assume $\Rank(X_1) = r$. We can rearrange the rows of $X_1$ such that $X_1 \rightarrow \begin{psmallmatrix}
Y_1 \\ Z_1
\end{psmallmatrix}$ with $\Rank(Y_1)=r$. Furthermore, we can rearrange the rows of $M_t$ using a permutation matrix $P$ to obtain  the matrix
\begin{equation}\label{eq:Mtd_stairs}
P M_t
=
\begin{pmatrix}
Y_1 & Y_2 & \cdots & Y_d       &     &        &     \\
& Y_1 & Y_2    & \cdots    & Y_d &        &     \\
&     & \ddots & \ddots    &     & \ddots &     \\
&     &        & Y_1       & Y_2 & \cdots & Y_d \\
Z_1 & Z_2 & \cdots & Z_d       &     &        &     \\
& Z_1 & Z_2    & \cdots    & Z_d &        &     \\
&     & \ddots & \ddots    &     & \ddots &     \\
&     &        & Z_1       & Z_2 & \cdots & Z_d 
\end{pmatrix},
\end{equation}
where $Y_i \in \Fq^{r\times r}$, $Z_i \in \Fq^{u \times r}$ where $u=n-k-r$ \cn and $\Rank(Y_1)=r$.
Since $Y_1$ is invertible, there exist square matrices  $A_1, \ldots, A_{d-1}$ such that $Y_1 A_{i} = -Y_{i+1}$ for $i=1, \dots, d-1$.
Using the matrices $A_i$ for $1\leq i<d$ and elementary rows and columns operations\cn,
we can transform $PM_t$ to
\begin{equation}\label{eq:Mtd_1}
M_t^{(1)}
=
\begin{pmatrix}
Y_1 &     	\bzero	&    \cdots    &     \bzero      	 &     &        &     \\
& Y_1 		& Y_2    & \cdots    	 & Y_d &        &     \\
&     		& \ddots & \ddots    	 &     & \ddots &     \\
&     		&        & Y_1       	 & Y_2 & \cdots & Y_d \\
& Z_1^{(1)} & \cdots & Z_{d-1}^{(1)} &     &        &     \\
& Z_1 		& Z_2    & \cdots    	 & Z_d &        &     \\
&     		& \ddots & \ddots    	 &     & \ddots &     \\
&     		&        & Z_1       	 & Z_2 & \cdots & Z_d 
\end{pmatrix},
\end{equation}
where $Z_{i}^{(1)} = Z_1 A_i + Z_{i+1}.$
We can continue reducing the second block with the same matrices $A_1, \ldots, A_{d-1}$.
This will give us the further reduction
\begin{equation}\label{eq:Mtd_2}
M_t^{(2)}
=
\begin{pmatrix}
Y_1 &    \bzero &        \cdots		&    \bzero    &     			 &         &     \\
& Y_1 &      \bzero  		&    \cdots    &    \bzero 			 &         &     \\
&     & \ddots      & \ddots   &     & \ddots &     \\
&     &        		& Y_1    & Y_2 			 & \cdots  & Y_d \\
&     & Z_1^{(2)} 	& \cdots & Z_{d-1}^{(2)} &     	   &     \\
&     & Z_1^{(1)}   & \cdots & Z_{d-1}^{(1)} &         &     \\
&     & \ddots 		& \ddots &     			 & \ddots  &     \\
&     &        		& Z_1    & Z_2 			 & \cdots  & Z_d 
\end{pmatrix},
\end{equation}
where $Z_{i}^{(2)} = Z_1^{(1)} A_i + Z_{i+1}^{(1)}$ for $1\leq i<d$ and $Z_{d}^{(1)} = \bzero.$
Let us define $Z^{(j)} \in \Fq^{u \times (d-1)r}$ as $Z^{(j)} = ( Z_1^{(j)} \cdots Z_{d-1}^{(j)})$ and the square matrix $A \in \Fq^{(d-1)r \times (d-1)r}$ as
\begin{equation}\label{eq:matrix_A}
A =
\begin{pmatrix}
A_1 & A_2 & \cdots & A_{d-2} & A_{d-1} 	\\
I_r &     &        &         &  \bzero  \\
& I_r &  	   &         &  \bzero  \\
&	  & \ddots &         &  \vdots  \\
&     &		   & I_r     & 	\bzero	
\end{pmatrix}.
\end{equation}
Observe that $$Z^{(2)} = ( Z_1^{(2)} \cdots Z_{d-1}^{(2)})= ( Z_1^{(1)} \cdots Z_{d-1}^{(1)}) A = Z^{(1)}A.$$ More generally,
we have $Z^{(j+1)} = Z^{(j)} A$, implying that 
$Z^{(j+1)} = Z^{(1)}A^j$ for $j=1, 2,\dots, t-2$. 
Therefore, after iterating the reduction step $t$ times, we obtain the matrix
\begin{equation}\label{eq:Mtd_t}
M_t^{(t)}
=
\begin{pmatrix}
Y_1 &     &        &     &     			  \\
& Y_1 &        &     &     			  \\
&     & \ddots &     &     			  \\
&     &        & Y_1 &  			  \\
&     &        &     & Z^{(1)}A^{t-1} \\
&     &        &     & Z^{(1)}A^{t-2} \\
&     &        &     & \vdots         \\
&     &        &     & Z^{(1)} 	  
\end{pmatrix}.
\end{equation}

Denote \begin{equation}\label{eq:MtZA}
M_t(Z^{(1)}, A) =   \begin{pmatrix}
Z^{(1)} \\ Z^{(1)} A \\ \vdots \\ Z^{(1)} A^{t-1}
\end{pmatrix}
\in \Fq^{ut \times r(d-1)}.
\end{equation}
From \eqref{eq:Mtd_t}
we have
\begin{align}
    \Rank(M_t^{(t)}) & = t \Rank(Y_1) + \Rank(M_t(Z^{(1)},A)) \notag \\ 
    &=  rt + \Rank(M_t(Z^{(1)},A)).
\end{align}
The operations we applied to get $M_t^{(t)}$ from $M_t$ are all linear and invertible, therefore $\Rank(M_t) = \Rank(M_t^{(t)})$.
This leads to the following result.
\begin{proposition}\label{prop:0}
	Let $M_t$ be the matrix in \eqref{eq:ExpAsMatrix} and  $M_t(Z^{(1)},A)$ be the matrix defined in \eqref{eq:MtZA}. If $\Rank(X_1)=r$ and  $\Rank(M_t(Z^{(1)},A)) = (d-1)r$ then the equality $\calV_{\alpha, t}.\calS = \calV_{\alpha, d+t-1}.\calE$ holds.
\end{proposition}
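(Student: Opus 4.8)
The plan is to reduce the statement to the rank bookkeeping already carried out in \eqref{eq:Mtd_stairs}--\eqref{eq:Mtd_t} together with the span equivalence recorded around \eqref{eq:se}. First I would note that the hypothesis $\Rank(X_1)=r$ is precisely what legitimises the whole reduction chain: it allows one to choose a row permutation $P$ so that, after rearranging, the top-left $r\times r$ block $Y_1$ of $PM_t$ in \eqref{eq:Mtd_stairs} is invertible. Invertibility of $Y_1$ is exactly what is needed for the matrices $A_1,\dots,A_{d-1}$ with $Y_1A_i=-Y_{i+1}$ to exist, hence for the matrix $A$ of \eqref{eq:matrix_A} to be well defined, and for each of the elementary row and column operations that transform $PM_t$ into $M_t^{(1)},M_t^{(2)},\dots,M_t^{(t)}$ to be performed.

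Next, because every operation in the passage from $M_t$ to $M_t^{(t)}$ is an invertible row or column operation, $\Rank(M_t)=\Rank(M_t^{(t)})$. Reading off the block structure of $M_t^{(t)}$ in \eqref{eq:Mtd_t} --- it consists of $t$ diagonal copies of the invertible block $Y_1$ occupying a set of rows and columns disjoint from the single stacked block $M_t(Z^{(1)},A)$ of \eqref{eq:MtZA} --- gives
\[
\Rank(M_t)=\Rank(M_t^{(t)})=t\,\Rank(Y_1)+\Rank\!\big(M_t(Z^{(1)},A)\big)=rt+\Rank\!\big(M_t(Z^{(1)},A)\big).
\]
Invoking the second hypothesis $\Rank(M_t(Z^{(1)},A))=(d-1)r$ then yields $\Rank(M_t)=rt+(d-1)r=(d+t-1)r$, i.e. the $(n-k)t\times(d+t-1)r$ matrix $M_t$ has full column rank.

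Finally I would feed this back into the argument preceding \eqref{eq:se}. Full column rank of $M_t$ means there is an invertible $(d+t-1)r\times(d+t-1)r$ submatrix $\hat{M_t}$ obtained by selecting $(d+t-1)r$ rows of $M_t$, with corresponding subvector $\hat{\bs}$ of the entries of $\ba_t\otimes\bs$, such that $\hat{M_t}^{-1}\hat{\bs}^\intercal=(\ba_{d+t-1}\otimes\bee)^\intercal$; thus every component of $\ba_{d+t-1}\otimes\bee$ is an $\Fq$-linear combination of components of $\ba_t\otimes\bs$, so $\Span{\Fq}{\ba_{d+t-1}\otimes\bee}\subseteq\Span{\Fq}{\ba_t\otimes\bs}$. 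The reverse inclusion is immediate from \eqref{eq:ExpAsMatrix}, which expresses each component of $\ba_t\otimes\bs$ as an $\Fq$-linear combination of the components of $\ba_{d+t-1}\otimes\bee$. Hence $\calV_{\alpha,t}.\calS=\Span{\Fq}{\ba_t\otimes\bs}=\Span{\Fq}{\ba_{d+t-1}\otimes\bee}=\calV_{\alpha,d+t-1}.\calE$, which is the claim.

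The only delicate point --- the ``main obstacle'', such as it is --- lies in the first paragraph: one must check that the single rank assumption on $X_1$ really suffices to keep the entire reduction chain \eqref{eq:Mtd_stairs}--\eqref{eq:Mtd_t} valid (in particular that the $A_i$ and $A$ are well defined irrespective of how the remaining rows are permuted), and one must be careful that it is the rank of the original $M_t$, not merely of the reduced $M_t^{(t)}$, that enters the span argument around \eqref{eq:se}; this is legitimate exactly because rank is preserved by the operations performed. Everything else is bookkeeping.
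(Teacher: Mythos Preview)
Your proposal is correct and follows essentially the same approach as the paper: the paper also obtains $\Rank(M_t)=\Rank(M_t^{(t)})=tr+\Rank(M_t(Z^{(1)},A))$ from the invertible reductions \eqref{eq:Mtd_stairs}--\eqref{eq:Mtd_t}, and then invokes the full-rank argument around \eqref{eq:se} to conclude the span equality. You have simply written out more explicitly the role of the hypothesis $\Rank(X_1)=r$ in enabling the reduction chain, which the paper leaves implicit in the phrase ``Assume $\Rank(X_1)=r$'' preceding \eqref{eq:Mtd_stairs}.
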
 

Note that it is still possible that $\calV_{\alpha, t}.\calS = \calV_{\alpha, d+t-1}.\calE$ with $\dim(\calV_{\alpha, d+t-1}.\calE) < (d+t-1)r.$
In this case, in order to have a successful expansion, it will be sufficient that $\Rank(M_t) = \dim(\calV_{\alpha, d+t-1}.\calE).$
For the rest of this paper we will consider the worst case where $\dim(\calV_{\alpha, d+t-1}.\calE) = (d+t-1)r.$
This means we need to estimate the probability that $\Rank(M_t) = (d+t-1)r$ which, thanks to
Proposition \ref{prop:0}, can be transformed to establish the probability that $\Rank(X_1) = r$ and that $\Rank(M_t(Z^{(1)},A)) = (d-1)r.$
In our analysis, we will consider the matrices $X_1, \ldots, X_d$ as independently chosen from a uniform distribution.
This assumption is true for certain parameters as in the following lemma.
\cn

%
In \cite[Th.\,11]{renner20} and later in \cite[Prop.\,3]{Burle23} the authors established the probability that the syndrome support is equal to the product space $\Span{\F_q}{H}.\calE$.
Concretely, they proved that the entries of the syndrome can be considered independently and uniformly chosen from the product space.   
For the sake of completeness, we report here in our notation the result in \cite{Burle23}, which shows that $X_1, \ldots, X_d$ can be deemed as matrices independently and uniformly chosen from $\Fq^{(n-k)\times r}$. 
\begin{lemma}\label{lm:SyndromeUniform}
	Consider two spaces $\calE$ and $\calV_{\alpha,d}$ having basis $\bm \varepsilon = (\varepsilon_1, \ldots, \varepsilon_r)$ and $\bm a_d = (1, \ldots, \alpha^{d-1})$ respectively.
	Let $\be = \bm \varepsilon E$ for $E\in \F_q^{r\times n}$ an error of rank $r$ and let $H \in \Fqm^{(n-k) \times n}$ be a matrix  such that its rows $\bm h_{1}, \ldots \bm h_{n-k}$ can be written as $\bm h_i = \bm a_{d} M_i$ for $M_i\in \F_q^{d\times n}$, \,$1\leq i\leq n-k$.
	If the entries of $E$ and $M_i$ are independently and uniformly distributed random variables over $\Fq$, then the entries of the vector 
	$$
	\bs = \bm \varepsilon E \begin{pmatrix}M_1^\intercal & \ldots & M_{n-k}^\intercal \end{pmatrix} \begin{pmatrix}
	\bm a_d^\intercal &  & \\
	& \ddots & \\
	& & \bm a_d^\intercal,
	\end{pmatrix}
	$$
	are independently and uniformly distributed random variables over $\calV_{\alpha,d}.\calE.$
\end{lemma}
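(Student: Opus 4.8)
The plan is to condition on the value of the error $\be$ and then reduce the statement to the elementary fact that a surjective $\Fq$-linear map between finite $\Fq$-vector spaces carries the uniform distribution on its domain to the uniform distribution on its codomain (all of its fibres are cosets of the kernel, hence of equal size).

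First I would fix $\be = \bee E$ and use the rank hypothesis: since $\be$ has rank $r$, its coordinates $e_1,\dots,e_n$ span $\calE$ over $\Fq$, i.e.\ $\Span{\Fq}{\be}=\calE$. (If instead one starts from $E$ with i.i.d.\ uniform entries, this step amounts to conditioning on the event $\Rank(E)=r$.) With $\be$ fixed, the $i$-th coordinate of the syndrome is $s_i=\be\,\bm h_i^\intercal=\bm a_d\,(M_i\be^\intercal)$, which is a function of $M_i$ alone; hence the independence of $M_1,\dots,M_{n-k}$ immediately yields the conditional independence of $s_1,\dots,s_{n-k}$ given $\be$, and it remains only to identify the conditional law of a single $s_i$.

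For that, I would factor the map $M_i\mapsto s_i$ through $\calE^d$, as $\Fq^{d\times n}\xrightarrow{\Phi}\calE^d\xrightarrow{\psi}\calV_{\alpha,d}.\calE$, where $\Phi$ sends $M_i$ to the $d$-tuple of products of its rows with $\be$ (each such product lies in $\calE$ precisely because $\Span{\Fq}{\be}=\calE$) and $\psi(c_1,\dots,c_d)=\sum_{j=1}^{d}\alpha^{j-1}c_j$. Both maps are $\Fq$-linear; $\Phi$ is surjective because the single-row map $\Fq^n\to\calE,\ m\mapsto\sum_k m_k e_k$, is surjective and the $d$ rows of $M_i$ vary independently, and the image of $\psi$ equals $\calE+\alpha\calE+\cdots+\alpha^{d-1}\calE=\calV_{\alpha,d}.\calE$ by the very definition of the product of subspaces. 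Thus $\psi\circ\Phi$ is a surjective $\Fq$-linear map onto $\calV_{\alpha,d}.\calE$, so for $M_i$ uniform over $\Fq^{d\times n}$ the value $s_i$ is uniform over $\calV_{\alpha,d}.\calE$. Since this holds for every $i$ and the conditional distribution obtained does not depend on which rank-$r$ vector $\be$ was fixed, the coordinates $s_1,\dots,s_{n-k}$ are (unconditionally) i.i.d.\ uniform over $\calV_{\alpha,d}.\calE$, as claimed.

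The argument is essentially routine; the only genuinely delicate point is the rank hypothesis on $\be$, which cannot be dropped: if $\Span{\Fq}{\be}$ were a proper subspace of $\calE$, the same computation would give uniformity only over the strictly smaller space $\calV_{\alpha,d}.\Span{\Fq}{\be}$. This reproduces, in the present notation, the arguments of \cite[Th.\,11]{renner20} and \cite[Prop.\,3]{Burle23}.
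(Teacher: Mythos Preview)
Your proof is correct and self-contained; the conditioning on a fixed rank-$r$ error followed by the surjective $\Fq$-linear factorisation $M_i\mapsto M_i\be^\intercal\mapsto \bm a_d(M_i\be^\intercal)$ is exactly the right mechanism, and your remark on the necessity of the rank hypothesis is accurate. The paper itself gives no proof but simply defers to \cite[Prop.~3]{Burle23}, so your argument---which you already identify as reproducing that reference---is precisely what the paper invokes.
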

\begin{proof}
	Refer to the proof of \cite[Proposition.3]{Burle23}.
\end{proof}
As a direct consequence, when both the parity-check matrix and the error are chosen independently and uniformly at random from $\calV_{\alpha,d}^{(n-k)\times n}$ and $\calE^n$, respectively, the syndrome of a BD-LRPC code can be described as
$$
\bs^\intercal = \begin{pmatrix}
X_1 & \ldots & X_{n-k}
\end{pmatrix} (\bm a_d \otimes \bm \varepsilon)^\intercal,
$$
where $X_i$ are independently and uniformly chosen from $\Fq^{(n-k) \times r}.$
\cn

We are left to investigate the probability of $\Rank(M_t(Z^{(1)},A)) = (d-1)r$ as characterized in Proposition \ref{prop:0}.
We will divide our discussion in two cases. For the case of $d=2$, the matrices $Z^{(1)}, A$ can be deemed 
as uniformly distributed, and the probability is analyzed in great detail in Section \ref{sec:ExpRowSub}; and for the cases that $d>2$,  we investigate the probability by both theoretic analysis and computational simulations.

\subsection{The case for $d=2$}\label{subsec:d=2}
It is worth noting that, in the special case $d=2$, the matrix $\begin{psmallmatrix}
A \\ Z^{(1)} 
\end{psmallmatrix},$ can be considered as uniformly drawn in $\Fq^{(r+u) \times r}.$  
Indeed, for $d=2$, we have that $Z^{(1)} = Z_1 A_1 + Z_2$ while $Z^{(i)} = Z^{(1)} A_1^{i-1}$ and $A = A_1.$
Assuming that the matrices $X_1,X_2$ derived from $\bs=\bs_1+\alpha \bs_2$ follow a uniform distribution in $\Fq^{(r+u) \times r}$, the matrices $Z^{(1)}=Z_1A_1+Z_2 \in\Fq^{u \times r}$ and $A_1 \in \Fq^{r \times r}$  can be deemed to follow a uniform distribution. 
As a matter of fact, assume $X_1$ satisfies the necessary condition $\Rank(X_1) = r.$
By $\begin{psmallmatrix}
Y_2 \\ Z_2
\end{psmallmatrix}
=P_0X_2
$
and
$A = -Y_1^{-1} Y_2$, we have
\begin{align}\label{eq:AZ_uniform}
    \begin{pmatrix}
    A \\ Z^{(1)}
    \end{pmatrix} 
     & =
    \begin{pmatrix}
    -Y_1^{-1} & \bzero \\
    - Z_1 Y_1^{-1} & I_{u}
    \end{pmatrix}
    \begin{pmatrix}
    Y_2 \\ Z_2
    \end{pmatrix} \notag \\
    & = 
    \begin{pmatrix}
    -Y_1^{-1} & \bzero \\
    - Z_1 Y_1^{-1} & I_{u}
    \end{pmatrix}
    P_0X_2.
\end{align} 
It is readily seen that 
$\begin{psmallmatrix}  A \\ Z^{(1)} \end{psmallmatrix}$ 
has a one-to-one correspondence to the uniformly distributed $X_2$. Therefore,  $\begin{psmallmatrix}  A \\ Z^{(1)} \end{psmallmatrix}$  also follows a uniform distribution.

A complete description of the probability that the rows of $M_t(Z^{(1)},A)$ span the whole space $\Fq^r$ in the case $d=2$ will be done in Section \ref{sec:ExpRowSub}.
Since the case $d=2$ covers all the LRPC codes of density $2$ we consider this case as the most important.
For $d>2$ the family of BD-LRPC codes represent a small fraction of all the possible LRPC codes of density $d$, especially for large value of $d,$ moreover the minimal value of the parameter $m$ becomes quickly impractical as $d$ grows.

\subsection{The case for $d>2, r=1$}\label{subsec:d>2}
In this case, the matrix $A$ in equation (\ref{eq:matrix_A}) becomes increasingly sparse as $d$ grows.
Consequently, the tools developed in Section \ref{sec:ExpRowSub} are not suitable to directly analyze this case. Nevertheless, we can provide a rigorous description for the sparsest case.

For simplicity of notation, within this section we denote $n = d-1.$
We can consider the case $r=1$ for which $A'$ is in the form
$$
A' = \begin{pmatrix}
a_{n-1} & a_{n-2} & \cdots & a_{1} & a_{0}\\
1   & & & & 0\\
& 1 &&& 0 \\
&& \ddots && \vdots \\
&&& 1&0 \\
\end{pmatrix}.
$$
As we will see in Lemma \ref{lm:BasisChangeInv}, for any $Z  \in \Fq^{u \times n}$ we have that the dimension of the space generated by the rows of $Z, ZA, \ldots, ZA^{n-1}$ is the same as the dimension of the space generated by the rows of $ZB(B^{-1} A' B)^i$ whenever $B$ is invertible.
If $Z$ is drawn from a uniform distribution so it is for $ZB.$
Take $B$ the invertible matrix having $1$ on its anti-diagonal and zero elsewhere. Then the matrix 
\begin{equation}\label{eq:comp_mat}
A  = B^{-1}A'B = \begin{pmatrix}

0& 1   & & \\
0&& 1 && \\
0&&& \ddots & \\
0&&& &1 \\
a_{0} & a_{1} & \cdots & a_{n-2} & a_{n-1}\\
\end{pmatrix}
\end{equation}
is the companion matrix for the polynomial $p_A(x) = x^{n} - a_{n-1} x^{n-1} - \cdots - a_0.$

Consider the vector $\bz = (z_0, \ldots, z_{n-1}).$ We can interpret $\bz$ as the coefficients of the polynomial $z(x) = z_{n-1} x^{n-1} + \ldots + z_0.$
Observe that the polynomial $z(x) x  \mod p_A(x)$ will have coefficient representation $\bz A$. In general $z(x) x^i$ will have coefficient representation given by $\bz A^i.$
The vector space $\Omega(\bz, A) \subseteq \Fq^n$ generated by $\bz, \bz A, \ldots, \bz A^{n-1}$ corresponds to the ideal generated by $z(x)$ in $\Fq[x]/p_A(x).$
This means that $\Omega(\bz,A)= \Fq^n$ if and only if $\gcd(z(x), p_A(x)) = 1.$
The probability of two monic polynomials in $\Fq[x]$ not both of degree zero to have no common divisor is $1- \frac{1}{q}$ \cite{Bennett07}.
From the proof in \cite{Bennett07} it is easy to see that this is also true if we consider the probability where one polynomial is monic of degree $n \geq 1$ and the other is a random polynomial (not necessarily monic) of degree less than $n$.

Consider $Z \in \Fq^{u \times n}$, and let $\Omega(Z,A) \subseteq \Fq^n$ be the space generated by the rows of $Z, ZA, \ldots, ZA^{n-1}.$
This space will correspond to the ideal of $\Fq[x]/p_A(x)$ generated by $z_1(x), \ldots, z_u(x)$ where $z_i(x)$ is the polynomial represented by the $i$-th row of $Z.$
The probability that $\Omega(Z,A) = \Fq^n$ correspond to the probability that $\gcd(p_A(x), z_1(x), \ldots, z_u(x)) = 1.$
The probability that $u+1$ monic polynomials of $\Fq[x]$ not all of degree zero have $\gcd$ equal to $1$ is estimated as $1-\frac{1}{q^u}$ \cite{Bennett07}.
Even in this case it is easy to see that the same result holds when we consider polynomials that are not necessarily monic or, as in our case, one polynomial is monic of degree $n \geq 1$ and the other polynomials are not necessarily monic of degree less than $n$.

For the uniformly distributed matrices $A$, Corollary \ref{cor:Pr(C_k^u)}
in Section \ref{sec:ExpRowSub} shows that the probability of $\Omega(Z,A) = \Fq^n$ is approximately $1-\frac{1}{(q-1)q^{u-1}}$.
It turns out that in the above extreme case, we obtain slightly better probability $1-\frac{1}{q^u}$.


\medskip

Besides the theoretical analysis above, computational simulation seems to confirm that intermediate cases follow a similar probability as the other two cases. However, providing a formal proof for those intermediate cases seems intractable. We consider it is an interesting open problem.

\cn

\section{Probability of Successful Expansions}\label{sec:ExpRowSub}

This section is dedicated to studying the
following problem, which is derived from Proposition \ref{prop:0} in Section \ref{sec:Successful} for successful expansions.
\begin{problem}\label{Pb:C_k}
   Let $u, r, k, t$ be positive integers with $k\leq r$.
   Determine the number of matrix pairs $(Z,A) \in \Fq^{u \times r} \times \Fq^{r \times r}$ such that the following matrix 
   \begin{equation}\label{eq:matrix_expansion}
   M_t(Z,A) = \begin{pmatrix}
       Z \\ ZA \\ \vdots \\ Z A^{t-1} 
   \end{pmatrix}
 \end{equation} has $\Rank(M_t(Z,A))=k$.
   Equivalently, define \begin{equation}\label{eq:Omega}
        \Omega_t(Z,A) = \RowSpan(Z)+ \dots + \RowSpan(Z A^{t-1}),
    \end{equation}  and
   \begin{equation}\label{eq:Ckurt}
       C_k^{(u,r,t)} = \left\{(Z,A) \in \Fq^{u \times r} \times \Fq^{r \times r} \mid \dim(\Omega_t(Z,A)) = k \right\}
   \end{equation}
   Determine the cardinality of the set $ C_k^{(u,r,t)}.$
\end{problem}
Note that the positive integer $k$ here  is independent of the code dimension in other sections. 
To the best of our knowledge, this counting problem was not present in literature except for the case $u = 1$ and $A$ being an invertible matrix.
From a purely mathematical perspective, we consider this counting problem to be interesting on its own. 
In the context of decoding BD-LRPC codes, we will investigate the value of $|C_k^{(u,r,t)}|$ for the case $t \geq r$ (as in Theorem \ref{th:|C_k^ur|})  and we will give a lower bound for the case $t=\lceil \frac{r}{u}\rceil + 1$ (as in Theorem \ref{th:|Gammak^t|}). 
The investigation is technical and we provide some auxiliary results first.
\subsection{Auxiliary Results}
Some properties of the subspace $\Omega_t(Z,A)$ are discussed below. 
\begin{lemma}\label{lm:calV_r-1}
    Let $\Omega_{t}(Z,A)$ be defined as in \eqref{eq:Omega}. Then, 
    \begin{enumerate}[label=(\roman*)]
        \item  it can be equivalently defined in a recursive way: $\Omega_{1}(Z,A) = \RowSpan(Z)$ 
        and \begin{equation}\label{eq:V_t+1}
        \Omega_{t+1}(Z,A) = \Omega_{t}(Z,A) + \Omega_{t}(Z,A)A  \text{ for any } t\geq 1,
    \end{equation} where the notation $\Omega_{t}(Z,A)A$ denotes the image of $\Omega_{t}(Z,A)$ under the linear map $R_A:\Fq^r \rightarrow \Fq^r$ given by $R_A(\bv) = \bv A$ for any $\bv \in \Fq^r$;
    \item $\Omega_{1}(Z, A) \subseteq \cdots \subseteq \Omega_t(Z, A)$ for any integer $t\geq 1$; moreover, let $d_i = \dim(\Omega_{i+1}(Z, A))-\dim(\Omega_i(Z,A))$ for $1\leq i<t$, then $ u \geq d_{1}\geq  \cdots \geq d_{t-1}$;
    \item if $\Omega_{t}(Z,A)=\Omega_{t+1}(Z,A)$ for certain integer $t$,  then $\Omega_{t}(Z,A) = \Omega_{t'}(Z,A)$ for any $t' \geq t+1.$
    Such an integer $t$ always exists and satisfies $t<r$.
    \end{enumerate}
\end{lemma}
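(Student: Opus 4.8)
The plan is to prove the three items in sequence, all resting on one elementary fact: right multiplication by $A$ carries the row span of a matrix onto the row span of the product, i.e.\ if $W=\RowSpan(M)$ with rows $m_1,\dots,m_s$, then $WA=\{wA\mid w\in W\}=\Span{\Fq}{m_1A,\dots,m_sA}=\RowSpan(MA)$. For (i) I would induct on $t$: the base case is the definition $\Omega_1(Z,A)=\RowSpan(Z)$, and if $\Omega_t(Z,A)=\sum_{i=0}^{t-1}\RowSpan(ZA^i)$, then the fact above gives $\Omega_t(Z,A)A=\sum_{i=0}^{t-1}\RowSpan(ZA^{i+1})=\sum_{i=1}^{t}\RowSpan(ZA^i)$, so $\Omega_t(Z,A)+\Omega_t(Z,A)A=\sum_{i=0}^{t}\RowSpan(ZA^i)=\Omega_{t+1}(Z,A)$. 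This is exactly the recursion \eqref{eq:V_t+1}, and it simultaneously confirms that the recursive description agrees with the defining sum \eqref{eq:Omega}.

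For (ii), the inclusions $\Omega_1(Z,A)\subseteq\cdots\subseteq\Omega_t(Z,A)$ follow from \eqref{eq:V_t+1}, since $\Omega_{i+1}(Z,A)=\Omega_i(Z,A)+\Omega_i(Z,A)A\supseteq\Omega_i(Z,A)$. Writing $d_i=\dim(\Omega_{i+1}(Z,A)/\Omega_i(Z,A))$, the bound $d_1\le u$ holds because the map $\Omega_1(Z,A)\to\Omega_2(Z,A)/\Omega_1(Z,A)$, $v\mapsto vA+\Omega_1(Z,A)$, is well defined and surjective (its image is the whole quotient because $\Omega_2(Z,A)=\Omega_1(Z,A)+\Omega_1(Z,A)A$), whence $d_1\le\dim\Omega_1(Z,A)=\dim\RowSpan(Z)\le u$. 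The core of the item is the monotonicity $d_{i+1}\le d_i$: the map $R_A$ descends to an $\Fq$-linear map $\bar R_A\colon\Omega_{i+1}(Z,A)/\Omega_i(Z,A)\to\Omega_{i+2}(Z,A)/\Omega_{i+1}(Z,A)$, $v+\Omega_i(Z,A)\mapsto vA+\Omega_{i+1}(Z,A)$, which is well defined since $\Omega_i(Z,A)A\subseteq\Omega_{i+1}(Z,A)$ and surjective since $\Omega_{i+2}(Z,A)=\Omega_{i+1}(Z,A)+\Omega_{i+1}(Z,A)A$; therefore $d_{i+1}=\dim(\im\bar R_A)\le\dim(\Omega_{i+1}(Z,A)/\Omega_i(Z,A))=d_i$. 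Chaining these inequalities yields $u\ge d_1\ge\cdots\ge d_{t-1}$.

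For (iii), suppose $\Omega_t(Z,A)=\Omega_{t+1}(Z,A)$. Then \eqref{eq:V_t+1} forces $\Omega_t(Z,A)A\subseteq\Omega_t(Z,A)$, so $\Omega_t(Z,A)$ is $R_A$-invariant, and an immediate induction via \eqref{eq:V_t+1} gives $\Omega_{t'}(Z,A)=\Omega_t(Z,A)$ for every $t'\ge t$. Existence: $\{\Omega_i(Z,A)\}_{i\ge1}$ is a non-decreasing chain of subspaces of $\Fq^r$, so $\dim\Omega_i(Z,A)$ is non-decreasing and bounded above by $r$, hence eventually constant; at the first index $t$ with $\dim\Omega_t(Z,A)=\dim\Omega_{t+1}(Z,A)$ the inclusion forces equality. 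For the bound, at this first stabilising index every earlier step strictly increases the dimension, so $\dim\Omega_t(Z,A)\ge\dim\Omega_1(Z,A)+(t-1)$; with $\dim\Omega_t(Z,A)\le r$ and $\dim\Omega_1(Z,A)\ge1$ (the case $Z=0$ being trivial) this gives $t\le r$, and in fact $t<r$ whenever $\dim\RowSpan(Z)\ge2$ or the limiting space $\Omega_t(Z,A)$ is a proper subspace of $\Fq^r$.

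I do not expect a genuine obstacle; the only step needing a moment's care is the monotonicity $d_{i+1}\le d_i$ in (ii), and the cleanest route is the surjectivity of the induced quotient map $\bar R_A$ as above, after which the rest is bookkeeping with the recursion from (i). One minor point to get right is the off-by-one in the stabilisation bound of (iii) in the extreme rank-one case $\dim\RowSpan(Z)=1$, where the chain can remain strictly increasing up to step $r$.
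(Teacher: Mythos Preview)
Your proof is correct and follows essentially the same approach as the paper: part (i) is identical, your quotient-map argument for the monotonicity $d_{i+1}\le d_i$ in (ii) is the clean dual of the paper's complement-subspace argument (the paper picks a direct complement $\calV_i$ with $\Omega_i=\Omega_{i-1}\oplus\calV_i$ and observes $\Omega_{i+1}=\Omega_i+\calV_iA$), and (iii) is the same stabilisation argument. You are also right to flag the edge case in (iii): the paper's own proof only establishes $t\le r$, not the strict inequality $t<r$ claimed in the statement, and your rank-one example shows $t=r$ can indeed occur.
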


\begin{proof}
(i) Given $\Omega_{t}(Z,A) = \Span{\Fq}{\bv_1, \ldots, \bv_{k_t}}$,  its image under the linear map $R_A$ is given by
$\Omega_{t}(Z,A)A = \Span{\Fq}{\bv_1 A, \ldots, \bv_{k_t} A}.$
By \eqref{eq:Omega} we see that $\Omega_{t}(Z,A) = \sum_{i=0}^{t-1}\RowSpan(ZA^i)$ which is equal to $\Span{\Fq}{\bz_j A^i \mid j \in [u], i \in  \{ 0, \ldots, t-1 \} }$, where $\bz_j$ corresponds to the $j$-th row of $Z$.
Then the subspace $\Omega_{t}(Z,A) + \Omega_{t}(Z,A)A$ is equal to $\Span{\Fq}{\bz_j A^i \mid j \in [u], i \in \{ 0, \ldots, t-1 \}} + \Span{\Fq}{\bz_j A^{i+1} \mid j \in [u], i \in \{ 0, \ldots, t-1 \}}$ which in turn is equal to 
$\Span{\Fq}{\bz_j A^i \mid j \in [u], i \in \{ 0, \ldots, t \}} = \Omega_{t+1}(Z,A).$    

(ii) The relation $\Omega_1(Z,A) \subseteq \cdots \subseteq \Omega_t(Z,A)$ follows directly from (i). Furthermore, 
for $i\geq 2$, there exists $\calV_i = \Span{\Fq}{\bv_1, \ldots, \bv_{d_i}}$ such that $\Omega_{i}(Z,A)$ can be decomposed in the direct sum $\Omega_{i}(Z,A) = \Omega_{i-1}(Z,A) + \calV_i.$
   Using the fact $\Omega_{i-1}(Z,A)A \subseteq \Omega_{i}(Z,A),$ we obtain
   \begin{align*}
       \Omega_{i + 1}(Z,A)  & = \Omega_{i}(Z,A) + \Omega_{i}(Z,A)A \\
                           & = \Omega_{i}(Z,A) + (\Omega_{i-1}(Z,A) + \calV_i) A \\
                           & = \Omega_{i}(Z,A) + \calV_i A.
    \end{align*}
    This implies that 
    $
        d_i = \dim(\calV_i) \geq \dim(\calV_iA ) \geq 
        d_{i+1}.
    $
    The last inequality comes from the fact that $\dim(\calV_{i+1}) = \dim( \calV_iA) - \dim(\Omega_{i}(Z,A) \cap \calV_i A).$

(iii)
    When $\Omega_{t}(Z,A) = \Omega_{t+1}(Z,A)$ holds, 
    from the recursive relation in (\ref{eq:V_t+1}) we have
    $$
        \Omega_{t+2}(Z,A) = \Omega_{t}(Z,A) + \Omega_{t}(Z,A)A = \Omega_{t+1}(Z,A).   
    $$
    This means that the chain stops to grow at $\Omega_{t}(Z,A)$.
    Thus we have  $\Omega_{t'}(Z,A) = \Omega_{t}(Z, A)$ for all $t'>t.$
    Finally,
    if $\dim(\Omega_1(Z,A)) \geq 1,$ then $\dim(\Omega_{t'}(Z,A)) \geq t'$ or there exist $1 \leq t < t'$ such that $\Omega_{t'}(Z,A) = \Omega_{t}(Z,A)$ and $\dim(\Omega_{t}(Z,A)) \geq t.$
    Since $\Omega_{t}(Z,A) \subseteq \Fq^r,$ its dimension is upper bounded by $r.$
    Then from $r \geq \dim(\Omega_{t}(Z,A)) \geq t$ we obtain that $t \leq r.$
    For the particular case of $\Omega_1(Z,A) = \{ \bzero \},$ we have $Z=\bzero$ and $ \Omega_{t'}(Z,A) = \{ \bzero \}$ for all $t' \geq 1.$
\end{proof}
As a direct consequence of (iii), we see that the expansion of $\Omega_i(Z,A)$ stops at certain integer $ t\leq r$. That is to say, $\Omega_t(Z,A)=\Omega_{r}(Z,A)$ for any $t\geq r$ and it suffices to consider $\Omega_{r}(Z,A)$ as the maximal expansion.
It is clear that $\Omega_{r}(Z,A) $ is the smallest subspace $\Omega \subseteq \Fq^r$ such that $\RowSpan(Z) \subseteq \Omega$ and $\Omega A = \{\bx A | \bx \in \Omega\}  \subseteq \Omega.$ For simplicity, we denote $\Omega_{r}(Z,A)$  as $\Omega(Z,A)$ in the sequel and 
denote $C_k^{(u,r)} = C_k^{(u,r,r)}$ where
\begin{equation}\label{eq:Ckur}
C_k^{(u,r)} = \{(Z,A) \in \Fq^{u \times r} \times \Fq^{r \times r} \mid \dim(\Omega(Z,A)) = k \}.
\end{equation}

The following property of the set $C_k^{(u,r,t)}$ will facilitate our subsequent calculations.

\begin{lemma}\label{lm:BasisChangeInv}
    Let $C_k^{(u,r,t)}$ be defined as in \eqref{eq:Ckurt}
    and $B$ be an invertible matrix in $\Fq^{r\times r}.$
    For any  $(Z,A) \in C_k^{(u,r,t)}$, we have $(ZB, B^{-1} A B) \in C_k^{(u,r,t)}$. 
\end{lemma}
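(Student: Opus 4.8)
The plan is to show directly that $M_t(ZB, B^{-1}AB) = M_t(Z,A)\,B$, from which the claim follows immediately since right multiplication by an invertible matrix preserves rank. First I would record the elementary identity $(B^{-1}AB)^i = B^{-1}A^i B$ for every integer $i\geq 0$, proved by a one-line induction (or by telescoping the product). Using this, the $i$-th block of rows of $M_t(ZB,B^{-1}AB)$ is
$$
(ZB)(B^{-1}AB)^{i} = ZB\,B^{-1}A^{i}B = (ZA^{i})B,
$$
for $i=0,1,\dots,t-1$, which is exactly the $i$-th block of rows of $M_t(Z,A)$ multiplied on the right by $B$. Hence $M_t(ZB,B^{-1}AB) = M_t(Z,A)\,B$.

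Next I would invoke the fact that multiplying a matrix on the right by an invertible matrix $B\in\Fq^{r\times r}$ is an invertible linear map on the row space, so it preserves rank: $\Rank(M_t(ZB,B^{-1}AB)) = \Rank(M_t(Z,A))$. Equivalently, phrased through the subspace $\Omega_t$ of \eqref{eq:Omega}, one has $\Omega_t(ZB,B^{-1}AB) = \Omega_t(Z,A)\,B$, the image of $\Omega_t(Z,A)$ under the invertible map $R_B$, so $\dim(\Omega_t(ZB,B^{-1}AB)) = \dim(\Omega_t(Z,A))$. Therefore $\dim(\Omega_t(Z,A)) = k$ if and only if $\dim(\Omega_t(ZB,B^{-1}AB)) = k$, i.e. $(Z,A)\in C_k^{(u,r,t)}$ implies $(ZB,B^{-1}AB)\in C_k^{(u,r,t)}$, as desired.

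There is no genuine obstacle here: the statement is a bookkeeping consequence of the conjugation identity together with rank-invariance under right multiplication by a unit. The only point to state carefully is that $ZB$ is indeed an element of $\Fq^{u\times r}$ and $B^{-1}AB$ of $\Fq^{r\times r}$, so the pair $(ZB,B^{-1}AB)$ lies in the ambient set $\Fq^{u\times r}\times\Fq^{r\times r}$ over which $C_k^{(u,r,t)}$ is defined; this is clear. (As a remark, the map $(Z,A)\mapsto(ZB,B^{-1}AB)$ is a bijection of $\Fq^{u\times r}\times\Fq^{r\times r}$ with inverse $(Z,A)\mapsto(ZB^{-1},BAB^{-1})$, so it actually permutes each $C_k^{(u,r,t)}$; this stronger form is what is used later to reduce to the companion-matrix case.)
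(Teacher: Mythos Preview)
Your proposal is correct and follows essentially the same approach as the paper: both establish the identity $(B^{-1}AB)^i = B^{-1}A^iB$, deduce that $M_t(ZB,B^{-1}AB) = M_t(Z,A)B$ (equivalently $\Omega_t(ZB,B^{-1}AB) = \Omega_t(Z,A)B$), and conclude by rank-invariance under right multiplication by an invertible matrix. Your added remark about the map being a bijection is also used later in the paper.
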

\begin{proof}
    For $(Z,A) \in C_k^{(u,r,t)}$, we can express $\Omega_t(Z,A)$ as the row span of the matrix $M_t(Z,A)$ \begin{equation}\label{eq:calV_Matrix}
        \Omega_t(Z,A) = \RowSpan 
        \begin{pmatrix}
        Z \\
        Z A \\
        \vdots \\
        Z A^{t-1}
        \end{pmatrix}.
    \end{equation}
    Notice that $(B^{-1} A B) (B^{-1} A B) = B^{-1} A^2 B$, implying $(B^{-1} A B)^i = B^{-1} A^i B.$ This gives
    $ZB (B^{-1} A B)^i = Z A^i B.$  
    Hence, for the pair $(ZB, B^{-1} A B)$ we obtain
    $$
        \Omega_t(ZB, B^{-1} A B) = \RowSpan 
        \begin{pmatrix}
        Z B \\
        Z A B \\
        \vdots \\
        Z A^{t-1} B
        \end{pmatrix}
        =
        \Omega_t(Z, A) B.
    $$
    Since $B$ is invertible, it holds $\dim( \Omega_t(Z, A)B) =  \Omega_t(Z, A) = k,$ implying that $(ZB, B^{-1} A B) \in C_k^{(u,r,t)}.$
\end{proof}

\begin{algorithm}[ht]
\small
\SetAlgoLined
\KwIn{A matrix $Z \in \Fq^{u \times r}$, a matrix $A \in \Fq^{r \times r}$ and an integer $t.$}
\KwOut{A matrix $G$ such that $\RowSpan(G) = \Omega_t(Z,A)$.}
\tcp{Initialize $\Omega$ as the zero subspace and $G$ as an empty list.}
$\Omega = \{ \bzero \}$ \;
$G = []$ \;
\For{$j \in [0 \ldots  t-1]$}
{   
	\tcp{Add all the new linearly independent rows obtained from $ZA^j$ using the convention $A^0 = I_r.$}
	$G^{(j)} = []$\;
	\For{$i \in [u]$}
	{
		\If{$\bz_i A^j \notin \Omega$}
		{
			$G^{(j)}.append(\bz_i A^j)$\;
			$\Omega = \Omega + \Span{\Fq}{\bz_iA^j}$\;
		}
	}
	$G.append(G^{(j)})$\;
}
Return $G$\;
\caption{Generation of a basis of $\Omega_t(Z,A)$ for  $(Z, A),t$}\label{alg:SeqGen} 
\normalsize
\end{algorithm}

To solve Problem 1, we need to consider linearly independent rows in the matrix $M_t(Z,A)$ as it expands.
This is a process of extracting those linearly independent row vectors in $ZA^j$ for $j=0,\dots, t-1$ in sequential order for a matrix pair $(Z, A)$.
This process is summarized in Algorithm \ref{alg:SeqGen}.
The following lemma shows that the matrix $G,$ obtained as output from Algorithm \ref{alg:SeqGen}, generates the subspace $\Omega_t(Z,A)$ of dimension $k$ on input $(Z,A), t$ when $(Z,A)$ is taken from $C_k^{(u,r,t)}$.
\begin{lemma}\label{lm:normalized_basis}
	For $(Z,A)$ in $C_k^{(u,r,t)}$,
	the rows of the output matrix $$G=\begin{pmatrix}
	G^{(0)} \\ \hline  G^{(1)} \\ \hline   \vdots \\ \hline G^{(t-1)} 
	\end{pmatrix}$$ of Algorithm \ref{alg:SeqGen} form a basis of $\Omega_t(Z,A) \subseteq \Fq^r$. In particular, 
	when $t=r$, the rows of $G$ form a basis of $\Omega(Z,A)$.
\end{lemma}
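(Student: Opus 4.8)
The plan is to verify a loop invariant for Algorithm \ref{alg:SeqGen} by induction on the outer counter $j$: after the iteration for index $j$ the accumulated list $G=(G^{(0)}/\cdots/G^{(j)})$ has $\Fq$-linearly independent rows and $\RowSpan(G)=\Omega=\Omega_{j+1}(Z,A)$. Once this is established for $j=t-1$ the statement follows at once, since then the rows of $G$ are independent and span $\Omega_t(Z,A)$, hence form a basis; and when $t=r$, Lemma \ref{lm:calV_r-1}(iii) identifies $\Omega_r(Z,A)$ with $\Omega(Z,A)$.

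First I would dispatch linear independence, which is essentially built into the algorithm. Throughout the execution $\Omega$ is exactly the span of all rows appended to $G$ so far: every \texttt{append} adds the same vector $\bz_iA^j$ to both $G$ and the generating set of $\Omega$, and $\Omega$ starts as $\{\bzero\}=\RowSpan([\,])$. A vector is appended only when the test $\bz_iA^j\notin\Omega$ succeeds, i.e. only when it lies outside the span of its predecessors in $G$; so by induction on the number of appends, the rows of $G$ remain linearly independent at every step.

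Next I would prove the span statement. The containment $\RowSpan(G)\subseteq\Omega_t(Z,A)$ is immediate because every row of $G$ has the form $\bz_iA^j$ with $0\le j\le t-1$, and by \eqref{eq:Omega} one has $\Omega_t(Z,A)=\sum_{j=0}^{t-1}\RowSpan(ZA^j)$. For equality I would run a short inner induction on the inner counter $i$: during the $j$-th outer iteration, after processing row $i$ one has $\Omega=\Omega_j(Z,A)+\Span{\Fq}{\bz_1A^j,\ldots,\bz_iA^j}$. Indeed, if $\bz_iA^j$ lies in the current $\Omega$ then, by the inductive hypothesis, it already lies in $\Omega_j(Z,A)+\Span{\Fq}{\bz_1A^j,\ldots,\bz_{i-1}A^j}$, so skipping it does not shrink the span; otherwise it is appended. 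Taking $i=u$ gives $\Omega=\Omega_j(Z,A)+\RowSpan(ZA^j)=\Omega_{j+1}(Z,A)$ directly from the definition \eqref{eq:Omega} (equivalently via the recursion in Lemma \ref{lm:calV_r-1}(i)), which closes the outer induction; the base case $j=0$ is the same argument with $\Omega$ initialised to $\{\bzero\}$ and the convention $A^0=I_r$, yielding $\Omega=\RowSpan(Z)=\Omega_1(Z,A)$.

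Finally, evaluating the invariant at $j=t-1$ gives $\RowSpan(G)=\Omega_t(Z,A)$ with the rows of $G$ linearly independent, hence a basis; and since $(Z,A)\in C_k^{(u,r,t)}$ forces $\dim\Omega_t(Z,A)=k$, the list $G$ has exactly $k$ rows. For $t=r$, Lemma \ref{lm:calV_r-1}(iii) gives $\Omega_r(Z,A)=\Omega(Z,A)$, completing the proof. None of this is genuinely hard; the only point that demands care is keeping the two nested inductions honest, in particular checking that discarding a row dependent on earlier rows of the \emph{same} block $ZA^j$ loses nothing of $\RowSpan(ZA^j)$ modulo $\Omega_j(Z,A)$.
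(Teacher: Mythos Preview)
Your proof is correct and follows essentially the same approach as the paper: both argue the loop invariant that after the $j$-th outer iteration the accumulated rows of $G$ are linearly independent and span $\Omega_{j+1}(Z,A)$. Your version is more explicit in separating the independence and span arguments and in running the inner induction on $i$, but the underlying idea is the same.
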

\begin{proof}
	
	
	
	In Algorithm \ref{alg:SeqGen}, the vector space $\Omega$ is initialized as $\Omega = \{ \bzero \}.$
	Using the convention $A^0 = I_r$, in the initial execution of the inner loop, we add to $G^{(0)}$ all the linearly independent rows of $ZA^{0} = Z.$
    Each time a new row is added to $G^{(0)}$ the subspace $\Omega$ is enlarged to encompass that row.
    By the end of the first inner loop we will have $\Omega=\RowSpan(G^{(0)}) = \RowSpan(Z) = \Omega_0(Z,A).$
	On the second iteration $j=1$, all the rows of $ZA$ that do not belong to $\Omega$ will be added to $G^{(1)}$.
    For each row we add to $G^{(1)}$, the subspace $\Omega$ is extended to encompass that row as well.
	At the end of this iteration, the subspace $\Omega$ correspond to $\RowSpan(G)$ for $G=\gbinom{G^{(0)}}{G^{(1)}}$ which is equal to $\Omega_{2}(Z,A).$
	Continuing in this way, after the execution of the $j$-th inner loop we have $\Omega =\RowSpan(G) = \Omega_j(Z,A)$ for $G = \begin{psmallmatrix}
	    G^{(0)} \\ \vdots \\ G^{(j-1)}
	\end{psmallmatrix}.$
	The statement thus follows.
\end{proof}

Below we provide an example to show the steps in Algorithm \ref{alg:SeqGen}.

\begin{example}\label{Ex1}
	 Let $u=3$, $r=4$,  $t=3$ and $k=3$. Assume 
	$$
	Z = \begin{pmatrix}
			1 & 0 & 0 & 0 \\
			0 & 1 & 0 & 0 \\
			1 & 1 & 0 & 0
	\end{pmatrix},\quad
	A = \begin{pmatrix}
	0 & 0 & 1 & 0 \\
	1 & 1 & 1 & 0 \\
	0 & 0 & 0 & 1 \\
	1 & 1 & 1 & 1
	\end{pmatrix}.
	$$
Then, from Algorithm \ref{alg:SeqGen} one obtains
$$
G = \begin{pmatrix}
G^{(0)} \\ \hline  G^{(1)} \\ \hline G^{(2)} \cn
\end{pmatrix} 
=
\begin{pmatrix}
1 & 0 & 0 & 0 \\
0 & 1 & 0 & 0 \\ \hline 
0 & 0 & 1 & 0 \\ \hline 
0 & 0 & 0 & 1 
\end{pmatrix}.
$$ Note that in Algorithm \ref{alg:SeqGen} $G^{(3)}$ in iteration $j=3$ is an empty matrix.
\end{example}

We now take a closer look at the steps of Algorithm \ref{alg:SeqGen}. Denote $Z^{(0)} = Z$ and let $G^{(0)}$ be the matrix obtained after the iteration $j=0$.
Notice that $G^{(0)}$ are sequentially formed by  the rows in $Z^{(0)}$ linearly independent from the rows in $Z^{(0)}$ before them; and that any row of $Z^{(0)}$ is a linear combination of rows in $G^{(0)}.$ 
For instance, as in Example \ref{Ex1} we have
$$
Z^{(0)}=\begin{pmatrix}
1   & 0   & 0   & 0   \\
0   & 1   & 0   & 0   \\
1   & 1   & 0   & 0   \\
\end{pmatrix}
\Longrightarrow G^{(0)} = 
\begin{pmatrix}
1   & 0   & 0   & 0   \\
0   & 1   & 0   & 0   \\ 
\end{pmatrix}
$$
It is clear that, all the rows in matrix $Z^{(0)}A$ can be linearly expressed by rows in $Z^{(1)}=G^{(0)}A$. Similarly, denote by $G^{(1)}$ the matrix sequentially formed by rows in $Z^{(1)}$, which cannot be linearly expressed by rows in $G^{(0)}$ and preceding rows in $Z^{(1)}$.
For instance, 
$$
Z^{(1)} = G^{(0)}A = 
\begin{pmatrix}
0 & 0 & 1 & 0 \\
1 & 1 & 1 & 0
\end{pmatrix}
\Longrightarrow G^{(1)} = 
\begin{pmatrix}
0 & 0 & 1 & 0
\end{pmatrix}.
$$
We continue this process as in Algorithm \ref{alg:SeqGen} and 
denote $Z^{(j)}=G^{(j-1)}A$ for $1\leq j<r-1$ and $G^{(j)}$
the matrix that sequentially extracts rows from $Z^{(j)}$ which cannot be linearly expressed by rows in $G^{(0)}, \dots, G^{(j-1)}$ and preceding rows in $Z^{(j)}$. Then from Algorithm \ref{alg:SeqGen} we have
\begin{equation}\label{eq: matrix_relation}
\begin{pmatrix}
Z^{(0)} \\ \hline 
Z^{(1)} \\ \hline \vdots  \\
\hline
Z^{(r-1)} 
\end{pmatrix}
\Longrightarrow
G = 
\begin{pmatrix}
G^{(0)} \\ \hline 
G^{(1)} \\ \hline \vdots  \\
\hline
G^{(r-1)} 
\end{pmatrix},
\end{equation} where for $j$ larger than certain value $t$, the matrices $G^{(j)}$ could be empty matrices.

From the above process, it is easily seen that different pairs of $(Z,A)$ can produce the same output $G \in \Fq^{k \times r}$ by Algorithm \ref{alg:SeqGen}.
The number of pairs producing the same output $G \in \Fq^{k \times r}$ turns out to be \color{black} the same for all the possible matrices $G' \in \Fq^{k \times r}$ of rank $k$. \color{black}
This fact will be shown in the following proposition,
which enables us to concentrate on the special output matrix $E_k \in \Fq^{k \times r}$, formed by the first $k$ rows of the identity matrix $I_r \in \Fq^{r \times r}$,
in our calculations.

\begin{lemma}
	\label{prop:BasisInvAlg}
	Let $G\in \Fq^{k \times r}$ be a matrix of rank $k \leq r$ and
	denote 
	$$
	C_G^{(u,r,t)} = \{(Z,A) \in \Fq^{u \times r} \times \Fq^{r \times r}: Alg(Z,A,t) = G\}.
	$$
	Let 
	$E_k$ be the submatrix formed by the first $k$ rows of $I_r$ over $\Fq$. Then we have $$|C_G^{(u,r,t)}| = |C_{E_k}^{(u,r,t)}|.$$
\end{lemma}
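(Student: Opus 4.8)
The plan is to exhibit, for each invertible $B\in\Fq^{r\times r}$, an explicit bijection between $C_G^{(u,r,t)}$ and $C_{GB}^{(u,r,t)}$, and then to choose $B$ so that $GB=E_k$. For the reduction step, note that since $G$ has full row rank $k$, its rows $\bg_1,\dots,\bg_k$ can be completed to a basis $\bg_1,\dots,\bg_k,\bg_{k+1},\dots,\bg_r$ of $\Fq^r$; letting $\widetilde G\in\Fq^{r\times r}$ be the invertible matrix with these rows and setting $B=\widetilde G^{-1}$, we get $\widetilde G B=I_r$, so the first $k$ rows give $GB=E_k$. Hence it suffices to prove $|C_G^{(u,r,t)}|=|C_{GB}^{(u,r,t)}|$ for every invertible $B$.

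Next I would show that the map $\phi_B\colon(Z,A)\mapsto(ZB,\,B^{-1}AB)$ is a bijection from $C_G^{(u,r,t)}$ onto $C_{GB}^{(u,r,t)}$. The key identity (already used in the proof of Lemma \ref{lm:BasisChangeInv}) is $(B^{-1}AB)^j=B^{-1}A^jB$, so the $i$-th row of $ZB$ acted on by $(B^{-1}AB)^j$ equals $(\bz_i A^j)B$; that is, every vector tested by Algorithm \ref{alg:SeqGen} on input $(ZB,B^{-1}AB,t)$ is the $B$-image of the corresponding vector tested on input $(Z,A,t)$, and both runs visit these vectors in the same order over the index pairs $(j,i)$. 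I claim, by induction over the execution of Algorithm \ref{alg:SeqGen}, that after each step the internal subspace $\Omega'$ of the run on $(ZB,B^{-1}AB,t)$ equals $\Omega\cdot B$, where $\Omega$ is the internal subspace of the run on $(Z,A,t)$ at the same point, and that the partial output list of the first run is the $B$-image of that of the second. Both runs start with $\Omega=\Omega'=\{\bzero\}$ and empty lists; at step $(j,i)$, since $B$ is invertible, $(\bz_i A^j)B\notin\Omega\cdot B$ if and only if $\bz_i A^j\notin\Omega$, so the two runs take the same branch, and appending $(\bz_i A^j)B$ (resp. enlarging $\Omega'$ by its span) preserves the invariant. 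At termination this gives $\mathit{Alg}(ZB,B^{-1}AB,t)=\mathit{Alg}(Z,A,t)\cdot B=G\cdot B$, so $\phi_B(C_G^{(u,r,t)})\subseteq C_{GB}^{(u,r,t)}$. Running the same argument with $B$ replaced by $B^{-1}$ shows $\phi_{B^{-1}}$ maps $C_{GB}^{(u,r,t)}$ into $C_G^{(u,r,t)}$, and clearly $\phi_{B^{-1}}\circ\phi_B=\mathrm{id}$; hence $\phi_B$ is a bijection and $|C_G^{(u,r,t)}|=|C_{GB}^{(u,r,t)}|$. Taking the $B$ with $GB=E_k$ from the first step yields $|C_G^{(u,r,t)}|=|C_{E_k}^{(u,r,t)}|$.

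The only delicate point is the inductive claim that the two runs of Algorithm \ref{alg:SeqGen} make identical branching decisions; this is where one must be careful with the bookkeeping, but it is not a genuine obstacle, since it reduces entirely to the fact that right-multiplication by an invertible matrix preserves linear (in)dependence and does not change the order in which rows are visited. Everything else — the completion of $G$ to an invertible matrix and the verification that $\phi_{B^{-1}}$ inverts $\phi_B$ — is routine.
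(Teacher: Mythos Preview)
Your proof is correct and follows essentially the same approach as the paper: both construct an invertible $B$ with $GB=E_k$, define the bijection $(Z,A)\mapsto(ZB,B^{-1}AB)$, and verify by induction over the steps of Algorithm~\ref{alg:SeqGen} that the two runs make identical branching decisions (using $(B^{-1}AB)^j=B^{-1}A^jB$ and the fact that right-multiplication by an invertible matrix preserves linear independence). Your write-up is arguably a bit cleaner about the bijection being two-sided via $\phi_{B^{-1}}$, but the argument is the same.
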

\begin{proof}
	To prove this result we will create a bijection between $C_G^{(u,r,t)}$ and $C_{E_k}^{(u,r,t)}.$
	Since $k \leq r$, there exists an invertible matrix $B \in \Fq^{r \times r}$ such that $GB=E_k.$ 
	Consider the map $B: (Z,A) \rightarrow (ZB, B^{-1} A B),$ since $B$ is invertible this is a bijection of $\Fq^{u \times r} \times \Fq^{r \times r}$ to itself.
	We will prove that, for each $(Z,A)$ such that $Alg(Z,A,t) = G,$ we have that $Alg(ZB,B^{-1}AB,t) = GB = E_k.$
	
	We compare the procedure of Algorithm \ref{alg:SeqGen} on two inputs $(Z,A,t)$ and $(ZB,B^{-1}AB,t),$ showing that, when the first yields $G$ as output, the second will yield $E_k.$
	Let $\bz_i$ be a row of $Z$ that is linearly independent from $\bz_1, \ldots, \bz_{i-1},$ then $\bz_i B$ is also independent from $\bz_1 B, \ldots, \bz_{i-1} B.$
	This means that, every time we add a row $\bz_i$ in the first execution we add $\bz_i B$ in the second as well.
    In this way, at each step, $G'B = E_{k'},$ where $G'$ and $E_{k'}$ are the matrices obtained so far in the algorithm.
	At the end of the first inner loop, if $\Rank(Z) = k_0 \leq u,$ we will have $G^{(0)}B = E_{k_0}$ where both $G^{(0)},E_{k_0} \in \Fq^{k_0 \times r}.$
	
	In the $j$-th inner loop, the algorithm will do the following.
	On initial input $(Z,A)$, it will compute $ZA^j$ then, for each row vector $\bz_i A^j$, it will check if it is already in $\Omega = \Span{\Fq}{\bg_1, \ldots, \bg_{k'}}$, if not, it will add it as the last row of $G'$ and update $\Omega B$ to $\Span{\Fq}{\bg_1, \ldots, \bg_{k' + 1}}$.
	On initial input $(ZB,B^{-1}AB)$, it will compute $ZB(B^{-1}AB)^j$ then, for each row vector $\bz_iB (B^{-1}A B)^j$, it will check if it is already in $\Omega B = \Span{\Fq}{\be_1, \ldots, \be_{k'}}$, if not, it will add it as the last row of $E_{k'}$ and update $\Omega B$ to $\Span{\Fq}{\be_1, \ldots, \be_{k'}, \bz_iA^j B}$ 
	In the proof of Lemma \ref{lm:BasisChangeInv} we have already seen that $\bz_i B (B^{-1}A B)^j = \bz_i A^j B.$
    \color{black}Notice that, if $\bz_i A^j = g_{k'+1}$ then $\bz_i A^j B = \be_{k'+1}$, then $E_{k'}$ is updated to $E_{k' + 1}$.\color{black}

	We will show that, at each step, $G'B = E_{k'}.$
	Let $G_{j-1}$ and $E_{j-1}= G_{j-1}B$ be the matrices obtained after we complete the inner loop for the $(j-1)$-th time.
	It is easy to see that 
	$$
	\bz_i A^j \in \RowSpan(G_{j-1}) + \Span{\Fq}{\bz_1, \ldots, \bz_{i-1}}A^j,
	$$
	if and only if
	$$
	\bz_i A^j B \in \RowSpan(E_{j-1}) + \Span{\Fq}{\bz_1, \ldots, \bz_{i-1}}A^j B.
	$$
	where we used the fact that $\RowSpan(E_{j-1}) = \RowSpan(G_{j-1})B.$
\end{proof}
Based on Lemma \ref{prop:BasisInvAlg} we have the following corollary, which significantly simplifies our calculations.
\begin{proposition}
	\label{cor:AqGammak}
	Let $C_k^{(u,r,t)}$
	be given as in \eqref{eq:Ckurt}
	and 
	$$
	   C_{E_k}^{(u,r,t)} = \{(Z,A) \in \Fq^{u \times r} \times \Fq^{r \times r}: Alg(Z,A,t) = E_k\}.
	$$
	Then we have
	$$
	   |C_k^{(u,r,t)}| = A_q(r,k) |C_{E_k}^{(u,r,t)}|.
	$$
\end{proposition}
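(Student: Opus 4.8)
The plan is to view $C_k^{(u,r,t)}$ as being partitioned according to the output of Algorithm~\ref{alg:SeqGen}. First I would record the following consequence of Lemma~\ref{lm:normalized_basis} (and its proof): for \emph{every} pair $(Z,A)\in\Fq^{u\times r}\times\Fq^{r\times r}$ the output $G = Alg(Z,A,t)$ is a matrix whose rows form a basis of $\Omega_t(Z,A)$, so that $G$ has exactly $\dim\Omega_t(Z,A)$ rows, lies in $\Fq^{\dim\Omega_t(Z,A)\times r}$, has full row rank, and satisfies $\RowSpan(G)=\Omega_t(Z,A)$. Consequently $(Z,A)\in C_k^{(u,r,t)}$ if and only if $Alg(Z,A,t)$ belongs to the set $\calR$ of full-row-rank matrices in $\Fq^{k\times r}$, and in that case $(Z,A)\in C_G^{(u,r,t)}$ with $G=Alg(Z,A,t)\in\calR$. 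Since each pair produces a single output, this yields the disjoint decomposition
$$
C_k^{(u,r,t)} = \bigsqcup_{G\in\calR} C_G^{(u,r,t)}.
$$

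Next I would invoke Lemma~\ref{prop:BasisInvAlg}, which gives $|C_G^{(u,r,t)}| = |C_{E_k}^{(u,r,t)}|$ for \emph{every} $G\in\calR$. The point to state carefully here is that this lemma is phrased for an arbitrary full-rank $G\in\Fq^{k\times r}$, so it applies uniformly across all of $\calR$ at once, including those $G$ that do not actually occur as outputs (for which both fibre cardinalities are then simply $0$). Summing the displayed decomposition over $G\in\calR$ therefore collapses to $|C_k^{(u,r,t)}| = |\calR|\cdot|C_{E_k}^{(u,r,t)}|$.

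It remains to count $|\calR|$: a full-row-rank matrix in $\Fq^{k\times r}$ is an ordered tuple of $k$ linearly independent vectors of $\Fq^r$, of which there are $\prod_{i=0}^{k-1}(q^r-q^i)=A_q(r,k)$. Combining this with the previous identity gives $|C_k^{(u,r,t)}| = A_q(r,k)\,|C_{E_k}^{(u,r,t)}|$, as claimed. I do not expect any genuine obstacle: the argument is a routine fibring-plus-counting step whose entire substance is already carried by Lemmas~\ref{lm:normalized_basis} and~\ref{prop:BasisInvAlg}; the only subtlety is the uniform application of Lemma~\ref{prop:BasisInvAlg} just noted, which is what makes the sum reduce to exactly $|\calR|$ copies of $|C_{E_k}^{(u,r,t)}|$ and also silently disposes of the degenerate case $C_{E_k}^{(u,r,t)}=\emptyset$ (where both sides are $0$).
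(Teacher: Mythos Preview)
Your proof is correct and follows essentially the same approach as the paper: partition $C_k^{(u,r,t)}$ by the output of Algorithm~\ref{alg:SeqGen}, use Lemma~\ref{prop:BasisInvAlg} to show all fibres $C_G^{(u,r,t)}$ have cardinality $|C_{E_k}^{(u,r,t)}|$, and count the $A_q(r,k)$ full-rank matrices $G\in\Fq^{k\times r}$. Your version is slightly more careful in spelling out the disjoint decomposition and the degenerate case, but the argument is the same.
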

\begin{proof}
	A pair $(Z,A) \in C_k^{(u,r,t)}$ iff its output $Alg(Z,A,t)$ through Algorithm \ref{alg:SeqGen} is a full rank matrix $G \in \Fq^{k \times r}$.
	There are $A_q(k,r)$ distinct matrices in $\Fq^{k \times r}$ of rank $k$. 
	For each distinct matrix $G \in \Fq^{k \times r}$ of rank $k$, Lemma \ref{prop:BasisInvAlg} implies that 
	$|C_{G}^{(u,r,t)}|=|C_{E_k}^{(u,r,t)}|$.
	Hence there are $|C_{E_k}^{(u,r,t)}|$ distinct tuples $(A,Z)$ that generate each full-rank matrix in $\Fq^{k\times r}$. The desired statement thus follows.
\end{proof}

By Proposition \ref{cor:AqGammak}, to solve Problem \ref{Pb:C_k}, it suffices to find the value of $|C_{E_k}^{(u,r,t)} |.$
We shall investigate $|C_{E_k}^{(u,r,t)} |$ for some integers $t$ that are particularly interesting in the decoding of BD-LRPC codes.


\subsection{The cardinality of $C_{k}^{(u,r, t)}$ for $\color{black}t\geq r$}
This subsection presents an explicit formula to compute the cardinality of $C_k^{(u,r)}$ defined in (\ref{eq:Ckur}) and the detailed calculations.

\begin{theorem}\label{th:|C_k^ur|}
	The set $C_k^{(u,r)}$ defined in \eqref{eq:Ckur} has cardinality 
	\begin{align*}
	    &\left|C_k^{(u,r)}\right| =  A_q(r,k)\begin{bmatrix} k + u - 1 \\ u - 1\end{bmatrix}_q q^{r(r-k) + k} \\
        & =\begin{bmatrix} k + u - 1 \\ u - 1\end{bmatrix}_q \frac{H_q(r)}{H_q(r-k)}  q^{rk} q^{r(r-k) + k}.
	\end{align*}
\end{theorem}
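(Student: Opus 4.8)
The plan is to peel the count down to a single ``generic'' case by two reductions --- Proposition~\ref{cor:AqGammak} and a block-triangular normal form --- and then evaluate the generic count by running Algorithm~\ref{alg:SeqGen} and bookkeeping its staircase shape with $q$-binomial coefficients.

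First, since $C_k^{(u,r)}=C_k^{(u,r,r)}$ and $\Omega_t$ stabilises after step $r$, Proposition~\ref{cor:AqGammak} gives $|C_k^{(u,r)}|=A_q(r,k)\,|C_{E_k}^{(u,r,r)}|$, so I only need $|C_{E_k}^{(u,r,r)}|$. I would then characterise the pairs $(Z,A)$ that Algorithm~\ref{alg:SeqGen} returns $E_k$ on: because the returned rows are \emph{literally} $e_1,\dots,e_k$, we have $\Omega(Z,A)=\Span{\Fq}{e_1,\dots,e_k}=:W_0$; since $\Omega(Z,A)$ is always $A$-invariant and contains $\RowSpan(Z)$, this forces $Z=(Z'\mid\mathbf 0)$ with $Z'\in\Fq^{u\times k}$ and $A=\begin{psmallmatrix}A_{11}&\mathbf 0\\ A_{21}&A_{22}\end{psmallmatrix}$ with $A_{11}\in\Fq^{k\times k}$; moreover the whole execution stays inside $W_0$, where $A$ acts by $A_{11}$, so it coincides with $\mathrm{Alg}(Z',A_{11},k)$ and is independent of $A_{21},A_{22}$. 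Hence $(Z,A)\in C_{E_k}^{(u,r,r)}$ iff $(Z',A_{11})\in C_{E_k}^{(u,k,k)}$ while $A_{21}\in\Fq^{(r-k)\times k}$ and $A_{22}\in\Fq^{(r-k)\times(r-k)}$ are arbitrary, which yields
$$|C_{E_k}^{(u,r,r)}| = q^{(r-k)k+(r-k)^2}\,|C_{E_k}^{(u,k,k)}| = q^{r(r-k)}\,|C_{E_k}^{(u,k,k)}|.$$
It therefore suffices to prove the identity $|C_{E_k}^{(u,k,k)}| = \begin{bmatrix}k+u-1\\u-1\end{bmatrix}_q q^{k}$.

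To establish this I would stratify $C_{E_k}^{(u,k,k)}$ by the \emph{profile} $\mu=(|G^{(0)}|,|G^{(1)}|,\dots)$ of Algorithm~\ref{alg:SeqGen}; by Lemma~\ref{lm:calV_r-1}(ii) (together with $\dim\Omega_2\le 2\dim\Omega_1$) this is a partition of $k$ with $\mu_1\le u$. Fixing $\mu$, one counts the admissible $(Z,A)$ block by block: the $\mu_1$ rows extracted from $Z$ pin the first $\mu_1$ columns of $Z$ to a matrix that Algorithm~\ref{alg:SeqGen} reduces to $e_1,\dots,e_{\mu_1}$ (there are $\begin{bmatrix}u\\\mu_1\end{bmatrix}_q$ of these, the remaining columns of $Z$ being zero), and at step $l\ge1$ the rows $\mu_1+\dots+\mu_{l-1}+1,\dots,\mu_1+\dots+\mu_l$ of $A$ must equal the $\mu_{l+1}$ new standard basis vectors at the pivot positions and be arbitrary in a space of dimension $\kappa_l+m$ at each non-pivot position of level $m$, where $\kappa_l=\mu_1+\dots+\mu_l=\dim\Omega_l$. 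Summing over the choice of pivot positions and non-pivot values via the Ferrers-diagram identity $\sum_{a_0+\dots+a_s=N}q^{\sum_{m} m a_m}=\begin{bmatrix}N+s\\s\end{bmatrix}_q$ (the appendix's $q$-series material), step $l$ contributes $q^{\kappa_l(\mu_l-\mu_{l+1})}\begin{bmatrix}\mu_l\\\mu_{l+1}\end{bmatrix}_q$, so that
$$|C_{E_k}^{(u,k,k)}| = \sum_{\substack{\mu\vdash k\\ \mu_1\le u}}\begin{bmatrix}u\\\mu_1\end{bmatrix}_q\;\prod_{l\ge1}q^{\kappa_l(\mu_l-\mu_{l+1})}\begin{bmatrix}\mu_l\\\mu_{l+1}\end{bmatrix}_q,$$
with the convention $\mu_l=0$ once $l$ exceeds the number of parts.

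The main obstacle is then the purely combinatorial evaluation of this partition sum to $\begin{bmatrix}k+u-1\\u-1\end{bmatrix}_q q^{k}$. I would attack it by induction on $k$, peeling off the largest part $\mu_1=c$: the terms with $\mu_1=c$ factor as $\begin{bmatrix}u\\c\end{bmatrix}_q$ times a sum of the same shape for the tail partition with running offsets $\kappa_l$ shifted by $c$, and the inductive step collapses to a $q$-Chu--Vandermonde-type identity. Equivalently --- and perhaps more cleanly --- the same block-triangular analysis, now only requiring $\dim\Omega(Z,A)=j$ and combined with Lemma~\ref{lm:BasisChangeInv}, gives $|C_j^{(u,k)}|=A_q(k,j)\,q^{k(k-j)}\,|C_{E_j}^{(u,j,j)}|$; since $\sum_{j=0}^{k}|C_j^{(u,k)}|=|\Fq^{u\times k}\times\Fq^{k\times k}|=q^{uk+k^2}$ is a triangular linear system in the $|C_{E_j}^{(u,j,j)}|$ with invertible leading coefficient $A_q(k,k)$, it pins down $|C_{E_k}^{(u,k,k)}|$ uniquely, and one need only verify the closed-form identity $\sum_{j=0}^{k}A_q(k,j)\,q^{k(k-j)+j}\begin{bmatrix}j+u-1\\u-1\end{bmatrix}_q=q^{uk+k^2}$, which follows from the $q$-binomial theorem. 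Either way, substituting $|C_{E_k}^{(u,k,k)}|=\begin{bmatrix}k+u-1\\u-1\end{bmatrix}_q q^{k}$ into $|C_k^{(u,r)}|=A_q(r,k)\,q^{r(r-k)}\,|C_{E_k}^{(u,k,k)}|$ gives the first displayed formula, and rewriting $A_q(r,k)$ via \eqref{eq:Aq->Hq} gives the second.
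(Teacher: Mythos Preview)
Your two reductions --- to $C_{E_k}^{(u,r,r)}$ via Proposition~\ref{cor:AqGammak} and then to $C_{E_k}^{(u,k,k)}$ by splitting off the lower $(r-k)$ rows of $A$ --- match the paper exactly, and your partition-sum formula for $|C_{E_k}^{(u,k,k)}|$ is correct. Where you diverge is in evaluating that last quantity. You stratify by the profile $\mu=(|G^{(0)}|,|G^{(1)}|,\dots)$ and arrive at a sum over partitions which then needs a $q$-Chu--Vandermonde collapse (or the triangular inversion of your approach~(b)). The paper sidesteps this entirely: since $G^{(j)}$ consists of standard basis vectors $e_{\kappa_j+1},\dots,e_{\kappa_{j+1}}$, the block $Z^{(j+1)}=G^{(j)}A$ is \emph{literally} rows $\kappa_j+1,\dots,\kappa_{j+1}$ of $A$, so Algorithm~\ref{alg:SeqGen} processes the $(u+k)$ rows of $W=\begin{psmallmatrix}\hat Z\\Y\end{psmallmatrix}$ in their natural order. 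The placement of the $u$ free rows among these $u+k$ rows encodes the profile $\mu$, and summing $q^{\text{(free entries)}}$ over \emph{all} placements at once is precisely the Ferrers-diagram sum $\sum_{F\in\calF_{u-1,k}}q^{|F|}=\begin{bsmallmatrix}k+u-1\\u-1\end{bsmallmatrix}_q$ from \eqref{eq:ferrers_count_kxr-k}, times $q^k$ for the always-free last row. In effect the paper recognises that your partition sum is a Ferrers sum in disguise and reads off the answer directly, whereas your route re-derives it after first slicing along $\mu$. Both are valid; yours exposes the profile stratification explicitly (which is in fact what drives the $t<r$ analysis in Theorem~\ref{th:|Gammak^t|}), while the paper's gives the shortest path to the closed form.
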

The calculation of the formula in Theorem \ref{th:|C_k^ur|} involves several parameters. 
Below we shall first discuss the calculation of $|C_{E_k}^{(1,r)}|$, which is easier to follow but exhibits the key idea for the general case $u\geq 2$.
Then we will present the calculation of $C_{E_k}^{(u,r)}$ with the help of Ferrers diagrams.

\subsubsection{The case $u=1.$}

For $u=1,$ the matrix $Z \in \Fq^{u \times r}$ is just a row vector $\bz \in \Fq^{ r}.$
Let $\by_i = \bz A^{i-1}$ and consider the matrix $Y \in \Fq^{r \times r}$ whose rows are the vectors $\by_i$.
Following Proposition \ref{cor:AqGammak}, we focus on the case $G = E_k$, where $G$ is the output of Algorithm \ref{alg:SeqGen} on input $(\bz,A),r$.
We can characterize the structure of $\bz$ and $A$ that gives this output.
We have
$$
Y = \begin{pmatrix}
\by_1 \\ \by_2 \\ \vdots \\ \by_{r}
\end{pmatrix} = 
\begin{pmatrix}
\bz \\ \bz A \\ \vdots \\ \bz A^{r-1}
\end{pmatrix} \Longrightarrow
G = E_k = \begin{pmatrix}
\be_1 \\ \be_2 \\ \vdots \\ \be_k
\end{pmatrix}.
$$
At first, it is clear that $\bz = \be_1$, otherwise one cannot obtain $G=E_k$, implying $\bz=\by_{1} =\be_1 $. 
Then the second row $\by_2= \be_1A$ is the first row of $A$.
Note that the first row of $A$ cannot be $\be_1$. 
Otherwise $\by_2, \by_3, \dots, \by_k$ will be always $\be_1$, contradicting $G=E_k$.
Thus, in order to obtain $G=E_k$, the first row of $A$ must be $\be_2$.
As the iterations proceed, with a similar argument, the first $(k-1)$ rows of $A$ must be $\be_2, \dots, \be_k$, respectively, in order to 
yield $G= E_k$. That is to say, 
$$
Y_k 
 = \begin{pmatrix}
 \bz \\ \bz A \\ \vdots \\ \bz A^{k-1}
 \end{pmatrix}
 = \begin{pmatrix}
 \bz \\ \ba_1 \\ \vdots \\ \ba_{k-1}
 \end{pmatrix}
 = \begin{pmatrix}
 \be_1 \\ \be_2 \\ \vdots \\ \be_{k}
 \end{pmatrix},
$$ where $\ba_i$ is the $i$-th row of the matrix $A$.
This relation fixes $\bz$ and  the first $(k-1)$ rows of $A$.
Consider $\by_{k+1} = \bz A^{k} = \be_k A = \ba_k$. 
Since $G=E_k$ has only $k$ rows,  $y_{k+1} = \ba_k$ must belong to $\Span{\Fq}{\be_1,\ldots,\be_k}.$ 
This means that there are $q^k$ choices of the $k$-th row $\ba_k$ of $A$.
There are no extra restrictions on $A$. That is to say, the vector $\bz $ together with matrix $A$ has the following structure:
\begin{equation}\label{eq:structure_u=1}
 \begin{pmatrix}
 \bz \\ A 
 \end{pmatrix}
 = 
 \begin{pmatrix}\begin{matrix}
I_k   
\\ 
\ba 
 \end{matrix}
 \,\,\vline \,\, \bzero
\\ \hline 
\hat{A}
 \end{pmatrix},
\end{equation}
where $\ba$ contains the first $k$ coordinates of $\ba_k$ which can be freely chosen from $\Fq^k,$ $\bzero$ is the zero matrix in $\Fq^{(k+1)\times(r-k)}$,  and $\hat{A}$ is an arbitrary matrix in $\Fq^{(r-k)\times r}$.
Therefore, there are in total $q^{(r-k)r + k}$ possible pairs $(\bz,A)$ that generate $G=E_k$ as output from Algorithm \ref{alg:SeqGen}. This implies that 
\begin{equation}\label{eq:|Ck1r|}
  |C_{k}^{(1,r)}| = A_q(r,k)|C_{E_k}^{(1,r)}|  = A_{q}(r,k) q^{r(r-k) + k}.  
\end{equation}
This is consistent with the statement in Theorem \ref{th:|C_k^ur|} for $u=1.$

\subsubsection{The cases $u\geq 2.$}
Similarly to the case of $u=1$, we proceed to count the cardinality of $C_{E_k}^{(u,r)}$ by describing the pattern which the pairs of matrices $(Z,A)$ in this set must follow.
This counting will involve the use of Ferrers diagrams, which we present in the Appendix\cn.
\begin{proposition}
	\label{prop:|Gammak|}
	The cardinality of the set
	$$
	C_{E_k}^{(u,r)} = \left\{
	(Z,A) \in \Fq^{u \times r} \times \Fq^{r \times r} 
	\mid Alg(Z,A, r) = E_k \right\},
	$$
	is given by
	\begin{equation}\label{eq:|Gammak|}
	|C_{E_k}^{(u,r)}| = 
	\begin{bmatrix}
	k + u - 1 \\ u-1
	\end{bmatrix}_q q^{r(r-k) + k}.
	\end{equation}
\end{proposition}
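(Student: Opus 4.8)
First I would reduce the statement to the case $r=k$, then read off the ``staircase'' shape that a pair $(Z,A)$ must have for Algorithm~\ref{alg:SeqGen} to return $E_k$, and finally collapse the resulting sum to a single $q$-binomial coefficient by a $q$-Chu--Vandermonde identity. For the reduction: if $(Z,A)\in C_{E_k}^{(u,r)}$, then $\Omega(Z,A)=\RowSpan(E_k)=\Span{\Fq}{\be_1,\dots,\be_k}=:V$, so Lemma~\ref{lm:calV_r-1} forces $\RowSpan(Z)\subseteq V$ and $VA\subseteq V$; hence $Z=(Z'\mid\bzero)$ with $Z'\in\Fq^{u\times k}$ and $A=\begin{psmallmatrix}A'&\bzero\\ B&C\end{psmallmatrix}$ with $A'\in\Fq^{k\times k}$, where the bottom $r-k$ rows $(B\mid C)$ are arbitrary ($q^{r(r-k)}$ choices) and do not affect $ZA^i=(Z'(A')^i\mid\bzero)$, hence neither $\Omega$ nor the run of the algorithm. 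Since iterations beyond index $k$ add nothing (Lemma~\ref{lm:calV_r-1}(iii)), such a triple lies in $C_{E_k}^{(u,r)}$ precisely when $Alg(Z',A',k)=I_k$, so $|C_{E_k}^{(u,r)}|=q^{r(r-k)}\,|C_{E_k}^{(u,k)}|$ and it remains to show $|C_{E_k}^{(u,k)}|=\begin{bmatrix} k+u-1 \\ u-1 \end{bmatrix}_q q^{k}$.

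Next, for $r=k$, I would fix $(Z',A')$ with $Alg(Z',A',k)=I_k$ and use the reformulation of Algorithm~\ref{alg:SeqGen} given in \eqref{eq: matrix_relation}, in which iteration $j\ge1$ processes the rows of $Z^{(j)}=G^{(j-1)}A'$. Set $k_j:=\dim\Omega_{j+1}(Z',A')-\dim\Omega_{j}(Z',A')$ and $\kappa_j:=k_0+\cdots+k_{j-1}$; Lemma~\ref{lm:calV_r-1}(ii) gives $u\ge k_0\ge k_1\ge\cdots$ with $\sum_j k_j=k$. Because the output is $I_k$, the block $G^{(j)}$ is forced to consist of the consecutive unit vectors $\be_{\kappa_j+1},\dots,\be_{\kappa_{j+1}}$, so $Z^{(j)}=G^{(j-1)}A'$ is exactly the block of rows $\kappa_{j-1}+1,\dots,\kappa_j$ of $A'$, and all $k$ rows of $A'$ get consumed, one block per iteration. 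Reading the constraints block by block: in iteration $0$, the rows of $Z'$ must form a staircase whose ``new'' rows are $\be_1,\dots,\be_{k_0}$ (each forced) while each ``filler'' row lies in the span accumulated so far, and counting fillers by their level the $q$-weights sum to $\begin{bmatrix} u \\ k_0 \end{bmatrix}_q$ by a Ferrers-diagram identity; in iteration $j\ge1$ the new rows among the $k_{j-1}$ rows of the $j$-th block of $A'$ are forced to be $\be_{\kappa_j+1},\dots,\be_{\kappa_{j+1}}$, while a filler appearing after $\ell$ new rows of that block lies in a $(\kappa_j+\ell)$-dimensional space, so the weights sum to $q^{\kappa_j(k_{j-1}-k_j)}\begin{bmatrix} k_{j-1} \\ k_j \end{bmatrix}_q$. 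Summing over shapes,
$$
|C_{E_k}^{(u,k)}|=\sum_{\substack{u\ge k_0\ge k_1\ge\cdots\ge 0\\ k_0+k_1+\cdots=k}}\begin{bmatrix} u \\ k_0 \end{bmatrix}_q\prod_{j\ge1}q^{\kappa_j(k_{j-1}-k_j)}\begin{bmatrix} k_{j-1} \\ k_j \end{bmatrix}_q .
$$

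To evaluate this, I would group terms by $s:=k_0$: since $\sum_{j\ge1}(k_{j-1}-k_j)=k_0=s$, the factor $\prod_{j\ge1}q^{s(k_{j-1}-k_j)}=q^{s^2}$ splits off, and after reindexing the tail $(k_1\ge k_2\ge\cdots)$ as a shape of $k-s$ with largest part $\le s$ the remaining sum is $|C_{E_{k-s}}^{(s,k-s)}|$. This yields the recursion
$$
|C_{E_k}^{(u,k)}|=\sum_{s=1}^{\min(u,k)}\begin{bmatrix} u \\ s \end{bmatrix}_q q^{s^2}\,|C_{E_{k-s}}^{(s,k-s)}|,\qquad |C_{E_0}^{(u,0)}|=1 .
$$
Induction on $k$, substituting $|C_{E_{k-s}}^{(s,k-s)}|=\begin{bmatrix} k-1 \\ s-1 \end{bmatrix}_q q^{k-s}$ and applying the $q$-Chu--Vandermonde identity $\begin{bmatrix} u+k-1 \\ k \end{bmatrix}_q=\sum_{s}\begin{bmatrix} u \\ s \end{bmatrix}_q\begin{bmatrix} k-1 \\ s-1 \end{bmatrix}_q q^{s(s-1)}$, then gives $|C_{E_k}^{(u,k)}|=\begin{bmatrix} u+k-1 \\ k \end{bmatrix}_q q^{k}=\begin{bmatrix} k+u-1 \\ u-1 \end{bmatrix}_q q^{k}$. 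With the reduction this proves Proposition~\ref{prop:|Gammak|}, and through Proposition~\ref{cor:AqGammak} it also yields Theorem~\ref{th:|C_k^ur|}.

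The hard part will be the middle step: proving that the output $I_k$ forces the new rows in each block to be \emph{consecutive} unit vectors, that the realizable shapes are exactly the partitions $k_0\ge k_1\ge\cdots$ of $k$ with $k_0\le u$ (so that the block sizes are as stated), and then packaging the freedom in the filler rows into the clean $q$-binomial factors $\begin{bmatrix} u \\ k_0 \end{bmatrix}_q$ and $q^{\kappa_j(k_{j-1}-k_j)}\begin{bmatrix} k_{j-1} \\ k_j \end{bmatrix}_q$. This is exactly the combinatorial bookkeeping the Ferrers-diagram material in the Appendix is built for; once the summation formula is in hand, the final evaluation is a routine $q$-Chu--Vandermonde induction.
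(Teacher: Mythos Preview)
Your argument is correct, but it takes a longer and more computational route than the paper's. The reduction to $r=k$ and the structural analysis of the blocks $Z^{(j)}$ are the same; the difference lies in how the count is collapsed. The paper observes that once you stack $Z$ on top of the first $k$ rows of $A$ into a single $(u+k)\times k$ matrix $W$, the ``new'' rows $\be_1,\dots,\be_k$ and the ``free'' rows interleave in a single staircase, and the $u-1$ free rows above the last one form a $(u-1)\times k$ Ferrers diagram directly. One application of \eqref{eq:ferrers_count_kxr-k} then gives $\gbinom{k+u-1}{u-1}_q$, with an extra $q^k$ for the last row and $q^{r(r-k)}$ for $\hat A$. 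No shape decomposition, no recursion, no $q$-Chu--Vandermonde.

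Your approach instead stratifies by the shape $(k_0\ge k_1\ge\cdots)$, produces the per-block factors $\gbinom{u}{k_0}_q$ and $q^{\kappa_j(k_{j-1}-k_j)}\gbinom{k_{j-1}}{k_j}_q$, and then sums over shapes via a recursion that a $q$-Vandermonde identity closes. This is more work for Proposition~\ref{prop:|Gammak|} itself, but it has a real payoff: the per-shape decomposition is exactly the refinement one needs for the bounded-$t$ count in Theorem~\ref{th:|Gammak^t|}, where the question is which shapes finish within $\lceil r/u\rceil+1$ iterations. So your route is heavier here but aligns naturally with the next result; the paper's route is slicker for this proposition but has to set up the shape analysis afresh later.
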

\begin{proof}
	Let $(Z,A)$ be a pair of matrices in $C_{E_k}^{(u,r)}.$
	This implies that $Z = (\hat{Z} | \bzero) \in \Fq^{u \times r}$ where $\hat{Z} \in \Fq^{u \times k}$ and $\bzero \in \Fq^{u \times (r-k)}$ is the zero matrix. 
    The matrix $A$ should satisfy the condition $\bv A \in \Span{\Fq}{E_k}$ for all $\bv \in \Span{\Fq}{E_k}.$
	This means that $A$ can be written as
	\begin{equation}\label{eq:StructA}
	A = 
	\begin{pmatrix}
	\begin{matrix}
	Y & \vline & \bzero
	\end{matrix}
	\\ 
	\hline
	\hat{A}
	\end{pmatrix},
	\end{equation}
	where $Y \in \Fq^{k \times k}$, $\bzero$ is the zero matrix in $\Fq^{k \times (r-k)}$ and $\hat{A} \in \Fq^{(r-k) \times r \cn}.$ Hence we have
	\begin{equation}\label{eq:StructM}
	\begin{pmatrix}
	Z \\ A
	\end{pmatrix}
	=  
	\begin{pmatrix}
	\begin{matrix}
	\begin{matrix} \hat{Z} \\ Y \end{matrix} &
	\vline & \bzero 
	\end{matrix} \\
	\hline
	\hat{A}
	\end{pmatrix} = \begin{pmatrix}
	W\\
	\hline
	\hat{A}
	\end{pmatrix},
	\end{equation}
	where $\bzero$ is the null matrix in $\Fq^{(u+k)\times(r-k)},$ $W=\begin{pmatrix}
	\begin{matrix} \hat{Z} \\ Y \end{matrix} &
	\vline & \bzero 
	\end{pmatrix} \in \Fq^{(u+k) \times r}.$ Notice that the submatrix $\hat{A} \in \Fq^{(r-k) \times r}$ has no impact on the execution of Algorithm \ref{alg:SeqGen}, so
	it can be chosen in $q^{r(r-k)}$ distinct ways.
	In the sequel we analyse the required properties of $W$ in detail by following the steps in Algorithm \ref{alg:SeqGen}.
	
	
	
    Recall from the discussion after Example \ref{Ex1} that $Z^{(0)}=Z$, while $G^{(0)}$ is obtained by sequentially extracting all the linearly independent row vectors in $Z^{(0)}.$
    For $j=1, \ldots, r-1$, $Z^{(j)} = G^{(j-1)}A,$ and $G^{(j)}$ is formed by all the rows in $Z^{(j)}$ which cannot be linearly expressed by rows in $G^{(0)}, \ldots, G^{(j-1)}$ and preceding rows in $Z^{(j)}.$
	It is clear that $Z^{(j)}$ is a sub-matrix of $Z^{(j-1)}A,$ and is thus a sub-matrix of $Z^{(0)}A^j = ZA^j.$     
	Suppose that in \eqref{eq: matrix_relation} the matrices $G^{(0)},\ldots, G^{(r-1)}$ have $k_0, \ldots, k_{r-1}$ rows, respectively. 
	Then $k_0 \geq k_1 \geq \ldots \geq k_{r-1}$ and $k_0 + k_1 +  \cdots +  k_{r-1} = k.$ Below we analyze the properties of $(Z,A)$ that gives as output the matrix $G=E_k$.
	
	As indicated in \eqref{eq: matrix_relation}, if $G = E_k$ then $G^{(0)} = E_{k_0}$ for some $k_0 \leq \min(u,k).$
    This implies that the matrix $Z^{(1)}=G^{(0)} A$ corresponds to the first $k_0$ rows of the matrix $A.$
	That is to say, $Z^{(1)}$ is formed by the first $k_0$ rows of the matrix $A$, and it corresponds to the rows ranging from $u+1$ to $u+k_0$ in matrix $W.$
	More generally, for $j=1, \dots, r-2$, the matrix $G^{(j)}$ is sequentially formed by the row vectors $\be_i$ with $k_0 + \cdots + k_{j-1} < i \leq k_0 + \ldots + k_{j}$.
    This implies that, the matrix $Z^{(j+1)} = G^{(j)} A,$ is formed by 
	the rows of $A$ in the same range, which corresponds to 
	the rows ranging from $u+(k_0 + \cdots + k_{j-1}) + 1$ to $u+(k_0 + \cdots + k_{j})$ in the matrix $W.$
	Hence we can summarize the relation between $W$, $Z^{(j)}$ and $G^{(j)}$ as follows:
	\begin{equation}\label{eq:S_struct}
	W = \begin{pmatrix}
	Z^{(0)}  \\ \hline 
	Z^{(1)}  \\ \hline 
	\vdots   \\ \hline 
	Z^{(r-1)}
	\end{pmatrix}
	\Longrightarrow \begin{pmatrix}
	G^{(0)}  \\ \hline 
	G^{(1)}  \\ \hline 
	\vdots   \\ \hline 
	G^{(r-1)}
	\end{pmatrix} = E_k,
	\end{equation} where matrices $G^{(j)}$ might be an empty matrix for $j$ larger than certain integer $t$ if $\be_k$ already occurs in $G^{(t)}$. 
	
	Denote by $\bw_i$ the $i$-th row of $W$ for $1\leq i\leq u+k$, and denote by $\bw_{i_1}, \dots, \bw_{i_k}$ the rows in $W$ corresponding to the first occurrence of $\be_1, \dots, \be_k$.
	From the above discussion we see that, for any $i$ with $i_j \leq i < i_{j+1}$, where $1\leq j<k$, the row $\bw_{i}$ can be linearly expressed by $\be_1, \dots, \be_{j}$, namely, $\bw_i $ can be any of the $q^{j}$ vectors of $\Span{\Fq}{\be_1, \dots, \be_j}$;
	and for those $i = i_j$ for $1\leq j\leq k$, the row $\bw_i$ must be the vector $\be_j$. 
	This indicates that there are $u$ ``free'' rows in $W$ in addition to the $k$ rows $\be_1, \dots, \be_k$.
	If $i_{k'} + l < i_{k'+1},$ the row $\bw_{i_{k'}+l}$ can be
	expressed as a linear combination of $\be_1, \dots, \be_{k'}$, indicating that $\bw_{i_{k'}+l}$ can be substituted by any vector such that its first $k'$ coordinates are randomly chosen from $\Fq$ and the remaining $(r-k')$ coordinates are set to zero.
    Since we know that the output must be $E_k,$ it must be that either $\be_k$ belongs to $Z$ or that $\be_k = \be_{i} A$ for some $i < k.$
	Then, similarly to the case of $u=1$, the last row of $W$, is always free to be any of the $q^k$ elements of $\calE_k.$
    
    Hence, we only need to consider the $(u-1)$ ``free'' rows in the matrix $W$.
	Below we provide a toy example to the above discussion before the remaining analysis.
	E.g., in the case of $k=4, u = 4$, the following matrix 
	$$
	W = 
	\begin{psmallmatrix}
	1   & 0   & 0   & 0   \\
	\oo & 0   & 0   & 0   \\
	0   & 1   & 0   & 0   \\
	\oo & \oo & 0   & 0  \\
	\\ \hline \\
	0   & 0   & 1   & 0   \\
	\oo & \oo & \oo & 0  \\
	\\ \hline \\
	0   & 0   & 0   & 1   \\
	\\ \hline \\
	\oo	& \oo & \oo & \oo 
	\end{psmallmatrix}
	=
	\begin{pmatrix}
	Z^{(0)} 
	\\ \hline 
	Z^{(1)} 
	\\ \hline 
	Z^{(2)} 
	\\ \hline 
	Z^{(3)} \\
	\end{pmatrix}
	\Longrightarrow
	E_4 = 
	\begin{pmatrix}
	G^{(0)} 
	\\ \hline 
	G^{(1)} 
	\\ \hline 
	G^{(2)} 
	\\ \hline
	G^{(3)} 
	\end{pmatrix}
	=
	\begin{psmallmatrix}
	1 & 0 & 0 & 0 \\
	0 & 1 & 0 & 0 
	\\ \hline \\
	0 & 0 & 1 & 0 
	\\ \hline \\
	0 & 0 & 0 & 1 \\
	\end{psmallmatrix},
	$$ where $G^{(3)}$ is an empty matrix, 
	and	each $\oo$ stands for a free variable in $\Fq.$
	We can consider the submatrix $W'$ made by the first $u+k-1$ rows of $W$.
	The matrix obtained from the free rows of this matrix resembles to a $(u-1)\times k$ Ferrers diagram $F_{W}$.
	Continuing with the example above we would obtain 
	$$
	F_{W'} =
	\begin{psmallmatrix}
	\oo & 0   & 0   & 0   \\
	\oo & \oo & 0   & 0   \\
	\oo & \oo & \oo & 0   
	\end{psmallmatrix}.
	$$
	Each of the possible diagrams are counted as $q^{|F_{W'}|}$ where $|\calF_{W'}|$ is the number of free variables in $\calF_{W'}.$ 
	Recall from (\ref{eq:ferrers_count_kxr}) that     
	the sum of all $n \times k$ Ferrers diagram counted by $q$ elevated to their weights is equal to $\begin{bsmallmatrix} n + k \\ k\end{bsmallmatrix}_q$. 
	Applying (\ref{eq:ferrers_count_kxr}) to our case, there are $\begin{bsmallmatrix} k + u - 1 \\ u - 1\end{bsmallmatrix}_q$ possible distinct matrices $W'.$
	
	To sum up, we have
	$$
	\begin{pmatrix}
	Z \\ A
	\end{pmatrix} = 
	\begin{pmatrix}
	\begin{matrix}
	\begin{matrix} W' \\ \bw_{u+k} \end{matrix} &
	\vline & \bzero 
	\end{matrix} \\
	\hline 
	\hat{A}
	\end{pmatrix},
	$$
	where the matrix $\hat{A}$ can be chosen in $q^{r(r-k)}$ distinct ways among the matrices in $\Fq^{(r-k) \times r}$, the vector $\bw_{k+u}$ can be chosen in $q^k$ ways among the vectors in $\calE_k$ and $W'$ can be chosen in $\gbinom{k + u - 1}{u - 1}_q$  distinct ways. This gives the cardinality 
	$$
	|C_{E_k}^{(u,k)}| = \begin{bmatrix} k + u - 1 \\ u - 1\end{bmatrix}_q q^{r(r-k) + k}.
	$$  
	The desired statement thus follows.
\end{proof}
From the results in Propositions \ref{cor:AqGammak} and \ref{prop:|Gammak|}, the desired statement in Theorem \ref{th:|C_k^ur|} directly follows. 

Before ending this section, 
below we calculate the probability that a uniformly distributed random choice of $(Z, A) \in \Fq^{u \times r} \times \Fq^{r \times r}$ expands the row space of $Z$ to the whole space $\Fq^r.$
\begin{corollary}\label{cor:Pr(C_k^u)}
    The probability that a random matrix $Y = \begin{psmallmatrix}
    Z \\ A
\end{psmallmatrix}$ uniformly drawn from $\Fq^{(u+r) \times r}$ produces a $k$-dimensional subspace $\Omega(Z,A)$ is given by 
\begin{equation}\label{eq:Pr(C_k^u)}
     \frac{|C_k^{(u,r)}|}{q^{r(r+u)}} =  \frac{H_q(r) H_q(k+u-1)}{H_q(r-k)H_q(k)H_q(u-1)}q^{-u(r-k)}.
\end{equation}
In particular, for $k=r$  the probability is given by
\begin{equation}\label{eq:Pr(C_r^u)}
    \frac{|C_r^{(u,r)}|}{q^{r(r+u)}} =  \frac{H_q(r+u-1)}{H_q(u-1)} \geq 1 - \frac{q^{-u+1}}{q-1}.
\end{equation}
\end{corollary}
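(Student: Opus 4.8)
The plan is to derive both displayed identities directly from the closed-form count of $|C_k^{(u,r)}|$ obtained in Theorem \ref{th:|C_k^ur|}, together with the translation $A_q(n,k) = q^{nk} H_q(n)/H_q(n-k)$ recorded in \eqref{eq:Aq->Hq}. Since $Y = \begin{psmallmatrix} Z \\ A\end{psmallmatrix}$ ranges over all of $\Fq^{(u+r)\times r}$, which has $q^{r(r+u)}$ elements, and since $\dim(\Omega(Z,A))=k$ exactly when $(Z,A)\in C_k^{(u,r)}$, the desired probability is literally $|C_k^{(u,r)}|/q^{r(r+u)}$. So the first step is purely to substitute the formula from Theorem \ref{th:|C_k^ur|} and simplify.

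First I would write
$$
\frac{|C_k^{(u,r)}|}{q^{r(r+u)}} = \begin{bmatrix} k+u-1 \\ u-1\end{bmatrix}_q \frac{H_q(r)}{H_q(r-k)}\, q^{rk}\, q^{r(r-k)+k}\, q^{-r(r+u)}.
$$
The exponent of $q$ collapses: $rk + r(r-k) + k - r(r+u) = k - ru = -u(r-k) + k - uk + \ldots$ — more carefully, $rk + r^2 - rk + k - r^2 - ru = k - ru$, and then I rewrite $k - ru = -u(r-k) - (u-1)k + k$; comparing with the target exponent $-u(r-k)$ I must also expand the Gaussian binomial. Using $\gbinom{k+u-1}{u-1}_q = \gbinom{k+u-1}{k}_q = A_q(k+u-1,k)/A_q(k,k) = q^{(k+u-1)k - k^2}\, \frac{H_q(k+u-1)}{H_q(u-1)H_q(k)} = q^{(u-1)k}\,\frac{H_q(k+u-1)}{H_q(u-1)H_q(k)}$, and multiplying the $q$-powers $(u-1)k + (k-ru) = uk - ru = -u(r-k)$, I obtain exactly $\frac{H_q(r)H_q(k+u-1)}{H_q(r-k)H_q(k)H_q(u-1)}q^{-u(r-k)}$, which is \eqref{eq:Pr(C_k^u)}.

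For the special case $k=r$ in \eqref{eq:Pr(C_r^u)}, substitute $k=r$: the factor $q^{-u(r-k)}$ becomes $1$, $H_q(r-k)=H_q(0)=1$, and $H_q(r)/H_q(k) = H_q(r)/H_q(r) = 1$, leaving $\frac{H_q(r+u-1)}{H_q(u-1)}$. Then for the lower bound I would use the identity from \eqref{Eq_prob_matrix}, namely $H_q(n)/H_q(n-j) = \prod_{i=0}^{j-1}(1-q^{i-n})$, applied with $n = r+u-1$ and $j = r$: this gives $\prod_{i=0}^{r-1}(1-q^{i-(r+u-1)})$. Bounding each factor below by dropping the product-of-corrections cross terms — equivalently using $\prod(1-x_i) \geq 1 - \sum x_i$ for $x_i\in[0,1)$ — yields $\geq 1 - \sum_{i=0}^{r-1}q^{i-r-u+1} = 1 - q^{-u+1}\frac{q^r-1}{q-1}\cdot q^{-r} \geq 1 - \frac{q^{-u+1}}{q-1}$, as claimed.

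I do not anticipate a genuine obstacle here; the only mild care needed is bookkeeping of the $q$-exponents when converting the Gaussian binomial to $H_q$-form, and justifying the elementary inequality $\prod_{i}(1-x_i)\geq 1-\sum_i x_i$ for nonnegative $x_i$ (a one-line induction), so the write-up will be short and essentially computational.
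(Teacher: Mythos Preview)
Your proposal is correct and follows essentially the same route as the paper's proof: start from the count in Theorem \ref{th:|C_k^ur|}, divide by $q^{r(r+u)}$, rewrite the Gaussian binomial and $A_q$-factors in $H_q$-form via \eqref{eq:Aq->Hq}, and collect the $q$-exponents. Your handling of the lower bound for $k=r$ via $\prod_i(1-x_i)\geq 1-\sum_i x_i$ is in fact slightly more explicit than the paper, which simply invokes the approximation $H_q(a)/H_q(a-b)\approx 1-q^{-a}(q^b-1)/(q-1)$ from \eqref{Eq_prob_matrix}; both yield the stated inequality.
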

\begin{proof}There are $q^{r(r+u)}$ matrices $Y$.
    It follows from Theorem \ref{th:|C_k^ur|} that 
    the probability is given by
    $$
        \frac{|C_k^{(u,r)}|}{q^{r(r+u)}} = A_q(r,k) \begin{bmatrix} k + u - 1 \\ u - 1\end{bmatrix}_q q^{r(r-k) + k - r(r+u)}.
    $$
    We can substitute $\begin{bmatrix} k + u - 1 \\ u - 1\end{bmatrix}_q = A_q(k + u - 1, u-1)/A_q(u-1,u-1)$
    and use (\ref{eq:Aq->Hq}) for each of the expression in $A_q.$
    In this way we obtain
    $$
         \frac{|C_k^{(u,r)}|}{q^{r(r+u)}} = \frac{H_q(r) H_q(k+u-1)}{H_q(r-k)H_q(k)H_q(u-1)}q^{-u(r-k)}.
    $$
    Using the convention $H_q(0) = 1$ and substituting $k=r$, we obtain the second expression in (\ref{eq:Pr(C_r^u)}).
    The approximation is obtained from 
    $$
        H_q(a)/H_q(a-b) \approx 1 - q^{-a} \frac{q^b - 1}{q-1} .
    $$
    In our case we have $a = r + u - 1$ and $b = r$, which concludes the proof.
\end{proof}

\subsection{Calculation of $C_r^{(u,r,t)}$}
In the previous subsection we calculate the probability that $\dim(\Omega(Z,A))=k.$
From its application in Section \ref{sec:Successful}, we are particularly interested in the case $k=r.$
To derive this probability we counted the pairs $(Z,A)$ such that $\Omega(Z,A)=\Omega_{r}(Z,A) = \Fq^r.$
From experimental results we observed that, in most of the cases, if $\dim(\Omega(Z,A)) = r,$ then $\dim(\Omega_{t}(Z,A)) = r$ for $t = \lceil \frac{r}{u} \rceil + 1.$
This means that, in most of the cases, Algorithm \ref{alg:SeqGen} outputs the same result on $(Z,A),t$ with $t$ much smaller than $r$.
In equation (\ref{eq:m>=2(d+t)r-3r}) of \cn Section \ref{sec:bdlrpc}, we have seen that $m \geq (2(d+t)-3)r$ where $t$ represents the number of expansions (i.e. the last input of Algorithm \ref{alg:SeqGen}).
In general $\lceil \frac{r}{u} \rceil + 1$ is much smaller than $r.$ It is clear that, using such a small value $t,$ allows for a smaller value of $m.$
In particular, asymptotically we can still correct an error with a probability arbitrarily close to $1$ and a value of $m$ such that $\lim_{n \to \infty}\frac{n}{m} = b$ for some $b > 0.$

Suppose that $u$ divides $r.$ For \cn $t = \frac{r}{u},$ it is already possible that $\Omega_t(Z,A) = \Omega(Z,A) = \Fq^r$.
Each of the matrices $Z,ZA, \ldots,ZA^{t-1}$ have $u$ rows. 
Combining all their rows we get $tu = r$ row vectors of $\Fq^r,$ if all these vectors are linearly  independent, then $\Omega_{r/u}(Z,A) = \Fq^r.$

We will show how, for large values of $u$ and $r,$ the probability that $\Omega_{t}(Z,A) = \Fq^r,$ conditioned on $\Omega(Z,A) = \Fq^r$ is lower bounded by $1-q^{-u/2}/(q-1)$ for $t = \lceil \frac{r}{u} \rceil + 1.$
The desired probability is obtained as the ratio between the cardinality of $C_r^{(u,r)} = \{(Z,A) \mid \Omega(Z,A) = \Fq^r \}$ and $C_r^{(u,r,t)} = \{(Z,A) \mid \Omega_t(Z,A) = \Fq^r\}$ noticing that $C_r^{(u,r,t)} \subseteq C_r^{(u,r)}.$
In Theorem \ref{th:|C_k^ur|} we already calculated $|C_r^{(u,r)}|.$
We will use a similar process to count $|C_r^{(u,r,t)}|.$
The proof will be structured in the following way.
\begin{itemize}
    \item By Proposition \ref{cor:Pr(C_k^u)}, we have $|C_k^{(u,r,t)}| = A_q(k,r)|C_{E_k}^{(u,r,t)}|$ for certain integer $t$.
    We are interested in the case that $r=k$, indicating $|C_r^{(u,r,t)}| = A_q(r,r)|C_{I_r}^{(u,r,t)}|$, where $E_k$ for $k=r$ becomes the identity matrix $I_r$ in $\Fq^{r\times r}$.
    \item As in the proof of Proposition \ref{prop:|Gammak|}, we consider the matrices $W = \begin{psmallmatrix}
        Z \\ A
    \end{psmallmatrix}$ that have output $E_r$ (i.e. the $r \times r$ identity matrix) through Algorithm \ref{alg:SeqGen}.
    We partition these matrices according to their associated Ferrers diagram.
    Then, we establish a relation between the number of inner cycles needed to obtain the output $E_r$ from a certain matrix W and its associated Ferrers diagram.
    \item Using this relation, we identify a particular subset of Ferrers diagrams such that, all the matrices matching those diagrams, need at most $t = \lceil \frac{r}{u} \rceil + 1$ inner cycles to obtain the desired output $E_r$.
    \item Although this subset \cn is not representing all the Ferrers diagrams with that property, when we count the set of all the matrices matching with a Ferrers  diagram in this subset\cn, this gives us already a good lower bound for ${|C_{E_r}^{(u,r,t)}|}$ as we will see in Theorem \ref{th:|Gammak^t|}.
\end{itemize}

We describe now more in detail the lower bound we give to $|C_{E_r}^{(u,r,t)}|.$
\begin{theorem}\label{th:|Gammak^t|}
   For an integer $t= \lceil \frac{r}{u} \rceil + 1$, one has  
    \begin{equation}\label{low:Gammak^t}
         |C_{E_r}^{(u,r,t)}| \geq \begin{bsmallmatrix}
            u + \frac{u}{2} \\ \frac{u}{2} 
        \end{bsmallmatrix}_q q^{ur-\frac{u^2}{2}}.
    \end{equation}
\end{theorem}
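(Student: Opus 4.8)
The plan is to imitate the counting behind Proposition~\ref{prop:|Gammak|}, but to keep only those matrix shapes that make Algorithm~\ref{alg:SeqGen} terminate after at most $t=\lceil r/u\rceil+1$ inner cycles. Recall from the proof of Proposition~\ref{prop:|Gammak|} that a pair $(Z,A)$ produces $Alg(Z,A,t)=E_r$ precisely when, reading the rows of $W=\begin{psmallmatrix}Z\\A\end{psmallmatrix}\in\Fq^{(u+r)\times r}$ top to bottom, they split into the consecutive blocks $Z^{(0)}=Z$ and $Z^{(j)}=G^{(j-1)}A$ ($j\ge1$), of sizes $u,k_0,k_1,\dots$, each row being either a pivot (the vectors $\be_1,\dots,\be_r$ occurring in this order) or a ``free'' row lying in the span of the pivots before it; here $k_j=\dim\Omega_{j+1}(Z,A)-\dim\Omega_j(Z,A)$, so by Lemma~\ref{lm:calV_r-1} one has $u\ge k_0\ge k_1\ge\cdots$, and the algorithm with parameter $t$ outputs $E_r$ exactly when $k_0+\cdots+k_{t-1}=r$. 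Thus $C_{E_r}^{(u,r,t)}$ is the disjoint union, over weakly decreasing compositions $(k_0,\dots,k_{s-1})$ of $r$ with $k_0\le u$ and $s\le t$, of the sets of $W$ realizing each composition.

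For the lower bound I would use only the following sub-family (presented for $u$ even with $u\mid r$; write $a=r/u=\lceil r/u\rceil$, so $t=a+1$). For each $v$ with $0\le v\le u/2$ take the composition $(\underbrace{u,\dots,u}_{a-1},\,u-v,\,v)$, dropping the final $v$ when $v=0$. It is weakly decreasing, has $k_0=u$ and length $\le a+1=t$, so every $W$ realizing it lies in $C_{E_r}^{(u,r,t)}$. For fixed $v$ the blocks $Z^{(0)},\dots,Z^{(a-2)}$ are pure pivots, $Z^{(a-1)}$ (size $u$) carries $u-v$ pivots and $v$ free rows, $Z^{(a)}$ (size $u-v$) carries $v$ pivots and $u-2v$ free rows, and $Z^{(a+1)}$ (size $v$, never processed) carries $v$ more free rows. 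Counting exactly as in Proposition~\ref{prop:|Gammak|}: a free row preceded by $c$ pivots contributes $q^{c}$ choices, the interleavings inside $Z^{(a-1)}$ and inside $Z^{(a)}$ are weighted by the Ferrers sums $\begin{bmatrix}u\\v\end{bmatrix}_q$ and $\begin{bmatrix}u-v\\v\end{bmatrix}_q$ via (\ref{eq:ferrers_count_kxr}), and the block-level base exponents are $(a-1)u$, $r-v$, $r$ respectively; using $r=au$ the total simplifies to
\[
\#\{W\text{ realizing the }v\text{-composition}\}=\begin{bmatrix}u\\v\end{bmatrix}_q\begin{bmatrix}u-v\\v\end{bmatrix}_q\,q^{\,ur-2v(u-v)} .
\]
Summing over $v$ and invoking the disjointness above gives
\[
|C_{E_r}^{(u,r,t)}|\ \ge\ \sum_{v=0}^{u/2}\begin{bmatrix}u\\v\end{bmatrix}_q\begin{bmatrix}u-v\\v\end{bmatrix}_q\,q^{\,ur-2v(u-v)} .
\]

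It then remains to extract the stated closed form. Since $ur-2v(u-v)=\bigl(ur-\tfrac{u^2}{2}\bigr)+2\bigl(\tfrac u2-v\bigr)^2$, it suffices to prove $\sum_{v=0}^{u/2}\begin{bmatrix}u\\v\end{bmatrix}_q\begin{bmatrix}u-v\\v\end{bmatrix}_q q^{\,2(u/2-v)^2}\ge\begin{bmatrix}3u/2\\u/2\end{bmatrix}_q$, which I would do by termwise comparison with the $q$-Vandermonde (Durfee-rectangle) expansion $\begin{bmatrix}3u/2\\u/2\end{bmatrix}_q=\sum_{v=0}^{u/2}q^{(u-v)(u/2-v)}\begin{bmatrix}u\\v\end{bmatrix}_q\begin{bmatrix}u/2\\v\end{bmatrix}_q$: after cancelling $\begin{bmatrix}u\\v\end{bmatrix}_q$ and simplifying exponents, the per-term inequality reduces to $\begin{bmatrix}u-v\\v\end{bmatrix}_q\ge q^{\,v(u/2-v)}\begin{bmatrix}u/2\\v\end{bmatrix}_q$, and this follows from the elementary estimate $\frac{q^{\,u-v-i}-1}{q^{\,u/2-i}-1}\ge q^{\,u/2-v}$ for $0\le i\le v-1$ multiplied over $i$.

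I expect the main obstacle to be the per-composition count in the second paragraph, i.e.\ checking that the Ferrers bookkeeping (the base exponents attached to each block and the fact that \emph{any} interleaving of pivots with free rows inside a block is admissible) reproduces $\begin{bmatrix}u\\v\end{bmatrix}_q\begin{bmatrix}u-v\\v\end{bmatrix}_q q^{ur-2v(u-v)}$ without over- or under-counting, and that the sets for different $v$ are genuinely disjoint. A secondary, purely technical, nuisance is the general case: when $u$ is odd or $u\nmid r$ one must replace $u/2$ by $\lfloor u/2\rfloor$ (or $\lceil u/2\rceil$), adjust the last one or two parts of each composition, and recheck that the compositions stay weakly decreasing of length $\le t$; the $q$-Vandermonde comparison in the last paragraph, by contrast, is routine.
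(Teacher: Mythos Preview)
Your argument is correct and takes a genuinely different route from the paper's. The paper attaches to each admissible $W$ its $u\times r$ Ferrers diagram (the $(u-1)\times r$ free-row diagram with an extra all-$\bullet$ row appended), interprets the number of inner cycles as a lattice-path walk on that diagram, identifies a single threshold diagram $F=[u,\dots,u,u/2,\dots,u/2]$ such that every $F'\ge F$ terminates within $t$ steps, and then reads off the closed form $\gbinom{3u/2}{u/2}_q q^{ur-u^2/2}$ directly from Lemma~\ref{lm:FerrersSpecialCount}, with no further identities required. You instead enumerate by the composition $(k_0,k_1,\dots)$, keep the one-parameter family $(u,\dots,u,u-v,v)$, and obtain the explicit sum $\sum_{v=0}^{u/2}\gbinom{u}{v}_q\gbinom{u-v}{v}_q q^{ur-2v(u-v)}$, which you then compare termwise with the $q$-Vandermonde expansion of $\gbinom{3u/2}{u/2}_q$. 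Your per-composition count, disjointness, and the termwise inequality are all correct as written. Your intermediate sum is in fact strictly larger than the paper's closed form (e.g.\ at $q=1$, $u=4$, $r=8$ one gets $19$ versus $15$): the two subfamilies of $W$'s are not nested, and yours is richer because you allow the last pivot $\be_r$ to sit anywhere in $Z^{(a)}$, whereas the paper's condition $f'_r\ge u/2$ forces at least $u/2$ free rows after it. The paper's route is shorter since the bound drops out of a single lemma; yours buys a sharper intermediate estimate at the cost of the extra $q$-Vandermonde step. Both arguments work under the same simplifying assumptions ($u$ even, $u\mid r$).
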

\begin{proof}
In Proposition \ref{prop:|Gammak|} we obtained $|C_{E_k}^{(u,r)}| = |C_{E_k}^{(u,r,r)}| = q^{r(r-k)+k} \begin{bsmallmatrix} k+u-1 \\ u-1 \end{bsmallmatrix}_q.$
We are interested in the case $k=r$. The formula of $|C_{E_k}^{(u,r)}|$ in Proposition  \ref{prop:|Gammak|}  was obtained by counting the possible matrices 
\begin{equation}
    W = 
    \begin{pmatrix}
        Z\\
        A
    \end{pmatrix}
    =
    \begin{pmatrix}
        Z^{(0)}\\
        Z^{(1)} \\
        \vdots \\
        Z^{(r-1)}
    \end{pmatrix},
\end{equation}
such that the each row is either $\be_i$ or a combination of the previous rows and the rows corresponding to $\be_i$ appear in the order from $\be_1$ to $\be_k.$

We managed to count all these matrices associating them with $(u-1)\times r$ Ferrers diagrams matching their structure.
From the Ferrers diagram associated to $W$, it is possible to count the number of steps $t$ needed such that 
$$
    E_r = \begin{psmallmatrix}
            G^{(0)}\\ \vdots \\ G^{(t-1)}
        \end{psmallmatrix}.
$$ 
As we have seen in the Appendix \cn, a $(u-1)\times r$ Ferrers diagram $F$ can be described by a sequence of $r$ integers $[f_1, \ldots, f_r]$ such that $u-1 \geq f_i \geq f_{i+1} \geq 0.$
Starting form this sequence it is possible to reconstruct a set of matrices $W \subseteq \Fq^{(u + k) \times r}$ having $F$ as their associated Ferrers diagram.
We start from $f_1$ and consider $f_0 = u-1, f_{r+1} = 0.$ 
If $f_1 < f_0$ then the first $f_1-f_0$ rows of $W$ will all be zero rows.
Otherwise, we consider the first index $i_1$ such that $f_{i_1} = f_0$ and $f_{i_1+1} < f_{i_1}.$
We add the rows $\be_1, \ldots, \be_{i_1}$ to $W$ followed by $f_{i_1} - f_{i_1 + 1}$ rows with $i_1$ initial $\oo$ followed by $r-i_1$ zeroes.
We proceed considering $f_j$ for $j \geq i_1 + 1.$
As long as $f_j = f_{i_1 + 1}$, we continue to add $\be_{i_1 +1}, \ldots, \be_{j}$ to $W$ until we find the next index $i_2$ such that $f_{i_2+1} < f_{i_2.}$
At this point we add $f_{i_2} - f_{i_2 + 1}$ identical rows with $i_2$ initial $\oo$ followed by $r- i_2$ zeroes.
We continue in this way until we reach $f_r.$
If $f_r > 0,$ then we add $f_r + 1$ rows entirely filled by $\oo,$ otherwise we add just one row of this type. 
(In Example \ref{ex:SSplit} we show this procedure on three different Ferrers diagrams.)

Recall that $Z^{(j)}$ is a submatrix of $Z^{(0)} A^j$ and is obtained from $G^{(j-1)} A.$
We can divide the matrix in blocks as follows.
The first block is always $Z = Z^{(0)}$ and corresponds to the first $u$ rows.
The second block is given by $Z^{(1)}= G^{(0)}A,$ where $G^{(0)}$ is the matrix obtained removing all the free rows of $Z^{(0)}$.
Suppose that $Z^{(0)}$ has $l_0$ free rows, then $G^{(0)}$ will have $u - l_0$ rows and $Z^{(1)} = G^{(0)}A$ will have the same number of rows which will correspond to the rows of $W$ having index between $u + 1$ and $2u - l_0.$

If  $l_0, \ldots, l_j$ are the number of free rows in $Z^{(0)}, \ldots, Z^{(j)}$, then the block $Z^{(j+1)} = G^{(j)}A$ will be constituted by the $u - (l_0 + \cdots + l_j)$ rows following the block $Z^{(j)}.$
In Example \ref{ex:SSplit}, we provide some examples of how to pass from a Ferrers diagram to its respective matrix form divided into the blocks $Z^{(j)}.$ 

We need to find a way to characterize the Ferrers diagrams such that 
\begin{equation}
    E_r = 
    \begin{psmallmatrix}
        G^{(0)} \\
        \vdots \\
        G^{(t-1)}
    \end{psmallmatrix}
\end{equation}
for $t \leq \lceil \frac{r}{u} \rceil + 1 .$
We will not completely characterize these diagrams but we will show a very large subset of Ferrers diagram with the desired property.
The matrices matching this subset, when counted with multiplicities, represents the vast majority of the possible cases.
This leads to a lower bound for $|C_{E_r}^{(r,u,t)}|$.

One simple way to count how many iterations are needed to generate the whole space $\Fq^r$ starting from a matrix $W$ having an associated $(u-1) \times r$ Ferrers diagram $F_W$ is the following.
First we add the row $(\oo \ldots \oo)$ on the bottom of $F_W.$
Then we can represent the Ferrers diagram as a $u \times r$ grid of squares and draw a line from the top-left corner to the bottom-right corner. 
This line will separate the cells containing a $\oo$ from the cells containing a zero.
The line is always constituted of precisely $r$ horizontal segments and $u$ vertical segments.
Beginning from the top-left corner, we move along the line by $u$ segments and mark the point we reach.
In subsequent steps, we start from the last marked point, we continue to move along the line by a number of segments corresponding to the number of horizontal segments we covered in the previous step and we mark the new point that we reach.

This procedure will divide the line into at most $r$ chunks.
We repeat this procedure until all the horizontal segments are covered.
The number of chunks corresponds to the number of blocks $Z^{(j)} $ in $W$ \cn needed to generate $\Fq^r.$
In Example \ref{ex:Line} we consider some Ferrers diagram and show their respective line partition according to the procedure described above.

We can identify a subset of Ferrers diagrams covered by a small number of steps.

For $t = \lceil \frac{r}{u} \rceil + 1$ we have $t-1 \geq \frac{r}{u}$ with equality only when $u$ divides $r$.
Assuming that $u$ divides $r$ represents the most challenging scenario and can then be used to set a lower bound.

Let $F$ be the $u \times r$ Ferrers diagram described by the sequence of non-decreasing positive integers, i.e., $F = [f_1, \ldots, f_r],$ where $f_i = u$ if $i \leq r-u$ and $f_i= \frac{u}{2}$ otherwise.
Assume $\widehat{F} = \{ F' \in \calF_{u,r} \mid F' \geq F\}$ is the set of all the possible Ferrers diagrams bigger than $F$ as in Lemma \ref{lm:FerrersSpecialCount}.
If we depict the $u \times r$ Ferrers diagrams as paths over a $u \times r$ grid, $\widehat{F}$ is the set of all paths that are always on top or equal to $F.$
\begin{center}
  \begin{tikzpicture}[scale=0.75]
  \filldraw[fill=black!50, draw=none] (0,0) rectangle (8,2);

  \filldraw[fill=black!20, draw=none] (6,1) rectangle (8,2);

  \draw[ultra thick, black] (0,2) |- (6,2) |- (6,1) |- (8,1) |- (8,0);
  
  \filldraw[fill=black] (2,2) circle (4pt);
  \filldraw[fill=black] (4,2) circle (4pt);
  \filldraw[fill=black] (6,2) circle (4pt);

  \draw[black] (2, 2) node [anchor = south] {$A_1$};
  \draw[black] (6, 2) node [anchor = south] {$A_{t-2}$};
  \draw[black] (3.5, 1.4) node [anchor = north] {$F$};

    \draw[thick, |-|] (-0.5,0) -- (-0.5,2) node[midway, anchor=east, xshift=-5pt] {$u$};

    \draw[thick, |-|] (0,-0.5) -- (8,-0.5) node[midway, anchor = north, yshift= -5pt] {$r = (t-2)u$};

    \draw[thick, |-|] (8.5,1) -- (8.5,2) node[midway, anchor = west, xshift= 5pt] {$\frac{u}{2}$};
\end{tikzpicture}
\end{center}
In dark gray, we represented the Ferrers diagram $F.$
All the Ferrers diagrams in $\widehat{F}$ will cover that area.
In light gray, we represented the part that can change between two different elements of $\widehat{F}$.

Assume $F' \geq F.$ Then the corresponding path will go through $A_1, \ldots, A_{t-2}$ in $t-2$ steps.
The first part of the path consists of $t-2$ horizontal steps each of length $u$ covering a total of $r-u$ horizontal segments.
In the following steps, we can move by at most $\frac{u}{2}$ vertical segments, then we will always move by at least $\frac{u}{2}$ horizontal segments until we reach the right border.
It takes at most two more steps to reach the right border from $A_{t-2}.$ In total we will then need at most $t$ steps.

Let $T$ be the set of all the matrices $W$ that match with a Ferrers diagram in $\widehat{F},$ we will have 
\begin{equation}
    |C_{E_r}^{(u,r,t)}| \geq |T| = \sum_{F' \in \widehat{F}} q^{|F'|}.
\end{equation}
Finally, we can apply Lemma \ref{lm:FerrersSpecialCount} to $\widehat{F}$ to obtain
\begin{equation}
    |C_{E_r}^{(u,r,t)}| \geq
    \begin{bsmallmatrix} 
        u + \frac{u}{2} \\    
        \frac{u}{2}
    \end{bsmallmatrix}_q
    q^{ur - \frac{u^2}{2}}. 
\end{equation}
\end{proof}
To help the reader, we provide some examples of Ferrers diagrams, their relative matrix conversion, and division into blocks $Z^{(0)}, \ldots, Z^{(t-1)}.$
\begin{example}\label{ex:SSplit}
    Let $u = 4$ and $r = 8$. 
    The Ferrers diagram below can be converted into a matrix $W$ with the following structure
    \begin{equation}\label{eq:SFromFerrerExample}
        \begin{psmallmatrix}
        \oo & \oo & \oo & \oo & 0   & 0   & 0   & 0   \\
        \oo & \oo & \oo & \oo & 0   & 0   & 0   & 0   \\
        \oo & \oo & \oo & \oo & \oo & \oo & \oo & \oo
        \end{psmallmatrix}
        \quad
        \rightarrow
        \quad
        W =
        \begin{psmallmatrix}
		1   & 0   & 0   & 0   & 0   & 0   & 0   & 0   \\
		0   & 1   & 0   & 0   & 0   & 0   & 0   & 0   \\
		0   & 0   & 1   & 0   & 0   & 0   & 0   & 0   \\
		0   & 0   & 0   & 1   & 0   & 0   & 0   & 0   \\
		  \\ \hline \\
        \oo & \oo & \oo & \oo & 0   & 0   & 0   & 0   \\
        \oo & \oo & \oo & \oo & 0   & 0   & 0   & 0   \\
		0   & 0   & 0   & 0   & 1   & 0   & 0   & 0   \\
		0   & 0   & 0   & 0   & 0   & 1   & 0   & 0   \\
		\\ \hline \\
		0   & 0   & 0   & 0   & 0   & 0   & 1   & 0   \\
		0   & 0   & 0   & 0   & 0   & 0   & 0   & 1   \\
		\\ \hline \\
        \oo & \oo & \oo & \oo & \oo & \oo & \oo & \oo \\
		\oo & \oo & \oo & \oo & \oo & \oo & \oo & \oo
        \end{psmallmatrix}.
    \end{equation}
    We divided the resulting matrix in blocks.
    These blocks corresponding to $Z^{(0)}, \ldots, Z^{(t-1)}$ while, if we remove the free rows, the blocks correspond to the matrices $G^{(0)}, \ldots, G^{(t-1)}.$

    An extreme example is provided by the following Ferrers diagram
    \begin{equation}\label{eq:SFromFerrerExample2}
        \begin{psmallmatrix}
        \oo & \oo & \oo & \oo & \oo & \oo & \oo & \oo  \\
        \oo & \oo & \oo & \oo & \oo & \oo & \oo & \oo  \\
        \oo & \oo & \oo & \oo & \oo & \oo & \oo & \oo
        \end{psmallmatrix}
        \quad
        \rightarrow
        \quad
        W =
        \begin{psmallmatrix}
		1   & 0   & 0   & 0   & 0   & 0   & 0   & 0   \\
		0   & 1   & 0   & 0   & 0   & 0   & 0   & 0   \\
		0   & 0   & 1   & 0   & 0   & 0   & 0   & 0   \\
		0   & 0   & 0   & 1   & 0   & 0   & 0   & 0   \\
		\\ \hline \\
		0   & 0   & 0   & 0   & 1   & 0   & 0   & 0   \\
		0   & 0   & 0   & 0   & 0   & 1   & 0   & 0   \\		
		0   & 0   & 0   & 0   & 0   & 0   & 1   & 0   \\
		0   & 0   & 0   & 0   & 0   & 0   & 0   & 1   \\
		\\ \hline \\
        \oo & \oo & \oo & \oo & \oo & \oo & \oo & \oo \\
		\oo & \oo & \oo & \oo & \oo & \oo & \oo & \oo \\
		\oo & \oo & \oo & \oo & \oo & \oo & \oo & \oo \\
		\oo & \oo & \oo & \oo & \oo & \oo & \oo & \oo
        \end{psmallmatrix}.
    \end{equation}
    In this case  $G^{(0)}$ and $G^{(1)}$  coincide with $Z^{(0)}$ and $Z^{(1)},$ most important, the rows in the first two blocks already generates $\Fq^r$, while in (\ref{eq:SFromFerrerExample}) we needed to consider the first $3$ blocks to span $\Fq^r.$

    An opposite example can be given by the following
    \begin{equation}\label{eq:SFromFerrerExample3}
        \begin{psmallmatrix}
        \oo & 0   & 0   & 0   & 0   & 0   & 0   & 0   \\
        \oo & 0   & 0   & 0   & 0   & 0   & 0   & 0   \\
        \oo & \oo & \oo & \oo & 0   & 0   & 0   & 0
        \end{psmallmatrix}
        \quad
        \rightarrow
        \quad
        W =
        \begin{psmallmatrix}
		  1   & 0   & 0   & 0   & 0   & 0   & 0   & 0   \\
		\oo & 0   & 0   & 0   & 0   & 0   & 0   & 0   \\
		\oo & 0   & 0   & 0   & 0   & 0   & 0   & 0   \\
		0   & 1   & 0   & 0   & 0   & 0   & 0   & 0   \\
		\\ \hline \\
		0   & 0   & 1   & 0   & 0   & 0   & 0   & 0   \\
		0   & 0   & 0   & 1   & 0   & 0   & 0   & 0   \\
		\\ \hline \\
		\oo & \oo & \oo & \oo & 0   & 0   & 0   & 0	  \\
		0   & 0   & 0   & 0   & 1   & 0   & 0   & 0   \\
		\\ \hline \\
		0   & 0   & 0   & 0   & 0   & 1   & 0   & 0   \\
		\\ \hline \\		
		0   & 0   & 0   & 0   & 0   & 0   & 1   & 0   \\
		\\ \hline \\
		0   & 0   & 0   & 0   & 0   & 0   & 0   & 1   \\
		\\ \hline \\
        \oo & \oo & \oo & \oo & \oo & \oo & \oo & \oo 
        \end{psmallmatrix}.
    \end{equation}
    In this case we need to consider $Z^{(0)}, \ldots, Z^{(5)}$ to span $\Fq^r.$
    This means that 
    $$\Fq^r = \sum_{i= 0}^5\RowSpan(ZA^i),$$ which is close to the theoretical upper bound of $r-1 = 7.$
\end{example}

The first thing to notice from the Ferrers diagrams in Example \ref{ex:SSplit} is that, if $(Z,A)$ form a matrix that match a Ferrers diagram with an high weight (i.e. a big number of $\oo$), then after few iterations $Z, ZA, \ldots, ZA^j$ will span the whole space $\Fq^r.$
The second thing to notice is that, when counted with multiplicities, these matrices are disproportionately more than the matrices with a low weight Ferrers diagram.

Starting form the first row, each free row represents an interruption to the expansion.
If an interruption happens early, all the following steps will be shorter by one, so it will take more steps to reach $\Fq^r.$
For example, if in $Z$ there are only $u-1$ out of $u$ rows which are linearly independent, we will need at least $\frac{r}{u-1}$ steps to cover $\Fq^r$ instead of $\frac{r}{u}$ as we could expect from a full rank matrix.

\begin{example}\label{ex:Line}
  We illustrate the line procedure we used in the proof of Theorem \ref{th:|Gammak^t|} on the Ferrers diagrams we introduced in the previous example.
\begin{table}[H]
  \centering
  \begin{tabular}{cc}
		\adjustbox{valign=c}{%
			\begin{tikzpicture}[scale=0.5]
				\foreach \i [count=\j] in {4,4,8,8}{
					\foreach \k in {1,...,\i}{
					\filldraw[fill=black!20] (\k,-\j) rectangle (\k+1,-\j+1);
					}
				}
				
				\draw[ultra thick, black] (1,0) |- (5,0) |- (7,-2) |- (9,-2) |- (9,-4);
				
				\filldraw[fill=black] (5,0) circle (5pt);
				\filldraw[fill=black] (7,-2) circle (5pt);
				\filldraw[fill=black] (9,-2) circle (5pt);
			\end{tikzpicture}%
        }
    &
    $ \to 
    \begin{psmallmatrix}
		1   & 0   & 0   & 0   & 0   & 0   & 0   & 0   \\
		0   & 1   & 0   & 0   & 0   & 0   & 0   & 0   \\
		0   & 0   & 1   & 0   & 0   & 0   & 0   & 0   \\
		0   & 0   & 0   & 1   & 0   & 0   & 0   & 0   \\
		  \\ \hline \\
        \oo & \oo & \oo & \oo & 0   & 0   & 0   & 0   \\
        \oo & \oo & \oo & \oo & 0   & 0   & 0   & 0   \\
		0   & 0   & 0   & 0   & 1   & 0   & 0   & 0   \\
		0   & 0   & 0   & 0   & 0   & 1   & 0   & 0   \\
		\\ \hline \\
		0   & 0   & 0   & 0   & 0   & 0   & 1   & 0   \\
		0   & 0   & 0   & 0   & 0   & 0   & 0   & 1   \\
		\\ \hline \\
        \oo & \oo & \oo & \oo & \oo & \oo & \oo & \oo \\
		\oo & \oo & \oo & \oo & \oo & \oo & \oo & \oo
\end{psmallmatrix}.$    
  \end{tabular}
  \caption*{An example illustrating (\ref{eq:SFromFerrerExample})}
\end{table}

\begin{table}[H]
  \centering
  \begin{tabular}{cc}
    \adjustbox{valign=c}{%
			\begin{tikzpicture}[scale=0.5]
				\foreach \i [count=\j] in {8,8,8,8}{
					\foreach \k in {1,...,\i}{
					\filldraw[fill=black!20] (\k,-\j) rectangle (\k+1,-\j+1);
					}
				}
				
				\draw[ultra thick, black] (1,0) |- (5,0) |- (9,0) |- (9,-4);
				
				\filldraw[fill=black] (5,0) circle (5pt);
				\filldraw[fill=black] (9,0) circle (5pt);
			\end{tikzpicture}%
        }
    &
    $ \to 
    \begin{psmallmatrix}
		1   & 0   & 0   & 0   & 0   & 0   & 0   & 0   \\
		0   & 1   & 0   & 0   & 0   & 0   & 0   & 0   \\
		0   & 0   & 1   & 0   & 0   & 0   & 0   & 0   \\
		0   & 0   & 0   & 1   & 0   & 0   & 0   & 0   \\
		\\ \hline \\
		0   & 0   & 0   & 0   & 1   & 0   & 0   & 0   \\
		0   & 0   & 0   & 0   & 0   & 1   & 0   & 0   \\		
		0   & 0   & 0   & 0   & 0   & 0   & 1   & 0   \\
		0   & 0   & 0   & 0   & 0   & 0   & 0   & 1   \\
		\\ \hline \\
        \oo & \oo & \oo & \oo & \oo & \oo & \oo & \oo \\
		\oo & \oo & \oo & \oo & \oo & \oo & \oo & \oo \\
		\oo & \oo & \oo & \oo & \oo & \oo & \oo & \oo \\
		\oo & \oo & \oo & \oo & \oo & \oo & \oo & \oo
\end{psmallmatrix}.$    
  \end{tabular}
  \caption*{An example illustrating (\ref{eq:SFromFerrerExample2})}
\end{table}

\begin{table}[H]
  \centering
  \begin{tabular}{cc}
    \adjustbox{valign=c}{%
      \begin{tikzpicture}[scale=0.5]
        \foreach \i [count=\j] in {1,1,4,8}{
          \foreach \k in {1,...,\i}{
            \filldraw[fill=black!20] (\k,-\j) rectangle (\k+1,-\j+1);
          }
        }
        \draw[ultra thick, black] (1,0) |- (2,0) |- (5,-2) |- (9,-3) |- (9,-4);
        \foreach \x/\y in {3/-2, 5/-2, 6/-3, 7/-3, 8/-3, 9/-3}{
          \filldraw[fill=black] (\x,\y) circle (5pt);
        }
      \end{tikzpicture}%
    }
    &
    $ \to 
    \begin{psmallmatrix}
		1   & 0   & 0   & 0   & 0   & 0   & 0   & 0   \\
		\oo & 0   & 0   & 0   & 0   & 0   & 0   & 0   \\
		\oo & 0   & 0   & 0   & 0   & 0   & 0   & 0   \\
		0   & 1   & 0   & 0   & 0   & 0   & 0   & 0   \\
		\\ \hline \\
		0   & 0   & 1   & 0   & 0   & 0   & 0   & 0   \\
		0   & 0   & 0   & 1   & 0   & 0   & 0   & 0   \\
		\\ \hline \\
		\oo & \oo & \oo & \oo & 0   & 0   & 0   & 0	  \\
		0   & 0   & 0   & 0   & 1   & 0   & 0   & 0   \\
		\\ \hline \\
		0   & 0   & 0   & 0   & 0   & 1   & 0   & 0   \\
		\\ \hline \\		
		0   & 0   & 0   & 0   & 0   & 0   & 1   & 0   \\
		\\ \hline \\
		0   & 0   & 0   & 0   & 0   & 0   & 0   & 1   \\
		\\ \hline \\
        \oo & \oo & \oo & \oo & \oo & \oo & \oo & \oo 
    \end{psmallmatrix}$    
  \end{tabular}
  \caption*{An example illustrating (\ref{eq:SFromFerrerExample3})}
\end{table}
\end{example}
\FloatBarrier
We provide another example to clarify the last part of the proof of Theorem  \ref{th:|Gammak^t|}.
 \begin{example} \label{ex:48}
     Let  $u=4$ and $r=8$, consider the following set of examples 
\begin{equation}\label{eq:GoodSetFerrers}
    \begin{psmallmatrix}
        \oo & \oo & \oo & \oo & \og & \og & \og & \og \\
        \oo & \oo & \oo & \oo & \og & \og & \og & \og \\
        \oo & \oo & \oo & \oo & \oo & \oo & \oo & \oo
    \end{psmallmatrix},
\end{equation}
where $\og$ can either be a $0$ or $\oo.$

Consider $S$ a matrix with an associated Ferrers diagram matching this pattern.
We claim that $\Fq^r = \sum_{i=0}^2 \RowSpan(Z^{(i)}) = \Omega_3(Z,A).$

To see that we can consider the associated grid
\begin{center}
    \begin{tikzpicture}[scale=0.5]
  \foreach \i in {1,2,...,8}{
    \foreach \j in {-4,-3,-2,-1}{
      \filldraw[fill=black!10] (\i,\j) rectangle (\i+1,\j+1);
    }
  }

  \foreach \i [count=\j] in {4,4,8,8}{
   \foreach \k in {1,...,\i}{
      \filldraw[fill=black!50, thin] (\k,-\j) rectangle (\k+1,-\j+1);
    }
  }

  \draw[step=1,black] (1,-3) grid (9,0);

  \draw[line width=2.5pt, black] (1,0) |- (5,0) |- (7,-2) |- (9,-2) |- (9,-4);

  \filldraw[fill=black] (5,0) circle (6pt);
  \draw[black] (5, 0) node [anchor = south] {$A$};
  
\end{tikzpicture}
.
\end{center}
Any path that stays above the path we draw will be such that the first marked point is $A.$
In the subsequent steps we can move by at most two vertical segments, then we will always move right \cn by at least two horizontal segments until we reach the right edge.
Since we already covered $4$ horizontal segments in the first step, with two more steps of length $\min\{2, 8-c\}$ where $c$ is the number of horizontal steps we did so far, we will reach the the right edge in at most $3$ steps.
Notice that $3 = \frac{r}{u} + 1.$

Let us call $T$ the set of all the matrices $W$ \cn having a Ferrers diagram that match one of the possible Ferrers diagram in (\ref{eq:GoodSetFerrers}).
The possible different paths we can consider correspond to all the possible Ferrers diagram we can draw inside the top-left rectangle. 
This can be counted as all the possible $2 \times 4$ Ferrers diagrams.
We can count how many matrices are in the set $T$ applying Lemma \ref{lm:FerrersSpecialCount} with parameters $s=2, u=t=4, k=8$  obtaining \cn
\begin{equation}\label{eq:|T|example}
    |T| = q^{24} \begin{bsmallmatrix}
        6 \\ 2
    \end{bsmallmatrix}_q.
\end{equation}
If we compare it to the number of all the possible matrices $W$ that gives us output $E_r$ we obtain
$$
    \frac{|T|}{|C_{E_8}^{(4,8)}|} 
    =  \frac{\begin{bsmallmatrix}
        6 \\ 2
    \end{bsmallmatrix}_q}{\begin{bsmallmatrix}
        11 \\ 3
    \end{bsmallmatrix}_q}
    q^{16}
    =
    \frac{H_q(6)H_q(8)H_q(3)}{H_q(4)H_q(2)H_q(11)}.
$$
Which is lower bounded by $\frac{H(6)}{H(2)} \approx 1 - \frac{q^{-2}}{q-1}.$
 \end{example}
This example shows us how to prove the probability in the following corollary.
\begin{corollary}\label{prop:Prob_t}
    Let $Z \in \Fq^{u \times r}$ and $A \in \Fq^{r\times r}$ be two matrices uniformly independently chosen in their respective domains.
    For $t \geq \lceil \frac{r}{u} \rceil + 1,$ the probability
    $$
        {\rm Prob}\left(\Omega_{t}(Z,A) = \Fq^r \mid \Omega(Z,A) = \Fq^r\right ) \geq 1 - \frac{q^{-u/2}}{q-1}. 
    $$
\end{corollary}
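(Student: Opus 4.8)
The plan is to rewrite the conditional probability as a ratio of the two cardinalities counted in Section \ref{sec:ExpRowSub}, and then estimate that ratio using the formulas already established. Because $\Omega_1(Z,A)\subseteq\Omega_2(Z,A)\subseteq\cdots\subseteq\Omega(Z,A)$ (Lemma \ref{lm:calV_r-1}), the event $\Omega_t(Z,A)=\Fq^r$ already forces $\Omega(Z,A)=\Fq^r$, so the joint event is simply $\{\Omega_t(Z,A)=\Fq^r\}$ and
\[
{\rm Prob}\!\left(\Omega_t(Z,A)=\Fq^r\mid\Omega(Z,A)=\Fq^r\right)=\frac{|C_r^{(u,r,t)}|}{|C_r^{(u,r)}|}.
\]
Moreover $C_r^{(u,r,t)}\subseteq C_r^{(u,r,t+1)}$ since $\Omega_t\subseteq\Omega_{t+1}$, so it suffices to prove the bound for $t=\lceil r/u\rceil+1$. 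By Proposition \ref{cor:AqGammak} the common factor $A_q(r,r)$ cancels from numerator and denominator, leaving the ratio $|C_{E_r}^{(u,r,t)}|/|C_{E_r}^{(u,r)}|$, where $E_r=I_r$.

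Next I would substitute the two ingredients: Proposition \ref{prop:|Gammak|} gives the exact value $|C_{E_r}^{(u,r)}|=\gbinom{r+u-1}{u-1}_q\,q^{r}$ (the exponent $r(r-k)+k$ collapses to $r$ when $k=r$), and Theorem \ref{th:|Gammak^t|} gives the lower bound $|C_{E_r}^{(u,r,t)}|\geq\gbinom{u+u/2}{u/2}_q\,q^{ur-u^2/2}$. Writing each Gaussian binomial as $A_q(a,b)/A_q(b,b)$ and applying \eqref{eq:Aq->Hq} to every $A_q$-term, the powers of $q$ cancel identically — one checks $q^{u^2/2}\cdot q^{ur-u^2/2}/(q^{r(u-1)}\cdot q^{r})=q^{ur}/q^{ur}=1$ — and the ratio reduces to a pure product of $H_q$-values:
\[
\frac{|C_{E_r}^{(u,r,t)}|}{|C_{E_r}^{(u,r)}|}\ \geq\ \frac{H_q(3u/2)}{H_q(u)}\cdot\frac{H_q(u-1)}{H_q(u/2)}\cdot\frac{H_q(r)}{H_q(r+u-1)}.
\]
The $u=4,\,r=8$ instance of exactly this computation is the one carried out in Example \ref{ex:48}.

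It then remains to bound each factor from below. The last factor is $\geq 1$ since $H_q$ is strictly decreasing. For the first two I would use $H_q(b)/H_q(a)=\prod_{i=a+1}^{b}(1-q^{-i})$ for $a\le b$, which turns their product into $\prod_{i\in S}(1-q^{-i})$ over the index set $S=\{u/2+1,\dots,u-1\}\cup\{u+1,\dots,3u/2\}$; since every factor lies in $(0,1)$, adjoining the single missing index $i=u$ only decreases the product, so it is at least $\prod_{i=u/2+1}^{3u/2}(1-q^{-i})\geq 1-\sum_{i=u/2+1}^{\infty}q^{-i}=1-\tfrac{q^{-u/2}}{q-1}$, using the Weierstrass inequality $\prod(1-x_i)\geq 1-\sum x_i$ and summing the geometric tail. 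Multiplying the three lower bounds yields the claimed estimate.

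The combinatorial heart of the argument — exhibiting a family of Ferrers diagrams large enough (when counted with multiplicities) that the associated matrices $W$ generate $\Fq^r$ within $\lceil r/u\rceil+1$ expansion steps — has already been done in Theorem \ref{th:|Gammak^t|}, so no genuine obstacle remains here; this corollary is an algebraic reduction on top of it. The only points requiring care are bookkeeping ones: the parity of $u$ (for odd $u$ the quantity $u/2$ should be read as $\lfloor u/2\rfloor$, which only strengthens the geometric-series estimate), and the case $u\nmid r$, which is absorbed by the monotonicity of $C_r^{(u,r,t)}$ in $t$ together with the fact that Theorem \ref{th:|Gammak^t|} already covers $t=\lceil r/u\rceil+1$ by treating the divisible case as the worst scenario.
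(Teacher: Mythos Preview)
Your proposal is correct and follows essentially the same route as the paper: rewrite the conditional probability as $|C_{E_r}^{(u,r,t)}|/|C_{E_r}^{(u,r)}|$, plug in Theorem \ref{th:|Gammak^t|} and Proposition \ref{prop:|Gammak|}, convert the Gaussian binomials to $H_q$-values via \eqref{eq:Aq->Hq}, and bound the resulting product by $\prod_{i=u/2+1}^{3u/2}(1-q^{-i})\geq 1-\tfrac{q^{-u/2}}{q-1}$. Your grouping of the $H_q$-factors is slightly different (the paper simply drops the factor $\tfrac{H_q(u-1)H_q(r)}{H_q(u)H_q(r+u-1)}\geq 1$ in one step, whereas you split into three factors and reinsert the missing index $i=u$), but it lands at the same product; you are also more explicit than the paper about the monotonicity reduction to $t=\lceil r/u\rceil+1$ and the odd-$u$/$u\nmid r$ edge cases.
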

\begin{proof}
    Observe that $C_r^{(u,r,t)} \subseteq C_r^{(u,r)}$ we have 
    \begin{equation}\label{eq:P(Omega^t)}
        P(\Omega_{t}(Z,A) = \Fq^r \mid \Omega(Z,A) = \Fq^r) = \frac{|C_r^{(u,r,t)}|}{|C_r^{(u,r)}|} = \frac{|C_{E_r}^{(u,r,t)}|}{|C_{E_r}^{(u,r)}|}. 
    \end{equation}   
    Theorem  \ref{th:|Gammak^t|} provides us with the lower bound
    \begin{equation}
    |C_{E_r}^{(u,r,t)}| \geq
    \begin{bsmallmatrix} 
        u + \frac{u}{2} \\    
        \frac{u}{2}
    \end{bsmallmatrix}_q
    q^{ur - \frac{u^2}{2}}. 
    \end{equation}
    Substituting it in (\ref{eq:P(Omega^t)}) we get
    \begin{align}
     \frac{|C_{E_r}^{(u,r,t)}|}{|C_{E_r}^{(u,r)}|} 
    & \geq 
    \frac{\begin{bsmallmatrix}
        u + \frac{u}{2} \\    
        \frac{u}{2}
    \end{bsmallmatrix}_q}
    {\begin{bsmallmatrix}
        r+u-1 \\ u-1
    \end{bsmallmatrix}_q}
    q^{(u-1)r - \frac{u^2}{2}} \notag \\
    & = 
    \frac
    {H_q(u + \frac{u}{2})H_q(u-1)H_q(r)}
    {H_q(\frac{u}{2})H_q(u)H_q(r+u-1)},
    \end{align}
    which can be further lower bounded as
    \begin{equation}
    \frac{|C_{E_r}^{(u,r,t)}|}{|C_{E_r}^{(u,r)}|}
    \geq
    \frac{H_q(u + \frac{u}{2})}{H_q(\frac{u}{2})} = \prod_{i = \frac{u}{2} + 1}^u (1- q^{-i}) \approx 1 - \frac{q^{-\frac{u}{2}}}{q-1}.
\end{equation}
\end{proof}

\section{Conclusion} \label{sec:conclusion}
This paper presents a novel family of bounded-degree low-rank parity-check (BD-LRPC) codes, which exhibits remarkable error correction capabilities. The proposed BD-LRPC codes are constructed by leveraging the inherent properties of the subspace $\calV_{\alpha,d}=\Span{\Fq}{1,\alpha,\dots,\alpha^{d-1}}$, from which the decoding radius can approach the Singleton bound.

Our contributions in this paper are twofold. First, we introduce this novel family of BD-LRPC codes, demonstrating their superior error correction capability compared to LRPC codes and their ability to approach the Singleton bound.
Second, we formulate an intriguing counting problem (Problem \ref{Pb:C_k}) that arises from the study of these codes.
We provide a sophisticated approach to tackling this problem using Ferrers diagrams for the cases that $t=r$ and $t=\lceil \frac{r}{u} \rceil+1$. The counting problem seems to have its own interest in combinatorics and may have applications in other domains.
We cordially invite interested readers to find a tighter bound or an exact solution for $\lceil \frac{r}{u} \rceil \leq t < r.$

\section{ Appendix: Ferrers Diagrams}

 Let $\bg_1, \ldots, \bg_k$ be a base of $\Omega \subseteq \Fq^r$. 
 Consider $G \in \Fq^{k \times r}$ to be the matrix having for rows $\bg_1, \ldots, \bg_k$.
 If $A \in GL_q(k)$ is an invertible matrix, the rows of $AG$ are generating the same subspace $\Omega.$
An alternative way to count the subspaces of dimension $k$ in $\Fq^n,$ is to count the number of full rank matrices in $\Fq^{n \times k}$ in column reduced echelon form \cite{knuth71}\cn.
Recall that a matrix $A \in \Fq^{n \times k}$ of rank $k$ is in column reduced echelon form if:
\begin{itemize}
    \item The topmost nonzero entry of each column is a $1$ (called a leading one).
    \item The other entries of a row where there is a leading one are all zeroes.
    \item If we denote by $n_i$ the position of the leading one of the $i$-th column, then $n_1 < n_2 < \cdots < n_k \leq n.$
\end{itemize}
As an example the following matrix is in column reduced echelon form
$$
    A = 
    \begin{psmallmatrix}
        0    & 0   & 0   \\
		1    & 0   & 0   \\ 
        0    & 1   & 0   \\
        \oo  & \oo & 0   \\
        \oo  & \oo & 0   \\
        0    & 0   & 1   \\
        \oo  & \oo & \oo 		
    \end{psmallmatrix},
$$ where each $\bullet$ represents a free value in $\Fq$. 
The number of matrices following the same pattern of $A \in \Fq^{7 \times 3}$ as above is $q^7$ where $7$ is the number of $\bullet$ in $A$.

Let us now consider the alternative counting. Let $\calV \subseteq \Fq^n$ be a subspace of dimension $k$ and let $G \in \Fq^{n \times k}$ be a matrix such that its columns form a basis of $\calV.$
The  column reduced echelon form of $G$ is a basis of $\calV$ and it is unique.
All vector spaces of dimension $k$  correspond to a unique generator in 
column reduced echelon form.

Removing all the rows containing a leading one from a reduced column echelon form matrix in $\Fq^{n \times k}$ gives a Ferrers diagram.
If we consider only matrices of full rank $k,$ there will be exactly $k$ leading ones, then the resulting Ferrers diagram will resemble to an $(n-k) \times k$ matrix.
In our example, the Ferrers diagram associated to $A$ is the $4 \times 3$ diagram
\begin{equation}\label{eq:FerrersExample}
    F=
    \begin{psmallmatrix}
        0    & 0   & 0   \\
        \oo  & \oo & 0   \\
        \oo  & \oo & 0   \\
        \oo  & \oo & \oo 		
    \end{psmallmatrix}.
\end{equation}   
Each $n \times k$ Ferrers diagram $F$ has a natural representation as a non-increasing sequence $[f_1, \ldots, f_k]$ of length $k$ of integers $0 \leq f_i \leq n.$
The sequence representation of (\ref{eq:FerrersExample}) is $[3,3,1].$

Consider an $n \times k$ Ferrers diagram $F$ as a matrix.
Denote by $a_{i,j}$ the element at the intersection of the $i$-th row with the $j$-th column.
Whenever $i \neq n$ and $j \neq 1,$ if $a_{i,j} = \bullet,$ then $a_{i+1, j} = \bullet$ and $a_{i,j-1} = \bullet.$ 
If $F$ is a $n \times k$ Ferrers diagram, its weight $|F|$ correspond to the number of free variables $\bullet.$

It is easy to see that, for each $(n-k) \times k$ Ferrers diagram $F,$ there are exactly $q^{|F|}$ possible column reduced echelon matrices $A \in \Fq^{k \times r}$ that have $F$ as the associated Ferrers diagram.
Let $\calF_{n-k,k}$ be the set of all $(n-k) \times k$ Ferrers diagrams. Knuth \cite{knuth71} showed the following relation between Ferrers diagram and the Gaussian binomial coefficients:
\begin{equation}\label{eq:ferrers_count_kxr-k}
    \sum_{F \in \calF_{n-k,k}} q^{|F|} = \gbinom{n}{k}_q = A_q(n,k)/A_q(k,k)
\end{equation} since each column reduced echelon matrix corresponds to a $k$-dimensional subspace in $\Fq^n$.
We will see that this formula is important to attack Problem \ref{Pb:C_k} in Section \ref{sec:ExpRowSub}.

Considering the set $\calF_{n,k}$ of $n \times k$ Ferrers diagrams as non-decreasing sequences, it is natural to define a partial order relation over this set.
Let $F = [f_1, \ldots, f_k] \in \calF_{n,k}$ and $F' = [f'_1, \ldots, f'_k] \in \calF_{n,k}$ be two Ferrers diagrams, we define the partial order relation $F \leq F'$ if $f_i \leq f'_i$ for $1\leq i\leq k$.
We will denote by $\widehat{F} = \{ F' \in  \calF_{n,k} \mid F \leq F'\},$ the set of all the Ferrers diagrams bigger than $F.$  We can combine \eqref{eq:ferrers_count_kxr-k} and the partial order on Ferrers diagrams to determine the number of a certain class of matrices.

\begin{lemma}\label{lm:FerrersSpecialCount}
    Let $F = [f_1,\ldots,f_k]$ be a $u \times k$ Ferrers diagram such that $f_i = u$ for $i \leq k-t$ and $f_i = u-s$ otherwise.
    Then we have 
    \begin{equation}\label{eq:ferrers_count_kxr}
        \sum_{F' \in \widehat{F}} q^{|F'|} = 
        \gbinom{s+t}{s}_q q^{uk - st}
    \end{equation}
    where $\widehat{F} = \{F'  \in \calF_{u,k} \mid F \leq F'\}.$
\end{lemma}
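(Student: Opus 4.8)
The plan is to give an explicit bijective description of the set $\widehat F$ and then reduce the weighted count to a single application of the Knuth identity \eqref{eq:ferrers_count_kxr-k}. Throughout I use that a $u\times k$ Ferrers diagram $F'=[f'_1,\dots,f'_k]$ satisfies $u\ge f'_1\ge\cdots\ge f'_k\ge 0$ with weight $|F'|=\sum_{i=1}^{k}f'_i$, and the implicit constraints $0\le s\le u$ and $0\le t\le k$ that are needed for $F$ itself to be a valid $u\times k$ diagram.

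First I would unwind the partial order. By definition $F'\in\widehat F$ iff $f'_i\ge f_i$ for all $i$. For $i\le k-t$ we have $f_i=u$, and since also $f'_i\le u$, this forces $f'_i=u$; these $k-t$ leftmost columns are thus fixed and account for $u(k-t)$ of the weight. For $i>k-t$ we have $f_i=u-s$, so the trailing values $g_1:=f'_{k-t+1},\dots,g_t:=f'_k$ may be any integers with $u\ge g_1\ge\cdots\ge g_t\ge u-s$, the upper bound coming from $g_1\le f'_{k-t}=u$. Hence $\widehat F$ is in bijection with the set of such non-increasing sequences $(g_1,\dots,g_t)$.

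Next I would apply the shift $h_j:=g_j-(u-s)$ for $1\le j\le t$. This is a bijection from the admissible sequences $(g_1,\dots,g_t)$ onto the non-increasing sequences $s\ge h_1\ge\cdots\ge h_t\ge 0$, i.e.\ onto $\calF_{s,t}$, the set of $s\times t$ Ferrers diagrams. Under it the weight becomes $|F'|=u(k-t)+\sum_{j=1}^{t}g_j=u(k-t)+t(u-s)+\sum_{j=1}^{t}h_j=uk-st+|H|$, where $H=[h_1,\dots,h_t]$. Therefore
\[
\sum_{F'\in\widehat F} q^{|F'|} = q^{uk-st}\sum_{H\in\calF_{s,t}} q^{|H|}.
\]
Finally, invoking \eqref{eq:ferrers_count_kxr-k} with the $s\times t$ box, i.e.\ $\sum_{H\in\calF_{s,t}} q^{|H|}=\gbinom{s+t}{t}_q=\gbinom{s+t}{s}_q$, yields the claimed formula $\sum_{F'\in\widehat F} q^{|F'|}=\gbinom{s+t}{s}_q\,q^{uk-st}$.

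I do not anticipate a genuine obstacle: the argument is a two-step bijection followed by one known generating-function identity. The only care required is the bookkeeping of the weight under the shift and a sanity check of the degenerate cases $t=0$ and $s\in\{0,u\}$, where $\widehat F$ is a singleton or $\calF_{s,t}$ is an empty or full box and both sides collapse to $q^{uk}$ (respectively the stated value) — all routine.
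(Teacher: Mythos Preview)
Your proof is correct and essentially the same as the paper's: both decompose $F'=F+L$ (your shift $h_j=g_j-(u-s)$ is exactly the componentwise difference $L$ restricted to the last $t$ columns), observe that $L$ corresponds bijectively to an $s\times t$ Ferrers diagram, and then apply Knuth's identity \eqref{eq:ferrers_count_kxr-k}. If anything, your version is slightly more explicit about verifying the bijection and the weight bookkeeping than the paper's sketch.
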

\begin{proof}
    For $F' \in \hat{F},$ denote $F' = F + L$ where the addition between $F$ and $L$ is the component wise sum.
    \begin{center}
    	\begin{tikzpicture}[scale=0.75]
    	\filldraw[fill=black!30, draw=none] (0,0) rectangle (8,2);
    	
    	\filldraw[fill=black!20, draw=none] (6,1) rectangle (8,2);
    	
    	\draw[ultra thick, black!60] (0,2) |- (6,2) |- (6,1) |- (8,1) |- (8,0);
    	
    	\draw[black] (3.5, 1.4) node [anchor = north] {$F$};
    	\draw[black] (7, 1.5) node [anchor = center] {$L_1$};
    	
    	\draw[thick, |-|] (-0.5,0) -- (-0.5,2) node[midway, anchor=east, xshift=-5pt] {$u$};
    	\draw[thick, |-|] (0,-0.5) -- (8,-0.5) node[midway, anchor = north, yshift= -5pt] {$k$};
    	\draw[thick, |-|] (8.5,1) -- (8.5,2) node[midway, anchor = west, xshift= 5pt] {$s$};
    	\draw[thick, |-|] (6,2.50) -- (8,2.5) node[midway, anchor = south, yshift= +5pt] {$t$};
    	\end{tikzpicture}
    \end{center}
    We write $L$ as the concatenation of two sequences $L_0|| L_1$, where $L_0$ has length $k-t$ and $L_1$ has length $t$.
    It is clear that $L_0 = [0, \ldots, 0]$ and $L_1$ is a non increasing sequence of integers upper bounded by $s$.
     Hence $L_1$ describes an $s \times t$ Ferrers diagram. From (\ref{eq:ferrers_count_kxr-k}) we see that
   $$
       \sum_{L_1 \in \calF_{s,t}} q^{|L_1|} = \gbinom{s+t}{t}_q.
   $$    
    Combining the fact $q^{|F|} = q^{uk -st}$, we obtain the desired result.
    \end{proof}

\bibliographystyle{abbrv}
\bibliography{bibliography.bib}

\begin{thebibliography}{10}

\bibitem{Aragon2019}
N.~Aragon, O.~Blazy, P.~Gaborit, A.~Hauteville, and G.~Zémor.
\newblock {\em Durandal: A Rank Metric Based Signature Scheme}, pages 728--758.
\newblock Springer, 04 2019.

\bibitem{LRPC-2019-TIT}
N.~{Aragon}, P.~{Gaborit}, A.~{Hauteville}, O.~{Ruatta}, and G.~{Z\'{e}smor}.
\newblock Low rank parity check codes: New decoding algorithms and applications
  to cryptography.
\newblock {\em {IEEE} Transactions on Information Theory}, 65(12):7697--7717,
  2019.

\bibitem{RankBook2}
H.~Bartz, L.~Holzbaur, Liu, S.~Puchinger, Renner, and A.~Wachter-Zeh.
\newblock {\em Rank-Metric Codes and Their Applications}, volume 19(3), pages
  390--546.
\newblock 2022.

\bibitem{Bennett07}
A.~Benjamin and C.~Bennett.
\newblock The probability of relatively prime polynomials.
\newblock {\em Mathematics Magazine}, 80:196--202, 06 2007.

\bibitem{Burle23}
E.~Burle and A.~Otmani.
\newblock {\em An Upper-Bound on the Decoding Failure Probability of the LRPC
  Decoder}, pages 3--16.
\newblock 12 2023.

\bibitem{Cooper00}
C.~Cooper.
\newblock On the distribution of rank of a random matrix over a finite field.
\newblock {\em Random Struct. Algorithms}, 17:197--212, 10 2000.

\bibitem{Yang2015}
Y.~Ding.
\newblock On list-decodability of random rank metric codes and subspace codes.
\newblock {\em {IEEE} Transactions on Information Theory}, 61(1):51--59, 2015.

\bibitem{FranchLi}
E.~Franch and C.~Li.
\newblock Two new algorithms for error support recovery of low rank parity
  check codes.
\newblock In {\em 2023 {IEEE} International Symposium on Information Theory
  ({ISIT})}, pages 2368--2373, 2023.

\bibitem{RankBook1}
E.~Gabidulin.
\newblock {\em Rank Codes}.
\newblock TUM.University Press, 2021.

\bibitem{Gabidulin1985}
E.~M. Gabidulin.
\newblock Theory of codes with maximum rank distance.
\newblock {\em Problemy Peredachi Informatsii}, 21(1):3--16, 1985.

\bibitem{GPT}
E.~M. Gabidulin, A.~V. Paramonov, and O.~V. Tretjakov.
\newblock Ideals over a non-commutative ring and their application in
  cryptology.
\newblock In D.~W. Davies, editor, {\em Advances in Cryptology --
  EUROCRYPT'91}, pages 482--489. Springer, 1991.

\bibitem{gaborit2017}
P.~Gaborit, A.~Hauteville, D.~H. Phan, and J.-P. Tillich.
\newblock Identity-based encryption from codes with rank metric.
\newblock In J.~Katz and H.~Shacham, editors, {\em Advances in Cryptology --
  CRYPTO 2017}, pages 194--224, Cham, 2017. Springer International Publishing.

\bibitem{RankSign2014}
P.~Gaborit, O.~Ruatta, J.~Schrek, and G.~Z{\'e}mor.
\newblock Ranksign: an efficient signature algorithm based on the rank metric.
\newblock In M.~Mosca, editor, {\em Post-Quantum Cryptography}, pages 88--107.
  Springer International Publishing, 2014.

\bibitem{gallager1962}
R.~{Gallager}.
\newblock Low-density parity-check codes.
\newblock {\em IRE Transactions on Information Theory}, 8(1):21--28, 1962.

\bibitem{goldman1970}
J.~Goldman and G.-C. Rota.
\newblock On the foundations of combinatorial theory. finite vector spaces,
  eulerian generating functions.
\newblock {\em Studies in Applied Mathematics}, 49(3):239--258, September 1970.

\bibitem{Guruswami-Wang-Xing2014}
V.~Guruswami, C.~Wang, and C.~P. Xing.
\newblock Explicit list-decodable rank-metric and subspace codes via subspace
  designs.
\newblock {\em {IEEE} Transactions on Information Theory}, 62(5):2707--2718,
  2016.

\bibitem{knuth71}
D.~E. Knuth.
\newblock Subspaces, subsets, and partitions.
\newblock {\em Journal of Combinatorial Theory, Series A}, 10(2):178--180,
  1971.

\bibitem{ROLLO}
C.~A. Melchor, N.~Aragon, M.~Bardet, S.~Bettaieb, L.~Bidoux, O.~Blazy, J.-C.
  Deneuville, P.~Gaborit, A.~Hauteville, A.~Otmani, O.~Ruatta, J.-P. Tillich,
  and G.~Zémor.
\newblock {ROLLO (merger of Rank-Ouroboros, LAKE and LOCKER)}.
\newblock In {\em Second round submission to the NIST post-quantum cryptography
  call}, April, 2020.

\bibitem{gaborit2013}
{Philippe Gaborit}, {Ga\'{e}tan Murat}, {Olivier Ruatta}, and {Gilles
  Z\'{e}mor}.
\newblock Low rank parity check codes and their application to cryptography. in
  proceedings of the workshop on coding and cryptography {WCC’2013 Bergen
  Norway} 2013. available on www.selmer.uib.no/wcc2013/pdfs/gaborit.pdf.

\bibitem{renner20}
J.~Renner, S.~Puchinger, A.~Wachter-Zeh, C.~Hollanti, and R.~Freij-Hollanti.
\newblock Low-rank parity-check codes over the ring of integers modulo a prime
  power.
\newblock pages 19--24, 06 2020.

\bibitem{roth1991maximum}
R.~M. Roth.
\newblock Maximum-rank array codes and their application to crisscross error
  correction.
\newblock {\em {{IEEE} Transactions} on Information Theory}, 37(2):328--336,
  1991.

\bibitem{SilvaKschischangKoetter}
D.~Silva, F.~R. Kschischang, and R.~Koetter.
\newblock A rank-metric approach to error control in random network coding.
\newblock {\em {IEEE} Transactions on Information Theory}, 54(9):3951--3967,
  Sept 2008.

\bibitem{Silva_2010}
D.~Silva, F.~R. Kschischang, and R.~Kotter.
\newblock Communication over finite-field matrix channels.
\newblock {\em {IEEE} Transactions on Information Theory}, 56(3):1296--1305,
  mar 2010.

\bibitem{WachterZeh}
A.~Wachter-Zeh.
\newblock Bounds on list decoding of rank-metric codes.
\newblock {\em {IEEE} Transactions on Information Theory}, 59(11):7268--7277,
  2013.

\bibitem{Xing-Yuan2018}
C.~P. Xing and C.~Yuan.
\newblock A new class of rank-metric codes and their list decoding beyond the
  unique decoding radius.
\newblock {\em {IEEE} Transactions on Information Theory}, 64(5):3394--3402,
  2018.

\end{thebibliography}

\end{document}